\documentclass[11pt]{article}
\usepackage{amsmath}
\usepackage{amssymb}


\usepackage{scicite}

\usepackage{subfigure}
\usepackage{times}
\usepackage{epsfig}
\usepackage{url}

\newtheorem{definition}{Definition}[section]
\newtheorem{lemma}{Lemma}[section]
\newtheorem{theorem}{Theorem}[section]

\newtheorem{proof}{Proof}
\newcommand{\Var}{\textrm{Var}}


\topmargin 0.0cm \oddsidemargin 0.2cm \textwidth 16cm \textheight
21cm \footskip 1.0cm


\newenvironment{sciabstract}{%
\begin{quote} \bf}
{\end{quote}}



\newcounter{lastnote}

\begin{document}


\title{ Provable Security of Networks
\footnote{State Key Laboratory of Computer Science, Institute of Software,
Chinese Academy of Sciences, P. O. Box 8718, Beijing, 100190, P. R.
China.  Email: \{angsheng, yicheng, zhangw\}@ios.ac.cn.
Correspondence: \{angsheng, yicheng\}@ios.ac.cn.
\newline
Angsheng Li is partially supported by the Hundred-Talent Program of
the Chinese Academy of Sciences. All authors are supported by the
Grand Project ``Network Algorithms and Digital Information" of the
Institute of software, Chinese Academy of Sciences, and NSFC grant
No. 61161130530.}}

\author{ Angsheng Li$^{1}$,  Yicheng Pan$^{1,3 }$, Wei Zhang$^{1,2}$ \\
\normalsize{$^{1}$State Key Laboratory of Computer Science}\\
\normalsize{ Institute of Software, Chinese Academy of Sciences}\\
\normalsize{$^{2}$University of Chinese Academy of Sciences,
P. R. China}\\
\normalsize{$^{3}$State Key Laboratory of Information Security}\\
\normalsize{ Institute of Information Engineering, Chinese Academy of Sciences,
P. R. China} }


\date{}



\baselineskip24pt

\maketitle

\begin{sciabstract}

We propose a {\it security hypothesis} that a network is {\it
secure}, if any deliberate attacks of a small number of nodes will
never generate a global failure of the network, and a {\it
robustness hypothesis} that a network is {\it robust}, if a small
number of random errors will never generate a global failure of the
network. Based on these hypotheses, we propose a definition of {\it
security} and a definition of {\it robustness} of networks against
the cascading failure models of deliberate attacks and random errors
respectively, and investigate the principles of the security and
robustness of networks. We propose a {\it security model} such that
networks constructed by the model are provably secure against any
attacks of small sizes under the cascading failure models, and
simultaneously follow a power law, and have the small world property
with a navigating algorithm of time complex $O(\log n)$. It is shown
that for any network $G$ constructed from the security model, $G$
satisfies some remarkable topological properties, including: (i) the
{\it small community phenomenon}, that is, $G$ is rich in
communities of the form $X$ of size poly logarithmic in $\log n$
with conductance bounded by $O(\frac{1}{|X|^{\beta}})$ for some
constant $\beta$, (ii) small diameter property, with diameter
$O(\log n)$ allowing a navigation by a $O(\log n)$ time algorithm to
find a path for arbitrarily given two nodes, and (iii) power law
distribution, and satisfies some probabilistic and combinatorial
principles, including the {\it degree priority theorem}, and {\it
infection-inclusion theorem}. These properties allow us to prove
that almost all communities of $G$ are strong, where a community is
strong if the seed (or hub) of the community cannot be infected by
the collection of its neighbor communities unless some node of the
community itself is targeted or has already been infected, and more
importantly that there exists an {\it infection priority tree $T$ of
$G$} such that infections of a strong community must be triggered by
an edge in the infection priority tree $T$, and such that the
infection priority tree $T$ has height $O(\log n)$. By using these
principles, we show that a network $G$ constructed from the security
model is secure for any attacks of small scales under both the
uniform threshold and random threshold cascading failure models. Our
security theorems show that networks constructed from the security
model are provably secure against any attacks of small sizes, for
which natural selections of {\it homophyly, randomness} and {\it
preferential attachment} are the underlying mechanisms. We also show
that networks generated from the preferential attachment (PA, for
short) model satisfy a {\it threshold theorem of robustness} of
networks with a constant threshold so that the networks constructed
from the PA model cannot be even robust against random errors of
small sizes under the uniform threshold cascading failure model. We
design and implement an experiment which shows that overlapping
communities undermine security of networks.  Our results here
explore that security of networks can be achieved theoretically by
structure of networks, that there is a tradeoff between the role of
structure and the role of thresholds in security of networks, and
that neither power law nor small world property is an obstacle of
security of networks. The proofs of our results provide a general
framework to analyze security of networks.

\end{sciabstract}

Network security has been a fundamental issue from the very
beginning of network science due to its great importance to all the
applications of networks such as the internet, social science,
biological science, and economics etc. In the last few years,
security of networks has become an urgent challenge in network
applications.

Clearly, security depends on attacks of networks. Typical attacks
include both {\it physical attack} of removal of nodes or edges and
{\it cascading failure models of attacks}, similar to that of
viruses spreading. In the case of  physical attacks of removal of
nodes to destroy the global connectivity of networks, it was
shown~\cite{AJB2000} that many networks, including the
world-wide-web, the internet, social networks, are extremely
vulnerable to intentional attacks of removal of a small fraction of
high degree nodes, but at the same time display a high degree of
robustness against random errors.

The second type of attacks is the cascading failure model, see for
instance~\cite{AM91}, \cite{M00}, \cite{W02}, \cite{SFSVVD2009}.
This model captures the behaviors of spreading of information, of
viruses on computer networks, of news on internet, of ideas on
social networks, and of influence in economic networks etc. There
are different definitions of diffusions in networks in the
literature. Here we investigate the \emph{threshold cascading
failure model} which was formulated in social studies, and used in
simulating the epidemic spread in networks \cite{G78}. In this
model, the members have a binary decision and are influenced by
their neighbors in scenarios such as rumor spreading, disease
spreading, voting, and advertising etc. This model of cascading
behavior has been studied in physics, sociology, biology, and
economics~\cite{N03}, ~\cite{W02},~\cite{AM91}, ~\cite{M00}.

Blume et al. studied the algorithmic aspect of the threshold
cascading failure model on regular graphs of different patterns,
particularly on cliques and trees~\cite{BEKKT11}. Kempe et al.
considered the influence maximization problem for the linear
threshold model and gave a $(1-\frac{1}{e})$-approximation algorithm
based on the sub-modularity of influence functions~\cite{KKT03}.

In the present paper, we propose a theory of security of complex
networks. First of all, we need to understand what exactly factors
of networks determine the security of the networks. We found that
security of a network, $G$ say, depends on the following objects:

\begin{itemize}
\item \ Strategies of attacks
\item \ Topological structure of the network
\item \ Probabilistic principles
\item \ Combinatorial principles
\item \ The sizes of attacks
\item \ The cost of failures
\item \ Thresholds of vertices, for cascading failure models

\end{itemize}

A theory is to investigate the mathematical relationships among
these objects.

\section{Security and Robustness Hypotheses}\label{sec:inf and inj}

In this section, we introduce the basic definitions for us to
quantitatively analyze the security and robustness of networks.

We define the threshold cascading failure model as follows.

\begin{definition}\label{def:cascading} (Infection set) Let
$G=(V,E)$ be a network. Suppose that for each node $v\in V$, there
is a threshold $\phi (v)$ associated with it. For an initial set
$S\subset V$, the {\it infection set} of $S$ in $G$ is defined
recursively as follows:

\begin{enumerate}
\item [(1)] Each node $x\in S$ is called {\it infected}.
\item [(2)] A node $x\in V$ becomes infected, if it has not been infected yet,
and $\phi (x)$ fraction of its neighbors have been infected.

\end{enumerate}

We use ${\rm inf}_G(S)$ to denote the infection set of $S$ in
$G$.

\end{definition}

The cascading failure models depend on the choices of thresholds
$\phi (v)$ for all $v$. We consider two natural choices of the
thresholds. The first is random threshold cascading, and the second
is uniform threshold cascading.

\begin{definition}\label{def:random threshold} (Random threshold)
We say that a cascading failure model is {\it random}, if for each
node $v$, $\phi (v)$ is defined randomly and uniformly, that is,
$\phi (v)=r/d$, where $d$ is the degree of $v$ in $G$, and $r$ is
chosen randomly and uniformly from $\{1,2,\cdots, d\}$.

\end{definition}

\begin{definition}\label{def:uniform threshold} (Uniform threshold)
We say that a cascading failure model is {\it uniform}, if for each
node $v$, $\phi (v)=\phi$ for some fixed number $\phi$.

\end{definition}

To compare the two strategies of physical attacks and cascading
failure models of attacks, we introduce the notion of {\it injury
set} of physical attacks.

\begin{definition}\label{def:injury set} (Injury set)
Let $G=(V,E)$ be a network, and $S$ be a subset of $V$. The physical
attacks on $S$ is to delete all nodes in $S$ from $G$. We say that a
node $v$ is injured by the physical attacks on $S$, if $v$ is not
connected to the largest connected component of the graph obtained
from $G$ by deleting all nodes in $S$.

We use ${\rm inj}_G(S)$ to denote the injury set of $S$ in $G$.

\end{definition}

In~\cite{LZPL2013b}, it was shown that cascading failure models of
attacks are better than that of physical attacks, by simulating the
attacks on networks of classical models of networks.

The first model is the Erd\"os-R\'enyi (ER, for short)
model~\cite{ER1959}, \cite{ER1960}. In this model, we construct
graph as follows: Given $n$ nodes $1,2,\cdots, n$, and a number $p$,
for any pair $i, j$ of nodes $i$ and $j$, we create an edge $(i,j)$
with probability $p$.

We depict the curves of sizes of the infection set and the injury
set of attacks of top degree nodes of networks of the ER model in
Figures ~\ref{fig:_cascading_vs_node_attack_ER_N=10000_d=10}
and~\ref{fig:_cascading_vs_node_attack_ER_N=10000_d=15}.

The second is the PA model~\cite{Bar1999}. In this model, we
construct a network by steps as follows: At step $0$, choose an
initial graph $G_0$. At step $t>0$, we create a new node, $v$ say,
and create $d$ edges from $v$ to nodes in $G_{t-1}$, chosen with
probability proportional to the degrees in $G_{t-1}$, where
$G_{t-1}$ is the graph constructed at the end of step $t-1$, and $d$
is a natural number.

We depict the comparisons of sizes of infection sets and injury sets
of attacks of the top degree nodes of networks generated from the
preferential attachment model in
Figures~\ref{fig:_cascading_vs_node_attack_PA_N=10000_d=10} and
~\ref{fig:_cascading_vs_node_attack_PA_N=10000_d=15}.

Figures ~\ref{fig:_cascading_vs_node_attack_ER_N=10000_d=10},
~\ref{fig:_cascading_vs_node_attack_ER_N=10000_d=15},
~\ref{fig:_cascading_vs_node_attack_PA_N=10000_d=10} and
~\ref{fig:_cascading_vs_node_attack_PA_N=10000_d=15} show that for
any network, $G$ say, generated from either the ER model or the PA
model, the following properties hold:

\begin{enumerate}
\item The infection sets are much larger than the corresponding
injury sets.

This means that to build our theory, we only need to consider the
attacks of cascading failure models.

\item The attacks of top degree nodes of size as small as $O(\log
n)$ may cause a constant fraction of nodes of the network to be
infected under the cascading failure models of attacks.

This means that networks of the ER and PA models are insecure for
attacks of sizes as small as $O(\log n)$.
\end{enumerate}

 Therefore
the main issue of network security is to resist the global cascading
failure of networks by attacks of sizes polynomial in $\log n$.

\begin{figure}
  \centering
  \subfigure[]
   {\label{fig:_cascading_vs_node_attack_ER_N=10000_d=10}
    \includegraphics[width=4in]{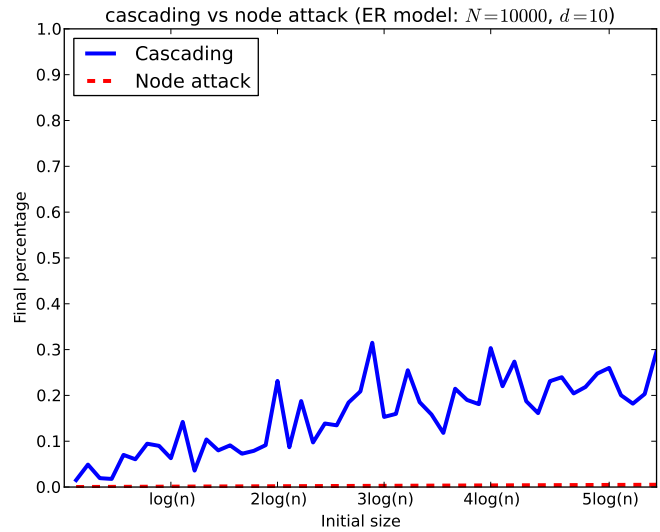}
   }

   \subfigure[]
   {\label{fig:_cascading_vs_node_attack_ER_N=10000_d=15}
    \includegraphics[width=4in]{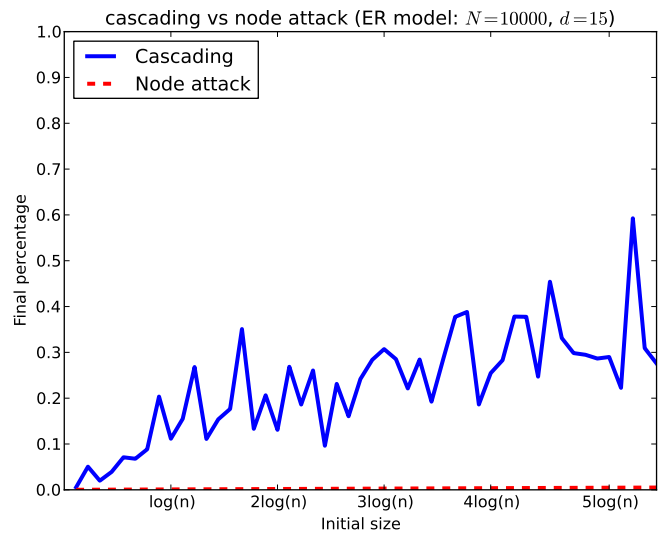}
   }

  \caption{\textbf{(a), (b) are the curves of fractions of sizes of infection sets and injury sets
  by attacks of the top degree nodes of small sizes, i.e., up to
  $5\log n$, for networks of the ER model for $n=10,000$ and for $d=10$ and $15$ respectively. The sizes of the infection sets are the
  largest ones among $100$ times attacks under random threshold cascading failure model. The infection sets and injury sets correspond
   to the blue and red curves respectively.}}
\end{figure}

\begin{figure}
  \centering
  \subfigure[]
   {\label{fig:_cascading_vs_node_attack_PA_N=10000_d=10}
    \includegraphics[width=4in]{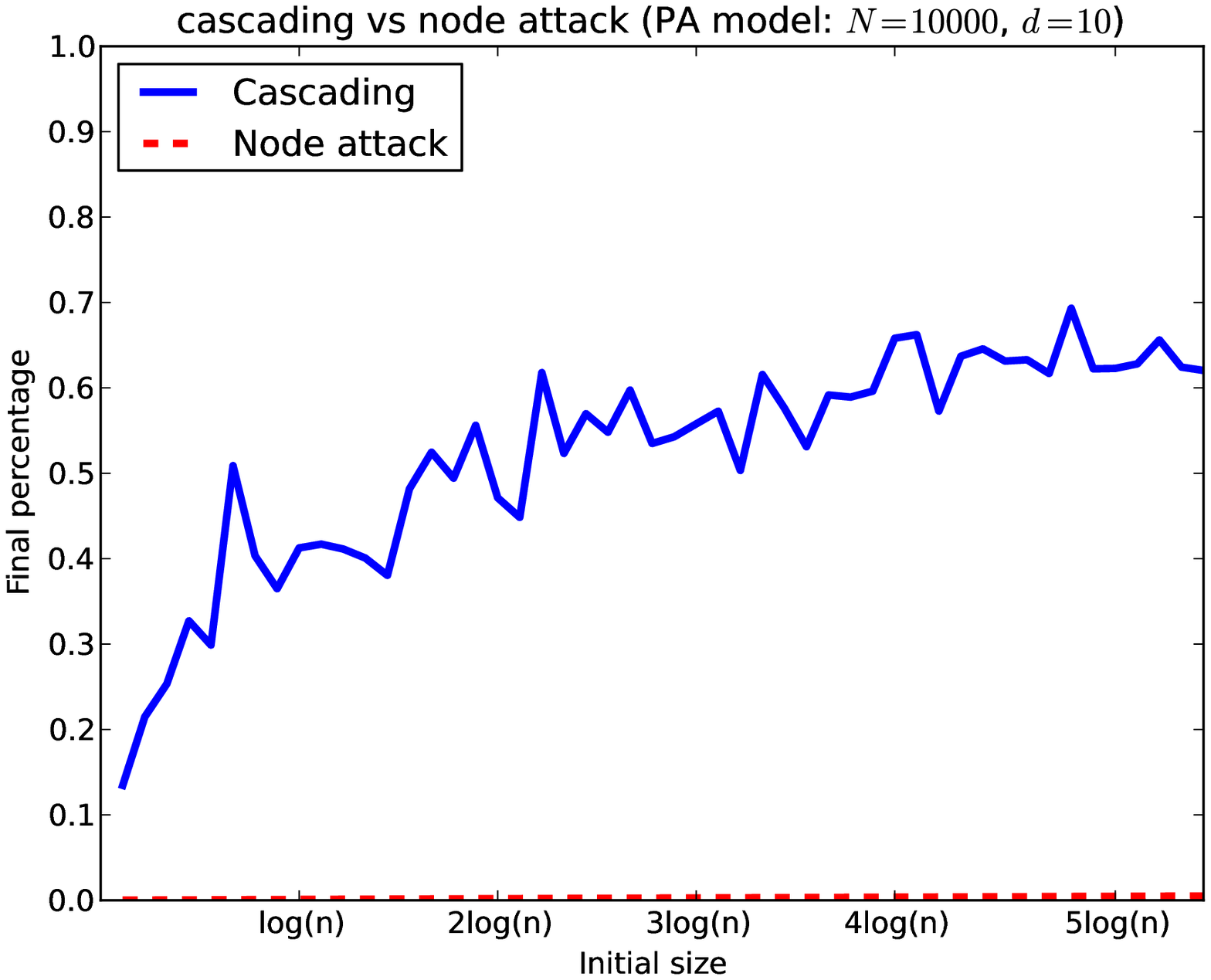}
   }

   \subfigure[]
   {\label{fig:_cascading_vs_node_attack_PA_N=10000_d=15}
    \includegraphics[width=4in]{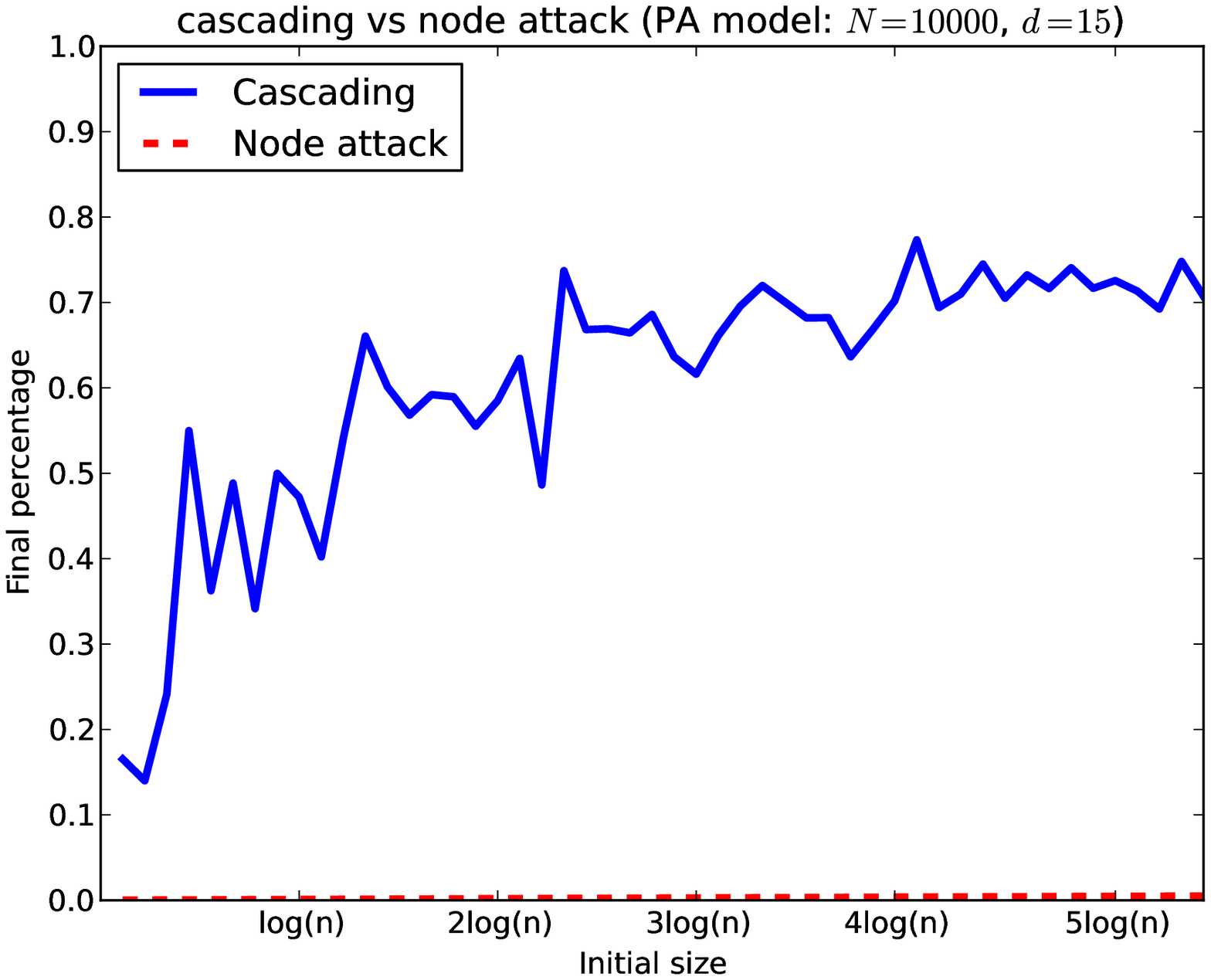}
   }

  \caption{\textbf{(a), (b) are the curves of fractions of sizes of infection sets and injury sets
  by attacks of the top degree nodes of small sizes, i.e., up to
  $5\log n$, for networks of the PA model for $n=10,000$ and for $d=10$ and $15$ respectively. The sizes of the infection sets are the
  largest ones among $100$ times attacks under random threshold cascading failure model. The infection sets and injury sets correspond
   to the blue and red curves respectively.}}
\end{figure}

From Figures ~\ref{fig:_cascading_vs_node_attack_ER_N=10000_d=10},
~\ref{fig:_cascading_vs_node_attack_ER_N=10000_d=15},
~\ref{fig:_cascading_vs_node_attack_PA_N=10000_d=10} and
~\ref{fig:_cascading_vs_node_attack_PA_N=10000_d=15}, we have that
the main issue of network security is to resist the global failure
of networks under cascading failure models, that for both theory and
applications, it suffices to guarantee the security against attacks
of sizes polynomial in $\log n$, and that topological structures of
networks are essential to the security of the networks, observed
from the comparison of infection fractions between the ER and the PA
models.

According to the experiments in Figures
~\ref{fig:_cascading_vs_node_attack_ER_N=10000_d=10},
~\ref{fig:_cascading_vs_node_attack_ER_N=10000_d=15},
~\ref{fig:_cascading_vs_node_attack_PA_N=10000_d=10} and
~\ref{fig:_cascading_vs_node_attack_PA_N=10000_d=15}, we propose the
following hypotheses.

{\bf Security Hypothesis}: We say that a network is {\it secure}, if
any small number of attacks of any strategy will never cause a
global failure of the network.

{\bf Robustness Hypothesis}: We say that a network is {\it robust},
if a small number of random errors of the network will never cause a
global failure of the network.

\section{Definitions of security and robustness}\label{def:security}

As mentioned in Section~\ref{sec:inf and inj}, the main issue is the
security for cascading failure models and for attacks of sizes
polynomial in $\log n$.

We propose mathematical definitions for security and robustness of
networks based on the security hypothesis and the robustness
hypothesis summarized in Section~\ref{sec:inf and inj},
respectively.

We consider the security of networks with arbitrary sizes. We define
the  security and robustness of networks under the threshold
cascading failure model as follows:

Let $n$ be the number of nodes of the network. We define

{\bf Security} With probability $1-o(1)$, the following event
occurs: For any initial set $S$ of size ${\rm poly}(\log n)$, $S$
will not cause a global cascading failure, that is, the size of the
infection set of $S$ in $G$ is $o(n)$.

and

 {\bf Robustness} With probability $1-o(1)$, a small number,
i.e., ${\rm poly}(\log n)$, of random choices of the initial set $S$
will not cause a {\it global cascading failure}, that is, the size
of the infection set of $S$ in $G$ is $o(n)$.

Let $\mathcal{M}$ be a model of networks. We investigate the
security of networks constructed from model $\mathcal{M}$. We define
the security of networks
 for attacks of cascading failure with both random threshold and uniform threshold
 respectively. Suppose that $G$ is a network of $n$
 nodes, constructed from model $\mathcal{M}$, for large $n$.

\begin{definition}\label{def:r-security} (Random threshold security)
For the cascading failure model of random threshold, we say that $G$
is {\it secure}, if almost surely, meaning that with probability
$1-o(1)$, the following holds:

for any set $S$ of size bounded by a polynomial of $\log n$, the
size of the infection set (or cascading failure set) of $S$ in $G$
is $o(n)$.

\end{definition}

\begin{definition}\label{def:u-security} (Uniform threshold
security) For the cascading failure model of uniform threshold, we
say that $G$ is {\it secure}, if almost surely, the following holds:
for an arbitrarily small $\phi$, i.e., $\phi =o(1)$, for any set $S$
of size bounded by a polynomial of $\log n$, $S$ will not cause a
global $\phi$-cascading failure, that is, the size of the infection
set of $S$ in $G$, written by ${\rm inf}^{\phi}_G(S)$, is bounded by
$o(n)$.

\end{definition}

\begin{definition}\label{def:security} (Security of model $\mathcal{M}$) Let $\mathcal{M}$ be a model
of networks. We say that model $\mathcal{M}$ is secure, if networks
constructed from model $\mathcal{M}$ are secure for both random and
uniform threshold cascading failure models of attacks.

\end{definition}

\begin{definition}\label{def:r-robustness} (Random threshold robustness)
For the cascading failure model of random threshold, we say that $G$
is {\it robust}, if almost surely, meaning that with probability
$1-o(1)$, the following holds:

for randomly chosen set $S$ of size bounded by a polynomial of $\log
n$, the size of the infection set of $S$ in $G$ is $o(n)$.

\end{definition}

\begin{definition}\label{def:u-robustness} (Uniform threshold
robustness) For the cascading failure model of uniform threshold, we
say that $G$ is {\it robust}, if almost surely, the following holds:
for an arbitrarily small $\phi$, i.e., $\phi =o(1)$, for randomly
chosen set $S$ of size bounded by a polynomial of $\log n$, $S$ will
not cause a global $\phi$-cascading failure, that is, the size of
the infection set of $S$ in $G$, written by ${\rm inf}_G^{\phi}(S)$,
is bounded by $o(n)$.

\end{definition}

\begin{definition}\label{def:robustness} (Robustness of model $\mathcal{M}$) Let $\mathcal{M}$ be a model
of networks. We say that model $\mathcal{M}$ is robust, if networks
constructed from model $\mathcal{M}$ are robust for both random and
uniform threshold cascading failure models of random errors.

\end{definition}

In Definitions~\ref{def:r-security}, ~\ref{def:u-security},
~\ref{def:r-robustness} and ~\ref{def:u-robustness}, the sizes of
attacks or random errors are polynomial in $\log n$. This is
sufficient for both theory and applications. The reason is that
networks constructed from both the ER and PA models are insecure, in
the sense that attacks of $O(\log n)$ top degree nodes may generate
a constant fraction of nodes of the networks to be infected, as
shown in Figures
~\ref{fig:_cascading_vs_node_attack_ER_N=10000_d=10},
~\ref{fig:_cascading_vs_node_attack_ER_N=10000_d=15},
~\ref{fig:_cascading_vs_node_attack_PA_N=10000_d=10} and
~\ref{fig:_cascading_vs_node_attack_PA_N=10000_d=15}.

\section{Security model of networks: algorithms and principles}\label{sec:sec-model}

From Figures ~\ref{fig:_cascading_vs_node_attack_ER_N=10000_d=10},
~\ref{fig:_cascading_vs_node_attack_ER_N=10000_d=15},
~\ref{fig:_cascading_vs_node_attack_PA_N=10000_d=10} and
~\ref{fig:_cascading_vs_node_attack_PA_N=10000_d=15}, we know that
nontrivial networks of both the ER model and the PA model are
insecure. This poses fundamental questions such as: Are there
networks with power law and small world property that are secure by
Definitions~\ref{def:r-security} and \ref{def:u-security}? What
mechanisms guarantee the security of networks? Is there any
algorithm to construct secure networks?

In~\cite{LZPL2013a}, the authors proposed a security model of
networks, and showed by experiments that networks of the security
model are much more secure than that constructed from both the ER
and PA models.

\begin{definition} \label{def:Securitymodel} (Security model)
Let $d\geq 4$ be a natural number and $a$ be a real number, which is
called {\it homophyly exponent}. We construct a network by stages.

\begin{enumerate}
\item Let $G_2$ be an initial  graph such that each node is associated with a
distinct {\it color}, and called {\it seed}.

\item Let $i>2$. Suppose that $G_{i-1}$ has been defined.
Define  $p_i=(\log i)^{-a}$.

\item With probability $p_i$, $v$ chooses a new color, $c$ say. In this
case, do:

\begin{enumerate}

\item we say that
$v$ is the {\it seed node} of color $c$,

\item (Preferential attachment scheme) add an edge $(u,v)$, such
 that $u$ is chosen with probability proportional to the degrees of nodes
in $G_{i-1}$, and

\item (Randomness) add $d-1$ edges $(v,u_j)$, $j=1,2,\ldots,d-1$, where $u_j$'s are
chosen randomly and uniformly among all seed nodes in $G_{i-1}$.
\footnote{If all the newly created $d$ edges linking from $v$ to
nodes in $G_{i-1}$ are chosen with probability proportional to their
degrees, then the model is the homophyly model \cite{LLPP2012a}.}
\end{enumerate}

\item (Homophyly and preferential attachment) Otherwise. Then $v$ chooses an old color, in which
case, then:

\begin{enumerate}
\item let $c$ be a color chosen randomly and uniformly among all colors in
$G_{i-1}$,

\item define the color of $v$ to be $c$ , and

\item add $d$ edges $(v,u_j)$, for $j=1,2,\ldots,d$, where $u_j$'s are
chosen with probability proportional to the degrees of all the nodes
that have the same color as $v$ in $G_{i-1}$.

\end{enumerate}
\end{enumerate}
\end{definition}

 It is clear that
Definition~\ref{def:Securitymodel} is a dynamic model of networks
for which homophyly, randomness and preferential attachment are the
underlying mechanisms.

As shown in~\cite{LZPL2013a}, ~\cite{LZPL2013b},  networks
constructed from the security model are much more secure than that
of the ER and PA models. To understand the intuition of the security
model, we use a figure in~\cite{LZPL2013b},
Figure~\ref{fig:_cascading_3models_N=10000_d=10_a=1.5} here. It
depicts three curves of sizes of infection sets of attacks of top
degree nodes of sizes up to $5\cdot\log n$ under random threshold
cascading failure model on networks generated from the security
model, the ER model and the PA model respectively. The curves
correspond to the largest infection set among $100$ times of attacks
over random choices of thresholds of the networks. The figure shows
that networks of the security model are in deed much more secure
than that of both the ER and the PA models, even if we just take the
homophyly exponent $a>1$ in the security model.

\begin{figure}
  \centering
\includegraphics[width=4in]{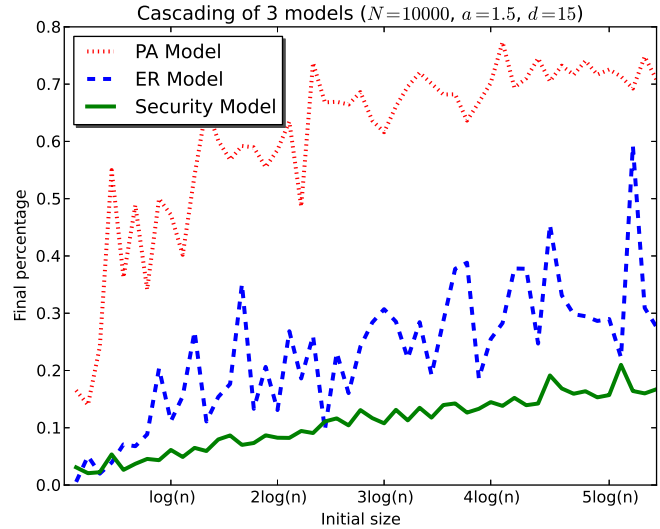}
\caption{\textbf{The curves are cascading failures of networks of
the ER model, the PA model and the security model for $n=10,000$,
$d=15$ and
$a=1.5$}}\label{fig:_cascading_3models_N=10000_d=10_a=1.5}

\end{figure}

Experiments in~\cite{LZPL2013a} showed the following properties:

\bigskip

\begin{enumerate}
 \item The mechanisms of homophyly, randomness
and preferential attachment ensure that networks of the security
model satisfy a number of structural properties such as:

\begin{enumerate}
\item (Small community phenomenon) A network, $G$ say, is rich in quality communities of small
sizes.

In fact, let $S$ be a homochromatic set of $G$. Then the induced
subgraph of $S$, written by $G_S$, is highly connected, and the
conductance of $S$, written by $\Phi (S)$, is bounded by a number
reversely proportional to a constant power of the size of the
community, i.e., less than or equal to, $O(\frac{1}{|S|^{\beta}})$,
for some constant $\beta$, where $|S|$ is the size of $S$.

\item (Internal centrality) Each community is the
induced subgraph of nodes of the same color, which follows a
preferential attachment, and hence has only a few nodes dominating
the internal links of the community.

This shows a remarkable local heterogeneity of the networks.

\item (External centrality) Each community has a few nodes,
including the seed of the community, which dominate the external
links from the community to outside of the community.

\end{enumerate}

\item (Power law) The networks follow a power law.
\item (Small world property) The networks have small diameters.

\item (Global Randomness and uniformity) There is a high degree of
randomness and uniformity among the edges between nodes of different
colors.

This shows that the networks have a global homogeneity and a global
randomness.

\item A non-seed node, $x$ in a community $G_X$, created at time step $t$ can be
infected by a neighbor community $G_Y$, only if the seed node $y_0$
of $G_Y$ is created at a time step $s>t$ and an edge $(y_0,x)$ is
created by (3) (b) of Definition \ref{def:Securitymodel}.

\end{enumerate}

 The structural properties in (1) above allow us to develop a methodology
  of community analysis of networks. (2) and (3) show that the
  networks constructed from the security model have the most
  important properties of usual networks. (4) and (5) ensure that
  infections among different communities are hard. This intuitively
  explains the reason why networks constructed from the security
  model show much better security than that of the classic ER and PA
  models.

 The arguments above imply that the small community phenomenon, local heterogeneity, global
homogeneity and global randomness are essential to the security of
networks with power law and small world property.

In the present paper, we will show that the security model is
provably secure by Definition~\ref{def:security}. The key idea of
the proofs is a merging of some principles of topology, probability
and combinatorics.

We use $\mathcal{S}(n,a,d)$ to denote the set of random graphs of
$n$ nodes constructed by the security model with homophyly exponent
$a$ and average number of edges $d$ \footnote{In both Definition of
the PA model and the security model in~\ref{def:Securitymodel}, we
consider $d$ as a constant. Thus in all notations of
$O(\cdot),o(\cdot),\Omega(\cdot)$ and $\omega(\cdot)$ in the paper,
$d$ is always absorbed.}.

Let $G$ be a network constructed from the security model. We have
that each node is assigned a color. This new dimension of colors
allows us to characterize the structures of the networks. In our
security model, every node has its own characteristics from the very
beginning of its birth. This feature is remarkably different from
the classic models such as the ER and the PA models. Anyway, the
extra dimension of colors is essential to our understanding of
security of networks.

 We call a set of nodes of the same color, $\kappa$
say, a {\it homochromatic set}, written by $S_\kappa$.

We say that an edge is a {\it local edge} if two of its endpoints
share the same color, and {\it global edge}, otherwise.

At first, we prove some structural properties of networks of the
security model.

\bigskip

\begin{theorem} \label{thm:Securityproperties} (Fundamental theorem of the security
model) Let $a>1$ be the homophyly exponent, and $d\geq 4$ be a
natural number. Let $G=(V,E)$ be a network constructed by
$\mathcal{S}(n,a,d)$.

Then with probability $1-o(1)$, the following properties hold:

\begin{enumerate}

\item [(1)] (Basic properties):

\begin{enumerate}

\item [(i)] (Number of seed nodes is large) The number of seed nodes is bounded in the interval $[\frac{n}{2\log^a n},\frac{2n}{\log^a
n}]$.

\item [(ii)] (Communities whose vertices are interpretable by common features are small) Each homochromatic set
 has a size bounded by $O(\log^{a+1} n)$.

\end{enumerate}

\item [(2)] For degree distributions, we have:

\begin{itemize}

\item [(i)] (Internal centrality) The degrees of the induced
subgraph of a homochromatic set follow a power law.

\item [(ii)] The degrees of nodes of a homochromatic set follow a
power law.
\item [(iii)] (Power law) Degrees of nodes in $V$ follow a power law.
\end{itemize}

\item [(3)] For node-to-node distances, we have:
\begin{itemize}

\item [(i)] (Local communication law) The induced subgraph of a homochromatic set has a diameter bounded by $O(\log\log
n)$.
\item [(ii)] (Small world phenomenon) The average node to node distance of $G$ is bounded by $O(\log
n)$.
\item [(iii)] (Local algorithm to find short path between two nodes)
There is an algorithm to find a short path between arbitrarily given
two nodes in time $O(\log n)$.
\end{itemize}

\item [(4)] (Small community phenomenon) There are $1-o(1)$ fraction of nodes of $G$ each of which belongs to
a homochromatic set, $W$ say, such that the size of $W$ is bounded
by $O(\log^{a+1} n)$, and that the conductance of $W$, $\Phi (W)$,
is bounded by $O\left(\frac{1}{|W|^{\beta}}\right)$ for
$\beta=\frac{a-1}{4(a+1)}$.

This shows that the network is rich in quality communities of small
sizes.
\end{enumerate}
\end{theorem}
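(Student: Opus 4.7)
The theorem collects four packages of statements, all of which hinge on concentration of a few basic random counts defined by the construction. Let $X_i = \mathbf{1}[v_i \text{ is a seed}]$; these are independent Bernoulli variables with parameter $(\log i)^{-a}$, so part (1)(i) is immediate from a Chernoff bound on $\sum_{i=2}^n X_i$, whose mean is $\Theta(n/\log^a n)$. For (1)(ii), I fix a color $c$ born at time $t_0$ and let $N_s(t)$ denote the number of seeds at time $t$; by (1)(i), $N_s(t) \asymp t/(\log t)^a$ uniformly, with probability $1-o(1)$. The conditional probability that the arrival at time $t > t_0$ picks $c$ is $(1-(\log t)^{-a})/N_s(t-1) \asymp (\log t)^a/t$, so the expected size of the homochromatic set $W_c$ is $\sum_{t > t_0} (\log t)^a/t = O(\log^{a+1} n)$. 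A Chernoff or Azuma step, union-bounded over the $O(n/\log^a n)$ colors, delivers the upper bound with probability $1-o(1)$.

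Parts (2) and (3) reduce to standard machinery once the color classes are understood. The induced subgraph on a homochromatic set $S$ is itself a preferential-attachment-style process: every non-seed arrival of color $c$ places its $d$ in-community edges by PA on $G_{t-1}[S]$. The Bollob\'as--Riordan--M\'ori analysis of PA then yields the internal power law (2)(i); appending the few global edges at the seed gives (2)(ii), and aggregating across colors together with the global-PA contribution at seed births yields (2)(iii). For (3)(i), the inner PA process has only $O(\log^{a+1} n)$ vertices, so its diameter is $O(\log \log n)$. For (3)(ii)--(iii), I view the seed subgraph as an almost-regular random graph on $\Theta(n/\log^a n)$ vertices (each new seed lays $d-1$ uniformly random seed-to-seed edges), whose diameter is $O(\log n)$; every node reaches its own seed in $O(\log\log n)$ local hops, and a two-level greedy router (local-to-seed, then seed-to-seed) produces an $O(\log n)$-length path in $O(\log n)$ time.

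The main obstacle is (4). Fix $W$ born at time $t_0$; the key structural observation is that every cross edge $(u,v)$ with $u \in W$, $v \notin W$ is incident to the seed $v_W$, since non-seed nodes in $W$ only ever add edges of their own color. Such edges arise either at $v_W$'s birth (its $d$ initial edges), or from a later seed $v'$ born at time $t > t_0$ through one of its $d-1$ uniform seed edges or its single global-PA edge. Using $N_s(t) \asymp t/(\log t)^a$, the expected cut size is at most $O(\log n) + O(\deg_n(v_W)/\log^a n)$, and a martingale bound on the inner PA process gives $\deg_n(v_W) = O(|W|^{1/2})$ with high probability. Since $\mathrm{vol}(W) = \Theta(d|W|)$, this yields $\Phi(W) = O(|W|^{-1/2} + \log n/|W|)$, which beats $|W|^{-\beta}$ for $\beta = (a-1)/(4(a+1))$ once $|W|$ exceeds a polylog threshold. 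The ``$1-o(1)$ fraction'' claim then follows from a lower-tail concentration on $|W|$, showing that nodes sitting in exceptionally small color classes contribute only $o(n)$ mass. The hardest part is executing this last step cleanly: one needs simultaneous two-sided control of $|W|$ and an upper bound on $\deg_n(v_W)$ for essentially all colors at once, and matching $\beta$ to these estimates is what forces the precise value $(a-1)/(4(a+1))$.
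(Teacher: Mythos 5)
Your proposal reproduces the paper's overall architecture for parts (1)--(3): Chernoff on the seed count, Chernoff/Azuma on color-class sizes, reduction of within-community degree distributions to a preferential-attachment process, and a two-level (local-then-seed) routing argument for diameter. The paper is somewhat more careful in (2), tracking separately the degree sequence in $G$ restricted to a color class and in the induced subgraph, and showing that the contribution of global edges is asymptotically negligible, but your sketch points in the same direction. Likewise for (3), the paper works with uniform recursive trees (Pittel's height bound) rather than an ``almost-regular random graph,'' but the routing idea is the same.

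The genuine gap is in part (4), and it is structural, not numerological. You assert that ``every cross edge $(u,v)$ with $u\in W$, $v\notin W$ is incident to the seed $v_W$, since non-seed nodes in $W$ only ever add edges of their own color.'' The premise is true, but the conclusion does not follow: a non-seed node of $W$ can be the \emph{target} of a cross edge. When a later seed $v'$ is born, its single preferential-attachment edge (step (3)(b) of the model) is chosen with probability proportional to degree among \emph{all} nodes of $G_{t-1}$, so it can land on any member of $W$, not only on $v_W$. The paper explicitly separates the boundary $\partial(W)$ into ``first type edges'' (later-seed PA edges into arbitrary nodes of $W$) and ``second type edges'' (later-seed uniform-seed edges into $v_W$), and the first type is the dominant one. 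To bound them one needs control of the \emph{total} degree $D(W)[t]$ of the whole color class over time, not just $\deg(v_W)$; this is the content of Lemma~\ref{lem:homodegree} in the paper, proved by a submartingale concentration argument showing $D(W)[n]=O(\log^{a+1}n)$. Your estimate $O(\deg_n(v_W)/\log^a n)$ undercounts the first-type edges: the correct expectation is roughly $\sum_t \log^{-a}t\cdot D(W)[t]/(2dt)=O(\log n\cdot\log\log n)$, which happens to still be compatible with $\beta=\frac{a-1}{4(a+1)}$, but the derivation you give would not survive scrutiny because the claimed edge structure is false. You need the $D(W)[t]$ martingale lemma (or something equivalent) to close this step.
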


\smallskip

Theorem~\ref{thm:Securityproperties} explores an interesting
topology of a network $G$: (i) $G$ consists of a local structure and
a global structure, (ii) the local structure of $G$ is determined by
the small communities which have a number of local properties, and
(iii) the global structure of $G$ follows its own laws. The network
is rich in quality communities of small sizes which compose the
interpretable local structures of the network. On the other hand,
there is a global structure of the network which ensures that the
whole network is highly connected, with a power law distribution,
and a small diameter property. Communications in $G$ have two types,
the first is the local communications within the small communities
of length $O(\log\log n)$ and the second is the global ones which
make the whole network to be highly connected of length $O(\log n)$.
More importantly, there exists a {\it local algorithm} running in
time $O(\log n)$ to navigate in the whole network. Most of the
communications are local ones having length within $O(\log\log n)$,
and the rest of communications are global ones with length bounded
by $O(\log n)$. The construction of a network with explicit marks of
local and global structures by Definition~\ref{def:Securitymodel}
allows local algorithms of time complex  $O(\log n)$ to find useful
information in the whole network. This suggests a new algorithmic
problem, that is, to find network algorithms of time complexity
polynomial in $\log n$ for finding useful information.

Theorem~\ref{thm:Securityproperties} ensures that all the
communities are small. This guarantees that even if a single node in
a small community infects the whole community, the cascading failure
is still a local cost. However it is not intuitive to understand
from Theorem~\ref{thm:Securityproperties} the reason why networks of
the security model are secure. In fact, to prove the security
theorems, we need to develop some probabilistic and combinatorial
properties of the networks. In ~\cite{LZPL2013a}, the authors
analyzed experimentally some of these properties.

Suppose that $G=(V,E)$ is a network constructed from the security
model. For a subset $X\subset V$, we always use $G_X$ to denote the
induced subgraph of $X$ in $G$.

For a set of nodes $S$, we define $C(S)$ to be the set of colors
that appear in $S$. For a node $v$, we use $N(v)$ to denote the set
of neighbors of $v$. Given a node $v$, we define the {\it length of
degrees of $v$} to be the number of colors associated with the
neighbors of $v$, i.e., $|C(N(v))|$, written by $l(v)$.

Suppose that $N_1, N_2,\cdots, N_l$ are all the neighbors of $v$
such that nodes in each $N_i$ share the same color, and that nodes
in different $N_i$'s have different colors. Let $d_i$ be the size of
$N_i$, for each $i\in\{1,2,\cdots, l\}$. Suppose that $d_1\geq
d_2\geq\cdots\geq d_{l(v)}$ (ties break arbitrarily). In this case,
we say that $d_i$ is the {\it $i$-th degree of $v$}, and the color
of nodes in $N_i$ is the {\it $i$-th  color of neighbors of $v$},
for all $i\in\{1,2,\cdots,l\}$.

The length of degrees, the $i$-th degree and the $i$-th color of
neighbors of vertices have some interesting properties, including
the ones validated by experiments in ~\cite{LZPL2013a}: (i) The
length of degrees of a vertex is always bounded by $O(\log n)$, (ii)
The first degrees $d_1$'s are large, (iii) The second degrees are
always as small as constants, and (iv) For a vertex $v$, if the
length of degrees of $v$ is $l(v)>1$, then for any $i>1$, the $i$-th
color of neighbors of $v$ is distributed with a high degree of
randomness and uniformity. These properties are essential to the
experimental analysis of security of the networks
in~\cite{LZPL2013a}.

To theoretically prove the results, we define some useful notations.

\bigskip

\begin{definition} \label{def:degrees}
Let $G=(V,E)$ be a network constructed from the security model.
Given a node $v\in V$:

\begin{enumerate}
\item For every $j$, we define the $j$-th degree of $v$
at the end of time step $t$ to be the number of the $j$-th largest
set of homochromatic neighbors at the end of time step $t$, written
by $d_j(v)[t]$.

\item We define the $j$-th degree of $v$ to be the $j$-th degree of $v$ at
the end of  the construction of network $G$, written by $d_j(v)$.

\item We define the length of degrees of $v$ at the end of time step
$t$ to be the number of colors associated with neighbors of $v$ at
the end of time step $t$, written by $l(v)[t]$.

\item We define the length of degrees of $v$ to be the length of
degrees of $v$ at the end of the construction of $G$, written by
$l(v)$.
\end{enumerate}

\end{definition}

\smallskip

In sharp contrast to classic graph theory, for a network constructed
from our security model, $G$ say, and a vertex $v$ of $G$, $v$ has a
{\it priority of degrees}. This new feature must be universal in
real networks in the following sense: A community is an
interpretable object in a network such that nodes of the same
community share common features. In this case, a vertex $v$ may have
its own community and may link to some neighbor communities by some
priority ordering. In our model, a node $v$ more likes to contact
with nodes sharing the same color (or feature) with it, and has no
much preferences in contacting with nodes in its neighbor
communities.

\bigskip

\begin{definition} \label{def:degreepriority} (Degree Priority) Let
$v$ be a node of $G$ constructed from the security model created at
time step $t_0$, and $t\geq t_0$.

\begin{enumerate}

\item Suppose that $N_1, N_2, \cdots, N_l$ are all the homochromatic
neighbors of $v$ at the end of time step $t$ listed decreasingly by
the sizes of the sets $N_j$. For $d_j=|N_j|$ for each $j$, we say
that $(d_1, d_2,\cdots, d_l)$ is the degree priority of $v$ at the
end of time step $t$, written by $ dp(v)[t]=(d_1, d_2,\cdots, d_l)$.

\item We define the degree priority of $v$ in $G$ to be the degree
priority of $v$ at the end of the construction of $G$, written by $
dp(v)$.

\end{enumerate}

\end{definition}

\smallskip

The degree priority of nodes in $G$ satisfies some nice
probabilistic and combinatorial properties.

\bigskip

\begin{theorem} \label{thm:length} (Degree Priority Theorem) Let $G$ be a network
constructed from the security model with $d\geq 2$, and $a>1$. Then
with probability $1- o(1)$, for a randomly chosen node $v$, the
following properties hold:

\begin{enumerate}
\item The length of degrees of $v$ is bounded by $O(\log n)$, which
is an upper bound independent of $a$.
\item The first degree of $v$ is the number of neighbors that
share the same color as $v$.
\item The second degree of $v$ is bounded by $O(1)$, so that for any possible $j>1$, the $j$-th degree of $v$ is $O(1)$.
\item The first degree of a seed node is lower bounded by $\Omega (\log^{\frac{a+1}{4}} n)$.

\end{enumerate}

\end{theorem}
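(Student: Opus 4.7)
The plan rests on a pivotal structural observation about Definition~\ref{def:Securitymodel}: only a seed can create a cross-color edge, and every seed has a unique color. Indeed, step 4 places all $d$ edges of a non-seed inside its own color class, while step 3 places one preferential-attachment edge plus $d-1$ uniform edges, all to previously existing seeds (which have pairwise distinct colors). Consequently, for any node $v$ and any color $c$ different from the color of $v$, the number of neighbors of $v$ of color $c$ is either $0$ or $1$, equal to $1$ precisely when the unique seed of color $c$ is incident to $v$.

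This observation immediately yields~(3): for $j \ge 2$ the $j$-th homochromatic neighbor class of $v$ lies in some color distinct from that of $v$, so it has size at most $1 = O(1)$. Part~(2) follows from the same observation combined with a mild domination argument: as soon as $v$ has at least two same-color neighbors, the same-color class strictly exceeds every other color class and is therefore first in the priority ordering. For a non-seed $v$ this is automatic, since $v$'s $d \ge 4$ creation-edges are all same-color; for a seed $v$ it follows from part~(4), which we show gives $d_1(v) = \omega(1)$.

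For~(1), since each additional color appearing in $N(v)$ contributes exactly one seed-neighbor, $l(v)$ is controlled by the number of seeds adjacent to $v$. I would decompose this count into at most $d$ seeds that $v$ linked to at its own birth, plus seeds born at steps $j > t_0(v)$ that later chose $v$ as an endpoint---either via their preferential-attachment edge, or via one of their $d-1$ uniform choices among the $\Theta(j(\log j)^{-a})$ existing seeds. Using Theorem~\ref{thm:Securityproperties} to control the running seed count $s_j$, one verifies that the per-step probability that node $j$ is a seed and attaches to $v$ is $O(1/j)$, so the sum from $t_0(v)+1$ to $n$ has expectation $O(\log n)$. A Chernoff/Azuma concentration along the construction filtration then upgrades this to a high-probability bound. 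The key numerical point is that the $a$-dependence of $p_j = (\log j)^{-a}$ exactly cancels the $a$-dependence of $s_j \sim j(\log j)^{-a}$, which is why the bound $O(\log n)$ is independent of $a$.

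Part~(4) is the main obstacle. For a seed $v$ of color $c_v$, observe that conditional on the random steps at which step 4 of the construction selects color $c_v$, the induced subgraph on $S_{c_v}$ is generated as a preferential-attachment graph rooted at $v$, so $d_1(v)$ equals the degree of the origin in that PA process. The plan is two-stage. First, establish a lower bound $|S_{c_v}| = \Omega(\log^{(a+1)/2} n)$ for seeds born before time $n(1-\log^{-\Omega(1)} n)$---i.e.\ for $1-o(1)$ fraction of seeds---by a second-moment estimate on the number of later non-seeds that pick color $c_v$, again using the control on $s_j$. Second, apply the classical PA degree analysis to conclude that the origin's degree in the conditional process is $\Omega(\sqrt{|S_{c_v}|}) = \Omega(\log^{(a+1)/4} n)$ with probability $1-o(1)$. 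The delicate step will be decoupling the random arrival times of same-color non-seeds from the PA degree evolution of $v$ inside $S_{c_v}$; I would handle this by conditioning on the color-choice sequence and then running a Doob martingale inside the resulting conditional PA process.
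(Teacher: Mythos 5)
Your proposal tracks the paper's proof closely, using the same three ideas: only seeds create cross-color edges (giving parts (2) and (3)), an $O(\sum_t 1/t)=O(\log n)$ expected count of later seeds attaching to $v$ (giving part (1)), and a community-size lower bound of $\Omega(\log^{(a+1)/2}n)$ combined with the PA root-degree estimate $\Omega(\sqrt{|S|})$ (giving part (4) and hence part (2) for seeds). The one small imprecision is your claim that every cross-color neighbor class has size $0$ or $1$: this is exact for non-seeds, but for a seed $v$ the step~3(b) PA endpoint can be a non-seed that shares a color with one of the $d-1$ step~3(c) seeds, and the uniform choices in 3(c) can repeat, so the correct bound is $O(d)$ (the paper's $F(v)[t]\le d$) — the $O(1)$ conclusion and the rest of your argument are unaffected.
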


\smallskip

By (2), (3) and (4) of Theorem~\ref{thm:length}, we understand that
for a community $G_X$ induced by a homochromatic set $X$, the seed
node, $x_0$ say, of $X$ has a large first degree and constant second
degree, so that it is unlikely to be infected by a single neighbor
community, $G_Y$ say. Combining with (1), this ensures that for
properly chosen $a$, the seed node $x_0$ of $G_X$ is hard to be
infected by the collection of all its neighbor communities alone.
Such a community is regarded as a {\it strong community}.
Theorem~\ref{thm:Securityproperties} ensures that for properly
chosen $a$, almost all communities are strong, so that each of them
is hard to be infected by the collection of all its neighbor
communities alone.

Combining Theorem~\ref{thm:Securityproperties} and
Theorem~\ref{thm:length} gives us a better understanding for the
reasons why networks of the security model are secure. However, to
prove the security theorems, we have to understand the cascading
behaviors of attacks in the networks.

We define a community of $G$ is the induced subgraph of a
homochromatic set. We say that a community, $G_X$ say, is created at
time step $t$ if the seed node $x_0$ of $X$ is created at time step
$t$.

To understand the cascading behaviors, we define:

\bigskip

\begin{definition} \label{def:inj} Let $x$ and $y$ be two nodes of
$G$. We say that $x$ injures $y$, if the infection of $x$
contributes to the probability that $y$ becomes infected. Otherwise,
we say that $x$ fails to injure $y$.

\end{definition}

\smallskip

We will show that the infection of a community from a neighbor
community satisfies a number of combinatorial properties.

\bigskip

\begin{theorem} \label{thm:injury} (Infection-Inclusion Theorem) Suppose that $X$ and $Y$ are
 two homochromatic sets, and that $G_X$ and $G_Y$
are two communities. Let $x_0$, and $y_0$ be the seed nodes of $X$
and $Y$ respectively. Suppose that $x_0$ and $y_0$ are created at
time step $s$ and $t$ respectively. Then the injury of $G_Y$ from
community $G_X$ satisfies the following properties:

\begin{enumerate}

\item [(1)] If $s<t$, then

\begin{enumerate}

\item [(i)] The community $G_X$ created at time step $s$ fails to
injure any non-seed node in the community $G_Y$ created at time step
$t$.

\item [(ii)] The injury of the seed node $y_0$ created at time
step $t$ from the whole community created at time step $s$ is
bounded by a constant $O(1)$.

\end{enumerate}

\item [(2)] If $s>t$, then

\begin{enumerate}

\item [(i)] All the non-seed nodes in $G_X$ created at time step $s$
fail to injure any node in the community $G_Y$ created at time step
$t$.

\item [(ii)] The injury of the seed node created at time step $t$ from the community
created at time step $s$ is bounded by $1$.

\item [(iii)] The injury of a non-seed node in the community created at time step $t$
from the seed node created at time step $s$ follows the edge created
by step (3) (b) of Definition \ref{def:Securitymodel}.

\end{enumerate}

\item [(3)] The seed node $y_0$ of $G_Y$ created at time step $t$
can be injured only by:

\begin{enumerate}
\item [(i)] Communities created at time step $<t$.
\item [(ii)] The seed nodes of communities created at time step
$>t$.

\end{enumerate}

\item [(4)] A non-seed node $y$ of $G_Y$ created at time step $t$
can be injured only by seed nodes created at time step $>t$ through
the edge created by (3) (b) of Definition~\ref{def:Securitymodel}.

\end{enumerate}

\end{theorem}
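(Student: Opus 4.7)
The plan is to reduce the entire theorem to a careful case analysis of the edge-creation rules in Definition~\ref{def:Securitymodel}. The crucial observation is that every edge in $G$ is created at exactly one time step, namely the birth step of its later-born endpoint. Hence, for any two nodes $u,v$ of distinct colors, the edge $(u,v)$---if it exists---must arise from one of two mechanisms, depending on the type of the later-born endpoint: either (a) it is a non-seed node, in which case rule (4) forces all $d$ of its creation edges to go to same-color nodes, ruling out any cross-color edge; or (b) it is a seed node, in which case rule (3) gives it one preferential-attachment edge from (3)(b) to an arbitrary existing node, together with $d-1$ uniformly random edges from (3)(c) that can only land on \emph{seeds} of previously existing communities. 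From this I extract two structural lemmas that drive everything else: (L1) a non-seed node never appears as the younger endpoint of a cross-community edge; (L2) a seed node, when born, contributes at most one PA edge into any given older community together with at most one edge to that community's unique seed (under the simple-graph convention).

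Given (L1) and (L2), each clause of the theorem is a short bookkeeping argument. For Case (1), $s<t$: every $G_X$-node predates $y_0$, so any $G_X$-to-$G_Y$ edge must be created at the birth of its $G_Y$ endpoint. For a non-seed $y\in G_Y$, (L1) forbids such an edge, proving (1)(i). For the seed $y_0$ born at time $t$, (L2) applied to $y_0$ itself caps the cross-edges at one PA edge into $G_X$ plus at most one seed edge to the unique seed $x_0$, giving the $O(1)$ bound in (1)(ii).

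For Case (2), $s>t$: any non-seed $x\in G_X$ is born strictly after $y_0$; (L1) blocks cross-edges at $x$'s birth, while $y_0$ is blocked from reaching $x$ by (L2) (since $x$ is a non-seed) and every non-seed $y\in G_Y$ is blocked by (L1). This establishes (2)(i). For $x_0$ versus $y_0$, (L2) applied at $x_0$'s birth yields the bound $1$ in (2)(ii). For $x_0$ versus a non-seed $y\in G_Y$, the $d-1$ uniform seed edges of $x_0$ cannot reach a non-seed, so the edge---if it exists---must come from rule (3)(b), which is exactly (2)(iii). Clauses (3) and (4) are then immediate syntheses of Cases (1) and (2), reorganised from $G_Y$'s viewpoint.

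The whole argument is purely combinatorial and deterministic; no probabilistic estimates are required. The only subtle point I anticipate is in (2)(ii), where one must either adopt the simple-graph convention to merge parallel edges---so that multiple rule-(3)(c) picks of $y_0$ collapse to a single edge---or reinterpret the bound $1$ as counting the unique potential source $x_0$ of such an edge. Beyond this clarification, the proof is essentially a matter of organising the bookkeeping of ``who was born first.''
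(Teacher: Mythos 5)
Your proof is correct, and the overall route is the same as the paper's: a direct case analysis of which edge-creation rule in Definition~\ref{def:Securitymodel} can produce a cross-community edge, driven by the observation that the younger endpoint of any edge determines the rule that created it. Your two lemmas (L1) and (L2) are exactly the structural facts the paper leans on implicitly. The one place where you genuinely diverge is (1)(ii): the paper reaches the $O(1)$ bound by invoking Theorem~\ref{thm:length} (the Degree Priority Theorem), a statement quantified ``with probability $1-o(1)$ for a random node,'' whereas your L2 gives the same bound by a direct deterministic count of $y_0$'s at most $d$ birth-time edges. Since Theorem~\ref{thm:injury} is stated without any probability qualifier, your deterministic argument is actually the cleaner fit, and in fact it is what the paper's own proof of Theorem~\ref{thm:length}, Case~2, reduces to ($F(v)[n]\leq d$), so the two routes meet under the surface. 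You also correctly flag the multi-edge ambiguity in (2)(ii), which the paper resolves implicitly by counting the unique candidate neighbor $x_0$ rather than parallel edges; making that convention explicit, as you do, is an improvement. No gap.
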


\smallskip

(1), (2) and (3) of Theorem~\ref{thm:injury}, together with
Theorem~\ref{thm:length}, show furthermore that, a seed node, $v$
say, of $G$ are strong against infections from the collection of all
the communities other than its own community.

Suppose that $X$, $Y$ and $X$ are three homochromatic sets created
at time steps $t_1$, $t_2$ and $t_3$ respectively. Let $x_0$, $y_0$
and $z_0$ be the seed nodes of $X$, $Y$ and $Z$ respectively. It is
possible that $x_0$ infects a non-seed node $y_1$ of $Y$, $y_1$
infects all nodes in $Y$, including $y_0$, and $y_0$ infects a
non-seed node $z_1$ of $Z$. (4) of Theorem~\ref{thm:injury} ensures
that $t_1>t_2>t_3$, and that the edges $(x_0,y_1)$ and $(y_0,z_1)$
must be created by (3) (b) of Definition \ref{def:Securitymodel}.
The key point is that the edges $(x_0,y_1)$ and $(y_0,z_1)$ must be
embedded in a tree of height $O(\log n)$ which we will call the {\it
infection priority tree} (IPT, for short) $T$ of $G$. The infection
priority tree $T$ of $G$ is essentially a graph constructed by the
preferential attachment model with average number of edges $d'=1$,
which almost surely has height $O(\log n)$.

Therefore a targeted or infected strong community triggers at most
$O(\log n)$ many strong communities to be infected, by Theorem
\ref{thm:Securityproperties}, each community has size at most
$O(\log^{a+1}n)$. For any initial set of attacks $S$ of size
polynomial in $\log n$, suppose that every community which is not
strong has already been infected by attacks on $S$ automatically.
Let $K$ be the number of communities that are not strong. Then there
are at most $|S|+K$ strong communities trigger infections in the
infection priority tree $T$. This shows that there are at most
$O((|S|+K)\cdot\log n)$ communities in each of which there is at
least one node is infected by attacks on $S$. In this case, again by
Theorem \ref{thm:Securityproperties}, even if all the nodes in an
infected community are infected, the total number of infected nodes
is a negligible number comparing with the size of the network. This
sketch depends on an estimation of $K$, the number of communities
that are not strong, which will be given in the full proofs in later
sections.

Therefore (1), (2) and (4) of Theorem~\ref{thm:injury} ensure that
the infection of a non-seed node, $v$ say, is always one-way from a
seed node created late than $v$, following an edge in the infection
priority tree. By modulo the injury among the seed nodes, we are
able to show that the infections of non-seed nodes can only proceed
in the infection priority tree of height $O(\log n)$.

Now we fully understand that the combination of
Theorems~\ref{thm:Securityproperties}, \ref{thm:length}, and
\ref{thm:injury} does allow us to prove some security theorems of
the security model. This also explores the following security
principle of networks.

\bigskip

{\bf Security Principle}:

\begin{enumerate}
\item Small community phenomenon (by Theorem
\ref{thm:Securityproperties})
\item The number of seed nodes or hubs is large (by Theorem
\ref{thm:Securityproperties})
\item Almost all seed nodes (or hubs) are strong against infections from the
collection of all their neighbor communities alone (by Theorem
\ref{thm:length})
\item There exists an infection priority tree $T$ of $G$ such that
infection of non-seed nodes of a community from a neighbor community
can only be triggered by seed nodes of the neighbor community
through edges in the infection priority tree $T$ of $G$ (by Theorem
\ref{thm:injury})

\item The infection priority tree $T$ of $G$ has height $O(\log n)$
(to be proved in Subsection \ref{subsec:basiclemma})

\end{enumerate}

\smallskip

\section{Security Theorems}\label{sec:theorems}

In this section, we state the theorems and discuss the relationships
among the theorems.

By applying Theorems \ref{thm:Securityproperties}, \ref{thm:length}
and \ref{thm:injury}, we are able to prove that networks constructed
from the security model are secure against any attacks of small
sizes under both uniform and random threshold cascading failure
models.

For the uniform threshold cascading failure model, we have:

\begin{theorem} \label{thm:cascadeonSecurity} (Uniform threshold security theorem)
 Let $G$ be a graph constructed from
$\mathcal{S}(n,a,d)$ with $p_i=\log^{-a} i$ for homophyly exponent
$a>4$ and for $d\geq 4$. Let the threshold parameter
$\phi=O\left(\frac{1}{\log^b n}\right)$ for
$b=\frac{a}{2}-2-\epsilon$ for arbitrarily small $\epsilon>0$.

Then with probability $1-o(1)$ (over the construction of $G$), there
is no initial set of poly-logarithmic size which causes a cascading
failure set of non-negligible size. Precisely, we have that for any
constant $c>0$,
$$\Pr_{{}G\in_{\rm R}\mathcal{S}(n,a,d),\ G=(V,E)}\left[\forall
S\subseteq V,\ |S|=\lceil\log^c n\rceil,\ |{\rm
inf}_G^\phi(S)|=o(n)\right]=1-o(1),$$

\noindent where ${\rm inf}_G^{\phi}(S)$ is the infection set of $S$
in $G$ with uniform threshold $\phi$.
\end{theorem}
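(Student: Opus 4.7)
The plan is to execute the sketch given immediately after Theorem~\ref{thm:injury}, making the community-level counting rigorous in the uniform-threshold regime $\phi=O(1/\log^b n)$ with $b=a/2-2-\epsilon$. The argument runs at the community level: bound the number of communities that ever contain an infected vertex, then multiply by the uniform community-size bound $O(\log^{a+1}n)$ from Theorem~\ref{thm:Securityproperties}(1)(ii). Throughout I would condition on the $1-o(1)$-probability event that the Fundamental, Degree Priority and Infection--Inclusion Theorems all hold, together with the $O(\log n)$ height bound of the infection priority tree $T$ alluded to in Subsection~\ref{subsec:basiclemma}; a union bound absorbs the $o(1)$ failure probabilities.

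The first step is to classify communities as $\phi$-strong or $\phi$-weak. Call $G_X$ \emph{$\phi$-strong} when, even if every cross-community neighbor of the seed $x_0$ is simultaneously infected, the fraction of $x_0$'s infected neighbors stays strictly below $\phi$. By Theorem~\ref{thm:length} the cross-community count is at most $(l(x_0)-1)\cdot O(1)=O(\log n)$, so strength reduces to $\phi\cdot d_1(x_0)>(1-\phi)\sum_{j\ge 2}d_j(x_0)$, a condition driven entirely by the seed's first degree $d_1(x_0)$. Theorem~\ref{thm:length}(4) alone gives only the almost-sure lower bound $\Omega(\log^{(a+1)/4}n)$, which suffices for $b<(a-3)/4$ but not for the full range $b=a/2-2-\epsilon$; I would strengthen it to an ``almost-all-seeds'' bound of the form $d_1(x_0)\ge\log^{a/2-\epsilon'}n$ (see the obstacle paragraph). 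Combined with the choice of $b$, this forces the number $K$ of $\phi$-weak communities to satisfy $K=o(n/\log^{a+2}n)$.

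Next I would confine propagation to $T$ and count. Let the \emph{trigger set} consist of all communities meeting $S$ together with all $\phi$-weak communities; it has size at most $|S|+K$. For any $\phi$-strong community $G_Y$ the seed $y_0$ cannot flip under cross-community infections alone, so an infection in $Y$ must first reach some non-seed $y\in Y$; by Theorem~\ref{thm:injury}(4) such a non-seed is injured only along a (3)(b)-edge of $G$, i.e., an edge of $T$. Consequently every inter-community jump between $\phi$-strong communities is charged to one edge of $T$, and since $T$ has height $O(\log n)$, from any trigger community the cascade can reach at most $O(\log n)$ further communities by chasing $T$-parents. The number of communities containing any infected vertex is therefore at most $(|S|+K)\cdot O(\log n)$, and multiplying by the community-size bound yields
\begin{equation*}
|\mathrm{inf}^{\phi}_G(S)|\;\le\;(|S|+K)\cdot O(\log n)\cdot O(\log^{a+1}n)=o(n),
\end{equation*}
after substituting $|S|=\log^c n$ and $K=o(n/\log^{a+2}n)$.

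The real obstacle is the first step. Theorem~\ref{thm:length}(4) as stated is insufficient once $b\ge(a-3)/4$ (roughly $a\ge 5$), so it must be tightened to control the \emph{tail} of the seed first-degree distribution, not just the worst case. I would do this through a direct analysis of the within-color preferential-attachment dynamics: a seed of color $c$ created at time $t_0$ is the oldest node of its color class, and its first degree stochastically dominates the root degree of a Barab\'asi--Albert-type tree run on $\Theta(\log^a n)$ arrivals, so standard martingale/Azuma concentration gives $d_1(x_0)\ge\log^{a/2-\epsilon'}n$ outside an exceptional set of probability $o(1/\log^{2}n)$ per seed; through the total-seed count $O(n/\log^a n)$ from Theorem~\ref{thm:Securityproperties}(1)(i) this translates into the bound $K=o(n/\log^{a+2}n)$ required above. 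A secondary technical piece is the $O(\log n)$ height of $T$, which reduces to the classical height bound for random recursive / out-degree-one preferential attachment trees; once these two inputs are secured the propagation and counting are routine.
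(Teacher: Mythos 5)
Your proposal is correct and follows essentially the same structure as the paper's proof: classify communities as strong or vulnerable, bound the vulnerable count $K$, confine inter-community propagation among strong communities to edges of the height-$O(\log n)$ infection priority tree, and close with $|{\rm inf}^\phi_G(S)|\le(|S|+K)\cdot O(\log n)\cdot O(\log^{a+1}n)=o(n)$. For the step you flagged as the obstacle, the paper does not tighten Theorem~\ref{thm:length}(4) directly; it instead fixes a time cutoff $T_0=(1-\delta)n$ with $\delta=\log^{-(2+\epsilon)}n$, shows (Lemmas~\ref{lem:homosetsizebeforeT0}--\ref{lem:seeddegree}) that every community born before $T_0$ has size $\Omega(\log^{a-2-O(\epsilon)}n)$ and hence seed first-degree $\Omega(\log^{a/2-1-O(\epsilon)}n)$ --- weaker than your proposed $\log^{a/2-\epsilon'}n$ but still large enough to beat the $O(\log n)$ cross-community degree divided by $\phi$ --- and writes off all $O(n/\log^{a+2+\epsilon}n)$ post-$T_0$ communities as vulnerable wholesale (Lemma~\ref{lem:vulnumber}), which yields exactly your $K=o(n/\log^{a+2}n)$ without needing a per-seed tail bound.
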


By Theorem~\ref{thm:cascadeonSecurity}, if $a>4$, and $d\geq 4$,
then for $\phi=O(1/\sqrt{\log^b n})$, networks constructed by the
security model $\mathcal{S}(n,a,d)$ are $\phi$-secure. Here $\phi$
is arbitrarily close to $0$, i.e., $\phi=o(1)$. Therefore, by
Definition~\ref{def:u-security}, for $a>4$, and $d\geq 4$, networks
in $\mathcal{S}(n,a,d)$ are secure under the uniform threshold
cascading failure model of attacks.

For the random threshold cascading failure model, each node $v$
picks randomly, uniformly and independently a threshold $\rho_v$
from $1,2,\ldots,d_v$.  Let ${\rm inf}_G^{\rm R}(S)$ be the
infection set of attacks on $S$ in $G$. We show that graphs
generated by $\mathcal{S}(n,a,d)$ are secure.

\begin{theorem} \label{thm:rancascadeonSecurity}
(Random threshold security theorem ) Let $a>6$ be the homophyly
exponent, and $d\geq 4$. Suppose that $G$ is a graph generated from
$\mathcal{S}(n,a,d)$.

Then with probability $1-o(1)$ (over the construction of $G$), there
is no initial set of poly-logarithmic size which causes a cascading
failure set of non-negligible size. Formally, we have that for any
constant $c>0$,
$$\Pr_{{}G\in_{\rm R}\mathcal{S}(n,a, d), \ G=(V,E)}\left[\forall
S\subseteq V,|S|=\lceil\log^c n\rceil,|{\rm inf}_G^{\rm
R}(S)|=o(n)\right]=1-o(1).$$
\end{theorem}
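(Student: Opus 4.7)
The plan is to carry out the scheme sketched after Theorem~\ref{thm:injury}, specialised to the random threshold model. All estimates take place on the $1-o(1)$ good event of Theorems~\ref{thm:Securityproperties}--\ref{thm:injury}, so I may assume that every community has size $O(\log^{a+1}n)$, that the number of seeds is $\Theta(n/\log^{a}n)$, that every seed has first degree $\Omega(\log^{(a+1)/4}n)$, length $O(\log n)$ and higher degrees $O(1)$, and that infections between communities are forced along the Infection Priority Tree $T$, which has height $O(\log n)$.

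The new ingredient for random thresholds is a probabilistic analysis of \emph{weak seeds}, i.e.\ seeds $v$ whose own draw $r_v$ lets them be infected by their external neighbours alone. For such $v$, the external degree satisfies $D_{\rm ext}(v)\le(\ell(v)-1)\cdot O(1)=O(\log n)$ while $d_v\ge\Omega(\log^{(a+1)/4}n)$, so a uniform draw satisfies $\Pr[r_v\le D_{\rm ext}(v)]=D_{\rm ext}(v)/d_v=O(\log^{-(a-3)/4}n)$. Since the thresholds $\{r_v\}$ are independent of one another and of the realisation of $G$, Markov's inequality applied to the sum of these indicators over the $\Theta(n/\log^{a}n)$ seeds shows that, for $a>6$, the number $K$ of weak seeds satisfies $K=o(n/\log^{a+2}n)$ almost surely.

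Now the counting proceeds as in the sketch. The only communities in which any node can become infected are (i) one of the at most $|S|$ communities containing a member of the attack set, (ii) one of the $K$ weak communities that collapse on their own, or (iii) a community reached from (i) or (ii) along a descendant path in $T$, by Theorem~\ref{thm:injury}(4). Since $T$ has height $O(\log n)$, the total set of touched communities has size at most $(|S|+K)\cdot O(\log n)$, and each community contributes at most $O(\log^{a+1}n)$ infected nodes, for a grand total of $(|S|+K)\cdot O(\log^{a+2}n)=o(n)$. The bound depends on $S$ only through $|S|$, so it is automatically uniform over the $\binom{n}{\lceil\log^{c}n\rceil}$ choices of $S$.

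The main obstacle I expect is the clean decoupling of the two sources of randomness (the construction of $G$ and the thresholds $r_v$): the event ``$v$ is a weak seed'' mixes the structural quantities $D_1(v),D_{\rm ext}(v)$ with the threshold $r_v$, so one must first condition on the high-probability structural events of Theorems~\ref{thm:Securityproperties} and~\ref{thm:length} and only afterwards invoke independence and Markov/Chernoff on the $r_v$'s. A secondary point is verifying the one-way character of infection along $T$ under random thresholds: one has to check that whenever a seed is infected, the triggering of its non-seed descendant along the $(3)(b)$-edge succeeds with sufficient probability using the descendant's independent threshold draw, so that $T$ genuinely majorises the cross-community cascade. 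Once these two points are settled, the final counting is routine.
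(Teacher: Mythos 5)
Your architecture matches the paper's: condition on the good structural events, estimate the number of \emph{vulnerable} (your ``weak'') communities $K$ by exploiting the independence of the random thresholds from the graph, then feed $|S|+K$ into the Infection Priority Tree counting argument, with each of the $O((|S|+K)\log n)$ touched communities contributing at most $O(\log^{a+1}n)$ nodes. The overall plan is correct and is the paper's plan. The quantitative step, however, does not close for the stated exponent range $a>6$.

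Concretely, you take the global seed-degree lower bound $d_v=\Omega(\log^{(a+1)/4}n)$ from Theorem~\ref{thm:length}(4) and the generic bound $D_{\rm ext}(v)=O(\log n)$, getting $\Pr[v\text{ weak}]=O(\log^{-(a-3)/4}n)$ and hence $E[K]=O(n/\log^{a+(a-3)/4}n)$. But to make the final count $(|S|+K)\cdot O(\log^{a+2}n)=o(n)$ you need $K=o(n/\log^{a+2}n)$, which forces $a+(a-3)/4>a+2$, i.e.\ $a>11$. Your asserted $K=o(n/\log^{a+2}n)$ is simply false for $a\in(6,11]$: for $a=7$, $E[K]=O(n/\log^{8}n)$ while you need $o(n/\log^{9}n)$, and the end count $K\cdot\log^{a+2}n$ would be $\Theta(n\log n)=\omega(n)$. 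The paper avoids this by not using the coarse $\Omega(\log^{(a+1)/4}n)$ bound: it splits the seeds into three time windows $(\,\le T_0'\,)$, $(T_0',T_0]$, $(T_0,n]$, re-derives sharper degree lower bounds $\Omega(\log^{a/2}n)$ and $\Omega(\log^{b_1/2}n)$ (with $b_1=a/2+1$) for the first two windows, and — crucially — shows that all but $O(\log n)$ seeds in the middle window have external degree $O(1)$ rather than $O(\log n)$. These two refinements push the vulnerable-community count down to $O(n/\log^{a+b_0}n)$ with $b_0=a/2-1$, which does beat $\log^{a+2}n$ once $a>6$. Without the time-windowed re-derivation, your estimate is an order of $\log$ too weak by several powers, and you would have to strengthen the hypothesis to roughly $a>11$.

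A secondary gap: you do not account for the late-born communities (created after some $T_0$), whose seeds need not satisfy the degree lower bound at all — Theorem~\ref{thm:length}(4) is an ``almost all'' statement, not a ``for all seeds'' statement. The paper counts these separately (Lemma~\ref{lem:vulnumber}); your sketch silently treats every seed as satisfying the lower bound, which overcounts the strong communities.
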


Theorems~\ref{thm:cascadeonSecurity} and
~\ref{thm:rancascadeonSecurity} show that for appropriately chosen
parameters, networks constructed from the security model are
provably secure for any attacks of small sizes under both uniform
and random threshold cascading failure models. By
Definitions~\ref{def:u-security}, \ref{def:r-security},
~\ref{def:security}, and by Theorems~\ref{thm:cascadeonSecurity} and
~\ref{thm:rancascadeonSecurity}, the security model in
Definition~\ref{def:Securitymodel} is secure.

The preferential attachment model was proposed to capture real
networks. It has become a classic model of networks. We use
$\mathcal{P}(n,d)$ to denote the set of random graphs of $n$ nodes
constructed from the PA model with average number of edges $d$.
Numerous experiments have shown that networks of the preferential
attachment model are insecure, see for instance
Figure~\ref{fig:_cascading_3models_N=10000_d=10_a=1.5}. Therefore
the best possible result we could look for would be the robustness
results for the PA model. People may take for granted that networks
of the PA model are robust, although there was no definition for
robustness in the literature. Here we have rigorous definition of
robustness of a model of networks, given in
Definitions~\ref{def:u-robustness}, \ref{def:r-robustness}, and
\ref{def:robustness}. This poses a fundamental question: Are
networks of the PA model really robust?

 We show
that, for large enough edge parameter $d$, for uniform threshold
cascading failure model, if the threshold is slightly less than
$1/d$, then just one randomly picked initial node is sufficient to
infect a significant fraction of the whole network with high
probability.

\begin{theorem} \label{thm:cascadeonPA} (Global cascading of a single
node in PA) For any $\varepsilon>0$, there exists a positive integer
$d_\varepsilon$ such that for any integer $d\geq d_\varepsilon$, if
$G=(V,E)$ is constructed from $\mathcal{P}(n,d)$, then with
probability $1-o(1)$ (over the construction of $G$), the following
inequality holds:
\[\Pr\limits_{v\in_{\rm R} V}[{\rm inf}_{G}^\phi(\{v\})=V]\geq\frac{2}{3}\left(1-\frac{1}{(1+\varepsilon)^2}\right),\]
where $\phi=\frac{1}{(1+\varepsilon)d}$.
\end{theorem}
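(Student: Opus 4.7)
The plan is to identify a large set of ``vulnerable'' nodes (those that need only one infected neighbor to flip), show that they form a giant connected subgraph in $G$, and then argue that once this subgraph is infected the cascade engulfs the remaining high-degree nodes. Call a node $v \in V$ \emph{vulnerable} if $d_v < (1+\varepsilon)d$, so that $\phi d_v < 1$ and a single infected neighbor suffices to infect $v$. Using the well-known limiting degree distribution of the PA model, $\Pr[d_v = k] = \frac{2d(d+1)}{k(k+1)(k+2)}$ for $k \geq d$, a telescoping sum gives $\Pr[d_v \geq (1+\varepsilon)d] \to \frac{1}{(1+\varepsilon)^2}$ as $d \to \infty$. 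Combined with standard martingale concentration along the construction process, the fraction of vulnerable nodes in $G$ is $\left(1 - \frac{1}{(1+\varepsilon)^2}\right)(1 - o(1))$ with probability $1-o(1)$.

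Next I would show that for $d$ at least a sufficiently large threshold $d_\varepsilon$, the induced subgraph on the vulnerable set $V_{\mathrm{vul}}$ has a unique giant connected component $C$ with $|C| = (1 - o(1))|V_{\mathrm{vul}}|$. Vulnerable nodes are, heuristically, those created after time $n/(1+\varepsilon)^2$; each such newborn contributes $d$ preferentially attached edges, and a constant fraction of those edges land inside $V_{\mathrm{vul}}$ because vulnerable nodes account for $\Omega(n)$ of the total degree in $G$. An exploration/branching-process argument in the spirit of giant-component proofs for random graphs, applied to $G[V_{\mathrm{vul}}]$, shows that the expected forward-neighbor count in this exploration exceeds $1$ once $d$ is large, producing a giant component that occupies all but a vanishing fraction of $V_{\mathrm{vul}}$, and in particular at least a $\frac{2}{3}$ fraction of it.

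Finally, starting the cascade from any $v \in C$ floods all of $C$ by the one-hop propagation rule. To complete the cascade, consider any non-vulnerable $u$ with degree $d_u \geq (1+\varepsilon)d$; its threshold is $\lceil \phi d_u \rceil = \lceil d_u/((1+\varepsilon)d) \rceil$. Since $|C| = \Omega(n)$ and the neighbors of $u$ in PA are dominated by later-born (hence typically vulnerable) nodes, a Chernoff/union-bound estimate shows that the number of $u$'s neighbors lying in $C$ exceeds its threshold with probability $1-o(1/n)$, uniformly in $u$; a union bound over $u \in V \setminus V_{\mathrm{vul}}$ then yields $\mathrm{inf}_G^\phi(\{v\}) = V$ for every $v \in C$. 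The probability over a uniformly random $v \in V$ is thus $|C|/|V| \geq \frac{2}{3}(1-(1+\varepsilon)^{-2})$, as required. The main obstacle is the giant-vulnerable-component estimate: the vulnerability status of a node and the edges incident to it are correlated through the preferential-attachment dynamics, and to carry out a clean exploration one must decouple the degree-threshold event from the edge-formation process, typically by conditioning on creation times and exploiting the concentration of individual degrees around $d\sqrt{n/t}$.
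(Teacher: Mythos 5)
Your strategy --- identify the vulnerable nodes $\{v : d_v < (1+\varepsilon)d\}$, show they support a giant connected subgraph, then flood the remainder --- matches the paper's outline for the first two stages, though with different bookkeeping. You estimate the vulnerable fraction via the telescoping tail of the limiting degree law $\Pr[d_v = k] \propto k^{-3}$; the paper instead uses the expected-degree estimate $E[d_{v_i}] = d\sqrt{n/i}$ to locate the vulnerable nodes as the late-born ones. Both are valid. For the giant component, you gesture at a branching-process exploration of $G[V_{\mathrm{vul}}]$; the paper proves a quantitative lower bound ($|C| \geq \frac{2}{3}(1-(1+\varepsilon)^{-2})n$) by a supermartingale drift argument on the number of connected components as the vulnerable nodes arrive, which is not a standard branching-process reduction because the decoupling you flag at the end is exactly the hard part.

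The genuine gap is in the final step. You propose a one-shot argument: for each non-vulnerable $u$, show $|N(u)\cap C| \geq \lceil \phi d_u\rceil$ with probability $1 - o(1/n)$ and take a union bound over $u \in V\setminus V_{\mathrm{vul}}$. This fails. There are $\Theta(n)$ non-vulnerable nodes whose degree is a constant just above $(1+\varepsilon)d$; for such a $u$, the number of its $d_u = O(d)$ neighbors that land in $C$ is a sum of $O(d)$ indicator variables, so Chernoff gives a failure probability of $\exp(-\Omega(d))$, a constant in $n$ rather than $o(1/n)$. The union bound over $\Theta(n)$ such nodes therefore does not close, and in fact with high probability there will be some non-vulnerable nodes whose neighborhoods miss $C$ entirely at time zero. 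Your argument also quietly needs $|C| = (1-o(1))|V_{\mathrm{vul}}|$, which is stronger than what the paper establishes. The paper avoids both issues by replacing the one-shot union bound with an iterative expansion argument: once $\mathrm{vol}(S) = \Omega(nd)$, the Mihail--Papadimitriou--Saberi conductance bound $\Phi(G) = \Omega(1/\sqrt{d})$ guarantees $E(\overline{S},S)/\mathrm{vol}(\overline{S}) \geq \phi$, so by averaging some single node of $\overline{S}$ has at least a $\phi$ fraction of its neighbors infected; add it to $S$ and repeat until $\overline{S}$ is empty. This only requires one new infectable node per round, not all of them simultaneously, which is why it survives the constant-degree obstacle that defeats the union bound.
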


Therefore if $\log n$ initial nodes are randomly picked, then the
whole graph $G$ will be infected with probability $1-o(1)$.

\begin{theorem} \label{cor:PApositive}(Global cascading theorem of PA)
For any $\varepsilon>0$, there exists a positive integer
$d_\varepsilon$ such that for any integer $d\geq d_\varepsilon$, for
threshold parameter $\phi=\frac{1}{(1+\varepsilon)d}$,

$$\Pr_{{}S\subset_{\rm R} V,\ |S|=\log n}[{\rm
inf}_{G}^\phi(\{S\})=V]=1-o(1).$$
\end{theorem}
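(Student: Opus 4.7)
The plan is to derive Theorem~\ref{cor:PApositive} as a direct consequence of Theorem~\ref{thm:cascadeonPA}, exploiting monotonicity of the infection process and a standard hitting-set argument.

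First, I would fix $\varepsilon>0$, let $d_\varepsilon$ be the integer provided by Theorem~\ref{thm:cascadeonPA}, and set $c_\varepsilon=\frac{2}{3}\bigl(1-\frac{1}{(1+\varepsilon)^2}\bigr)$, which is a positive constant depending only on $\varepsilon$. Theorem~\ref{thm:cascadeonPA} then tells us that, with probability $1-o(1)$ over the construction of $G=(V,E)\in_{\rm R}\mathcal{P}(n,d)$, the set
\[
V^\star=\{v\in V:\ {\rm inf}_G^\phi(\{v\})=V\}
\]
satisfies $|V^\star|\geq c_\varepsilon\,n$ with $\phi=\frac{1}{(1+\varepsilon)d}$. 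I would condition on this event throughout.

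Second, I would use the key monotonicity observation for the uniform-threshold cascading failure model: since a vertex becomes infected as soon as a $\phi$-fraction of its neighbours are infected, adding extra seeds can only enlarge the infection set. Formally, for any $S\supseteq S'$ one has ${\rm inf}_G^\phi(S)\supseteq {\rm inf}_G^\phi(S')$, which is immediate from the recursive Definition~\ref{def:cascading}. Consequently, if $S\cap V^\star\neq\emptyset$, picking any $v\in S\cap V^\star$ gives ${\rm inf}_G^\phi(S)\supseteq {\rm inf}_G^\phi(\{v\})=V$, hence ${\rm inf}_G^\phi(S)=V$.

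Third, I would estimate the probability that a uniformly random $S\subseteq V$ with $|S|=\log n$ misses $V^\star$. This is a standard urn computation:
\[
\Pr_{S}[S\cap V^\star=\emptyset]\;=\;\frac{\binom{n-|V^\star|}{\log n}}{\binom{n}{\log n}}\;\leq\;\prod_{i=0}^{\lceil\log n\rceil-1}\frac{n-|V^\star|-i}{n-i}\;\leq\;(1-c_\varepsilon)^{\log n}\;=\;n^{-\Omega(1)},
\]
which is $o(1)$. Combining with the monotonicity step yields $\Pr_S[{\rm inf}_G^\phi(S)=V]\geq 1-n^{-\Omega(1)}=1-o(1)$ conditional on the high-probability event for $G$. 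Integrating over the construction of $G$ (whose failure contributes another $o(1)$ term) gives the unconditional bound claimed in Theorem~\ref{cor:PApositive}.

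The only subtle point, and the one I would spend care on, is the sampling-without-replacement step: if ``$|S|=\log n$'' is interpreted as a uniform random subset of exactly that size rather than $\log n$ independent samples, one must verify the product bound above and confirm that $\log n\ll n$ makes the ratios $\frac{n-|V^\star|-i}{n-i}$ essentially $(1-c_\varepsilon)$ for all $i<\log n$. This is routine, and apart from that the argument is entirely a corollary of Theorem~\ref{thm:cascadeonPA} together with monotonicity; no further structural property of the PA model is needed here.
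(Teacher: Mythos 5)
Your proof is correct and takes essentially the same approach as the paper, whose entire proof of this corollary is the single line ``By Theorem~\ref{thm:cascadeonPA}.'' You have simply supplied the routine but omitted details of that deduction: monotonicity of the threshold cascade in the seed set, and the urn/hitting-set estimate showing that a random set of size $\log n$ intersects a constant-fraction subset $V^\star$ with probability $1-n^{-\Omega(1)}$.
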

\begin{proof} By Theorem \ref{thm:cascadeonPA}.
\end{proof}

Consequently, $\mathcal{P}(n,d)$ is not $\phi$-robust for all
$\phi\leq \frac{1}{(1+\varepsilon)d}$. By
Definitions~\ref{def:u-robustness} and \ref{def:robustness}, and by
Theorem~\ref{cor:PApositive}, the preferential attachment model is
not robust. In fact, each of the nontrivial networks constructed
from the PA model is non-robust. This result shows that if real
networks truthfully follow the PA model, then the networks would be
not only insecure, but also unavoidably non-robust. This makes the
situation even worse in practical applications, because, a few or
even one random error may cause a global cascading failure of the
whole network.

On the other hand, we also show that if the threshold is larger than
$1/d$, then with probability $1-o(1)$, $o(\sqrt{n})$ randomly picked
initially infected nodes are insufficient to infect even one more
node, and the PA model is robust in this case. In fact, we are able
to prove a stronger result that holds for arbitrarily given simple
(or almost simple) graphs \footnote{A simple graph is a graph having
no multi-edge and self-loop.}.

\begin{theorem} \label{thm:negative} (Robustness theorem of graphs)
Given a simple graph $G=(V,E)$ whose nodes have minimum degree $d$.
Let $n=|V|$ and $d$ be a constant independent of $n$. Let
$\phi>\frac{l}{d}$, where $l$ is an integer from the interval
$[1,d-1]$. Let $S\subseteq V$ be a randomly picked subset of size
$k=o(n^{\frac{l}{l+1}})$. Then
\[\Pr\limits_{S\subset_{\rm R} V}[{\rm inf}_{G}^\phi(S)=S]=1-o(1).\]
\end{theorem}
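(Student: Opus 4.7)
The plan is to argue that, with probability $1-o(1)$ over the random choice of $S$, no vertex outside $S$ becomes infected in the very first round of the cascade. Since the cascade is deterministic and monotone in the set of infected vertices, failure to infect a new node in round one implies the infected set never grows, so ${\rm inf}_G^\phi(S)=S$; thus it suffices to control a single one-round event.

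First I would translate the hypothesis into a per-vertex arithmetic fact. A vertex $v$ of degree $d_v$ becomes infected in round one iff $|N(v)\cap S|\geq \lceil \phi d_v\rceil$. The bound $\phi>l/d$ combined with $d_v\geq d$ gives $\phi d_v > l d_v/d\geq l$, so the required threshold is at least $l+1$ for every $v$, and more importantly it scales linearly in $d_v$. This linear scaling is what lets the argument absorb arbitrarily high-degree vertices.

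Next I would union-bound over vertices. Let $Y = |\{v\in V : |N(v)\cap S|\geq \lceil \phi d_v\rceil\}|$; it suffices to show $E[Y]=o(1)$ and invoke Markov. For each fixed $v$ the overlap $|N(v)\cap S|$ is hypergeometric with mean $\mu_v = d_v k/n$, and we are asking for the probability of exceeding $\phi d_v$, which beats the mean by the multiplicative factor $\phi n/k = \omega(1)$. I would invoke the standard multiplicative Chernoff tail for the hypergeometric distribution (which holds by Hoeffding's domination of the hypergeometric upper tail by the binomial), yielding
\[
\Pr\bigl[\,|N(v)\cap S|\geq \lceil \phi d_v\rceil\,\bigr]\;\leq\;\left(\frac{e\mu_v}{\phi d_v}\right)^{\lceil \phi d_v\rceil}\;\leq\;\left(\frac{ek}{\phi n}\right)^{l+1},
\]
where the last step uses $\mu_v/(\phi d_v)=k/(\phi n)$, that $ek/(\phi n)<1$ for $k=o(n)$, and that $\lceil \phi d_v\rceil\geq l+1$. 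Summing over the $n$ vertices yields $E[Y]\leq (e/\phi)^{l+1}\,k^{l+1}/n^{l}$, which is $o(1)$ exactly under the hypothesis $k=o(n^{l/(l+1)})$.

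The hard part will be getting a tail bound that is uniform across the degree sequence. Only a minimum-degree assumption is available, so $d_v$ may range up to $n-1$, and a naive bound of the form $\binom{d_v}{l+1}(k/n)^{l+1}$ would blow up for high-degree vertices. The key observation resolving this is that the infection threshold $\lceil \phi d_v\rceil$ itself grows linearly with $d_v$: a vertex of high degree requires proportionally more of its neighbors in $S$ to ignite, and the multiplicative-form Chernoff bound collapses all these tails to the clean, degree-independent estimate $(ek/(\phi n))^{l+1}$. Once this uniform per-vertex bound is in hand the rest is a direct first-moment calculation followed by Markov.
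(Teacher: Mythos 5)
Your argument is correct and follows essentially the same plan as the paper's: reduce to the one-round event by monotonicity of the cascade, bound the per-vertex infection probability by a hypergeometric tail estimate of order $(k/n)^{l+1}$, and sum (first moment / union bound) over the $n$ vertices, with $k=o(n^{l/(l+1)})$ forcing the total to $o(1)$.

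The one place you diverge, and arguably streamline, is in how the hypergeometric tail is controlled across the degree sequence. The paper first disposes of high-degree vertices by a side remark (a vertex with $D\gg k$ cannot ignite even if $S\subseteq N(v)$, so one may assume $D=O(k)$), and then bounds the tail directly via $\binom{D}{i}\leq (De/i)^i$ together with the hypergeometric ratio, arriving at terms of the form $\left(\frac{dek}{l(n-D)}\right)^i$ with $i\geq l+1$. You instead note that $\mu_v/(\phi d_v)=k/(\phi n)$ is degree-independent, so a single multiplicative Chernoff bound (via Hoeffding's binomial domination of the hypergeometric upper tail) gives the uniform per-vertex estimate $(ek/(\phi n))^{l+1}$ with no case split. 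Both routes deliver the required $o(n^{-1})$ bound per vertex; yours packages the high-degree case more cleanly. One minor point of hygiene: the relevant event is that some vertex $v\notin S$ ignites, so $E[Y]$ should in principle be restricted to $V\setminus S$, but summing over all of $V$ only overestimates $E[Y]$, so the Markov step is unaffected.
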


By using this, we have:

\begin{theorem} \label{cor:PAnegative} (Robustness theorem of PA)
For any integer $d\geq 2$ and $\phi>\frac{1}{d}$, $\mathcal{P}(n,d)$
is $\phi$-robust.
\end{theorem}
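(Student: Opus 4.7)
The plan is to derive this corollary immediately from Theorem \ref{thm:negative} by checking that a random $G\in\mathcal{P}(n,d)$ satisfies its hypotheses with the choice $l=1$. First, I observe that in the PA construction every vertex born at step $t>0$ is incident to exactly $d$ newly created edges, so the minimum degree of $G$ is at least $d$ (if the initial graph $G_0$ is chosen so that its few vertices already have degree $\geq d$, or by absorbing those finitely many vertices into the $o(n)$ slack). The PA model in principle produces a multigraph, but a standard second-moment computation shows that the expected number of self-loops and parallel edges in $G$ is $O(\log n)$, so with probability $1-o(1)$ one obtains an almost simple graph; cleaning up the $O(\log n)$ offending edges can lower the effective degree of at most $o(n)$ vertices, a loss that is harmless for the $o(n)$ conclusion we are after.

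Next, set $l=1$ in Theorem \ref{thm:negative}. The hypothesis $\phi>l/d=1/d$ is precisely our assumption, and a random set $S$ of size $\lceil\log^c n\rceil$ satisfies $|S|=o(n^{1/2})=o(n^{l/(l+1)})$ for every constant $c>0$. Applying Theorem \ref{thm:negative} inside the $1-o(1)$ event that the construction produces an almost simple graph of minimum degree $d$, we obtain
\[\Pr_{S\subset_{\rm R} V,\,|S|=\lceil\log^c n\rceil}\bigl[{\rm inf}_G^\phi(S)=S\bigr]=1-o(1),\]
whence $|{\rm inf}_G^\phi(S)|=|S|=O(\log^c n)=o(n)$ with probability $1-o(1)$. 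A union bound over the two $1-o(1)$ events (the construction of $G$ and the choice of $S$) supplies exactly the requirement of Definition \ref{def:u-robustness}, showing that $\mathcal{P}(n,d)$ is $\phi$-robust for every $\phi>1/d$.

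The main obstacle, given Theorem \ref{thm:negative}, is only the mismatch between its "simple graph of minimum degree $d$" hypothesis and the genuine multigraph that PA produces. Both conventions for the cascading rule (counting neighbors with multiplicity or as a set) are in our favor: after deleting self-loops and identifying parallel edges an infected vertex faces no greater infection pressure than before, so the conclusion ${\rm inf}_G^\phi(S)=S$ of Theorem \ref{thm:negative} transfers to the multigraph. Hence the corollary follows either by applying Theorem \ref{thm:negative} to the simplified graph or by a line-by-line adaptation of its proof to the multigraph case, with no new ingredients beyond the degree-priority and infection-inclusion machinery already established.
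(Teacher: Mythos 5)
Your proof takes essentially the same route as the paper: invoke Theorem~\ref{thm:negative} with $l=1$ (noting $\phi>1/d$ and $|S|=\mathrm{poly}(\log n)=o(n^{1/2})$), and dispose of the multigraph issue by observing that $\mathcal{P}(n,d)$ has only $O(\log n)$ self-loops and parallel edges with high probability, so these are negligible. The paper is terser---it simply notes that a random set of size $n^\lambda$ ($\lambda\le 1/2$) avoids the offending vertices with probability $1-o(1)$ and declares the corollary a straightforward consequence---whereas you spell out the cleanup more explicitly, but the underlying argument is the same.
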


\begin{proof}
Since the number of multi-edges and self-loops in $\mathcal{P}(n,d)$
is at most $O(\log n/n)$ (with probability almost $1$), the
probability that, in randomly picked $n^\lambda$ ($\lambda\leq 1/2$)
nodes, there is a node associating to some multi-edge or self-loop
is upper bounded by $o(1)$. It is easily observed that the result is
a straightforward corollary of Theorem \ref{thm:negative} in the
case of $l=1$.
\end{proof}

Theorem~\ref{cor:PAnegative} implies that for a network constructed
from the PA model, if every node has a threshold $\geq\phi$ for some
large constant $\phi$, then the network is robust against random
errors ( of small sizes).

By Theorems~ \ref{cor:PApositive} and ~\ref{cor:PAnegative}, the
value $1/d$ is a key {\it threshold} for the robustness of the PA
model. The two theorems characterize the robustness of networks of
the PA model under uniform threshold cascading failure model,
leaving open for the case of $\phi=1/d$. This clarifies the
experimental results of robustness of networks of the PA model.

The remaining sections are devoted to proofs of
Theorems~\ref{thm:Securityproperties}, \ref{thm:length},
\ref{thm:injury}, ~\ref{thm:cascadeonSecurity},
~\ref{thm:rancascadeonSecurity}, ~\ref{thm:cascadeonPA}
and~\ref{thm:negative}. In section~ \ref{sec:pro}, we prove Theorem
~\ref{thm:Securityproperties}. In Section~\ref{sec:pcomb}, we prove
Theorems~\ref{thm:length} and \ref{thm:injury}. In Section
~\ref{sec:cascadeonSecurity}, we prove
Theorems~\ref{thm:cascadeonSecurity}, and
~\ref{thm:rancascadeonSecurity} by using
Theorems~\ref{thm:Securityproperties}, ~\ref{thm:length} and
\ref{thm:injury}. In Section~ \ref{sec:cascade in PA}, we prove the
threshold theorem of robustness of networks of the PA model,
consisting of Theorems ~\ref{thm:cascadeonPA} and
~\ref{thm:negative}. In Section~\ref{sec:over}, we extend the
security model to high dimensions so that a node has $k$ colors for
$k>1$. In this case, communities in the network are overlapping. We
show that overlapping communities undermine security of networks. In
Section \ref{sec:conc}, we summarize the conclusions and discuss
some future directions.

\section{The Fundamental Theorem of the Security Model}\label{sec:pro}

In this section, we prove Theorem ~\ref{thm:Securityproperties}.
Before proving the theorem, we state the Chernoff bound below which
will be frequently used in our proofs.

\begin{lemma} (Chernoff bound, \cite{C81}) Let $X_1,\ldots,X_n$ be independent random variables
with $\Pr[X_i=1]=p_i$ and $\Pr[X_i=0]=1-p_i$. Denote the sum by
$X=\sum_{i=1}^n X_i$ with expectation $E(X)=\sum_{i=1}^n p_i$. Then
we have
$$\Pr[X\leq E(X)-\lambda]\leq \exp\left(-\frac{\lambda^2}{2E(X)}\right),$$
$$\Pr[X\geq E(X)+\lambda]\leq \exp\left(-\frac{\lambda^2}{2(E(X)+\lambda/3)}\right).$$
\end{lemma}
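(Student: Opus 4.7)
The plan is to use the classical moment-generating-function trick of Bernstein and Chernoff. The starting point is that for any $t>0$ and any real $c$, Markov's inequality applied to the nonnegative random variable $e^{tX}$ gives
\[
\Pr[X\geq c] = \Pr[e^{tX}\geq e^{tc}] \leq \frac{E[e^{tX}]}{e^{tc}}.
\]
Because the $X_i$ are independent, $E[e^{tX}] = \prod_{i=1}^n E[e^{tX_i}] = \prod_{i=1}^n (1-p_i + p_i e^t)$. The inequality $1+x\leq e^x$ applied factor by factor then yields the clean bound $E[e^{tX}]\leq \exp\bigl(E(X)(e^t-1)\bigr)$, which is the only probabilistic content of the argument; everything that follows is analytic.

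For the upper-tail bound, take $c = E(X) + \lambda$ and combine the two displays to get
\[
\Pr[X\geq E(X)+\lambda] \leq \exp\bigl(E(X)(e^t-1) - t(E(X)+\lambda)\bigr).
\]
I would optimize over $t>0$ by setting $t = \ln(1+\lambda/E(X))$, which gives the Bennett-type bound $\exp(-E(X)\cdot h(\lambda/E(X)))$ with $h(x)=(1+x)\ln(1+x)-x$. To reach the exact form stated in the lemma, I would then apply the elementary inequality $h(x)\geq x^2/(2+2x/3)$ for $x\geq 0$, which after substitution collapses to $\exp\bigl(-\lambda^2/(2(E(X)+\lambda/3))\bigr)$. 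For the lower-tail bound, the same scheme applies to $e^{-tX}$ with $t>0$; taking $c=E(X)-\lambda$ and optimizing gives $\exp(-E(X)\cdot h(-\lambda/E(X)))$ on the relevant range, and the slightly sharper inequality $h(-x)\geq x^2/2$ for $0\leq x\leq 1$ yields the stated $\exp(-\lambda^2/(2E(X)))$.

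The only real work is the analytic estimation of $h$; the probabilistic step is a two-line calculation. Thus the expected main obstacle is not conceptual but bookkeeping: choosing the correct $t$ for each tail and verifying that the polynomial lower bounds on $h$ hold on the full range of $\lambda\geq 0$ (upper tail) and $0\leq\lambda\leq E(X)$ (lower tail). Outside these ranges the inequalities are either trivial or uninformative, so no further argument is needed.
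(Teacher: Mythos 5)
Your proof is correct: the moment-generating-function bound $E[e^{tX}]\leq\exp(E(X)(e^t-1))$, the optimizing choice $t=\ln(1+\lambda/E(X))$ leading to the Bennett form $\exp(-E(X)\,h(\lambda/E(X)))$, and the two elementary estimates $h(x)\geq x^2/(2+2x/3)$ and $h(-x)\geq x^2/2$ on the relevant ranges all check out, and together they deliver exactly the two stated tails. The paper itself does not prove this lemma but cites it as a standard reference (Chernoff/Bernstein), and your argument is precisely the classical proof one would find behind that citation, so there is no divergence to report.
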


Let $G$ be a network constructed from the security model. We now
prove Theorem ~\ref{thm:Securityproperties}. We will prove (1), (2),
(3) and (4) of Theorem~\ref{thm:Securityproperties} in Subsections
\ref{sub3.0(0)}, \ref{sub3.1(1)}, \ref{sub3.1(2)} and
\ref{sub3.1(3)} respectively.

\subsection{Basic Properties}
\label{sub3.0(0)}

 In this subsection, we prove (1) of Theorem~ \ref{thm:Securityproperties}.
It consists of two results, the first is the estimation of number of
seed nodes, and the second is the upper bound of sizes of the
homochromatic sets.

\begin{proof} (Proof of (1) of Theorem~
\ref{thm:Securityproperties}) We use $G[t]$ to denote the graph
constructed at the end of time step $t$ of the construction of $G$.
Let $T_1=\log^{a+1} n$, and $C_t$ be the set of all colors appear in
$G[t]$.

\bigskip

For (i). It suffices to show that the size of $C_t$ is bounded as
desired. For this, we have:

\bigskip

\begin{lemma} \label{lem:colorsize}
With probability $1-o(1)$, for all $t\geq T_1$, $\frac{t}{2\log^a
t}\leq |C_t|\leq \frac{2t}{\log^a t}$.
\end{lemma}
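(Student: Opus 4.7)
The plan is to view $|C_t|$ as a sum of independent Bernoulli trials, compute its expectation, and then apply the Chernoff bound together with a union bound over $t$.

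First, I would write $|C_t| = 2 + \sum_{i=3}^{t} X_i$, where $X_i$ is the indicator of the event ``node $i$ chooses a new color at step $i$''. By construction of the security model, $X_i$ is determined solely by the independent coin flip at step $i$, so $\Pr[X_i=1]=p_i=(\log i)^{-a}$ and the $X_i$ are independent. Hence $\mu_t := E[|C_t|] = 2+\sum_{i=3}^{t}(\log i)^{-a}$.

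Next I would estimate $\mu_t$ by comparison with an integral. Using that $(\log i)^{-a}$ is monotone decreasing, the standard splitting trick $\sum_{i=t/2}^t (\log i)^{-a}\geq (t/2)(\log t)^{-a}$ gives the lower bound $\mu_t \geq \frac{2t}{3\log^a t}$ for $t$ large enough, and a symmetric comparison (plus a negligible contribution from small $i$) yields $\mu_t \leq \frac{3t}{2\log^a t}$. So $\mu_t = \Theta(t/\log^a t)$ with explicit constants close to $1$.

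Then I would apply the Chernoff bound with deviation $\lambda_t = \mu_t/4$, which yields
\[
\Pr\bigl[\,|\,|C_t|-\mu_t\,|>\mu_t/4\,\bigr]\leq 2\exp(-c\,\mu_t)
\]
for an absolute constant $c>0$. Combined with the previous estimate on $\mu_t$, this shows $\Pr[|C_t|\notin[\tfrac{t}{2\log^a t},\tfrac{2t}{\log^a t}]]\leq 2\exp(-c'\,t/\log^a t)$. For $t\geq T_1=\log^{a+1}n$ we have $t/\log^a t \geq \log^{a+1}n/(a+1)^a(\log\log n)^a$, which is $\omega(\log n)$, so each individual failure probability is $n^{-\omega(1)}$. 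A union bound over the at most $n$ values $T_1\leq t\leq n$ then gives a total failure probability of $o(1)$, proving the lemma.

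The only real obstacle is to ensure that the lower tail is nontrivial at the bottom of the range, i.e.\ at $t=T_1$; this is precisely why $T_1$ is taken to be $\log^{a+1}n$ rather than something smaller, so that $\mu_{T_1}$ is already much larger than $\log n$ and Chernoff gives room for the union bound. Everything else is a routine concentration calculation.
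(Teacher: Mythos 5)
Your proposal is correct and follows essentially the same route as the paper's proof: write $|C_t|$ as a sum of independent Bernoulli variables with $\Pr[X_i=1]=(\log i)^{-a}$, estimate $E[|C_t|]=\Theta(t/\log^a t)$ with constants between $1/2$ and $2$ by integral comparison, apply Chernoff, and union-bound over $t\geq T_1$. One small caveat: the specific splitting you cite, $\sum_{i=t/2}^t(\log i)^{-a}\geq (t/2)(\log t)^{-a}$, only gives $\mu_t\geq t/(2\log^a t)$, not the stated $2t/(3\log^a t)$; to get the latter you need either to split at $t/3$ instead of $t/2$, or to account for the positive contribution of the range $i<t/2$, as the paper does via the identity $\int(\log^{-a}x - a\log^{-(a+1)}x)\,dx = x/\log^a x$.
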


\begin{proof}
The expectation of $|C_t|$ is
$$E[|C_t|] = 2+ \sum_{i=3}^{t}\frac{1}{\log^a i}.$$
By indefinite integral
$$\int (\frac{1}{\log^a x}-\frac{a}{\log^{a+1}x}) dx=\frac{x}{\log^a x}+C,$$
we know that if $t$ is large enough, then
\begin{eqnarray*}
\sum_{i=3}^t \frac{1}{\log^a i} & \leq &
1+\int_2^t \frac{1}{\log^a x}dx\\
& \leq & \int_2^t \frac{6}{5}(\frac{1}{\log^a
x}-\frac{a}{\log^{a+1}x}) dx\\
& \leq & \frac{4t}{3\log^a t},
\end{eqnarray*}
where $\frac{6}{5}$ and $\frac{4}{3}$ are chosen arbitrarily among
the numbers larger than $1$. Similarly,
\begin{eqnarray*}
\sum_{i=3}^t \frac{1}{\log^a i} & \geq &
\int_2^t \frac{1}{\log^a x}dx\\
& \geq & \int_2^t \frac{5}{6}(\frac{1}{\log^a
x}-\frac{a}{\log^{a+1}x}) dx\\
& \geq & \frac{3t}{4\log^a t}.
\end{eqnarray*}

By the Chernoff bound and the fact that $t\geq T_1=(\log n)^{a+1}$,
with probability $1-exp(-\Omega(\frac{t}{\log^a t}))=1-o(n^{-1})$,
we have $\frac{t}{2\log^a t}\leq |C_t|\leq \frac{2t}{\log^a t}$. By
the union bound, such an inequality holds for all $t\geq T_1$ with
probability $1-o(1)$.
\end{proof}
(i) follows from Lemma~\ref{lem:colorsize}.

\smallskip

Lemma~\ref{lem:colorsize} depends on only the probability
$p_i=1/(\log i)^a$ with which the node created at time step $i$
chooses a new color. It is a useful fact throughout the proofs, from
which we define:

 \bigskip

\begin{definition} \label{def:event} We define
 $\mathcal{E}$ to be the event that $|C_t|$ is bounded in the
interval $\left[\frac{t}{2\log^a t}, \frac{2t}{\log^a t}\right]$.

\end{definition}
\smallskip

 By Lemma
\ref{lem:colorsize}, almost surely, the event $\mathcal{E}$  holds
for all $t\geq T_1$.

\bigskip

For (ii). We estimate the size of all the homochromatic sets.

\bigskip

\begin{lemma} \label{lem:homosetsizeupper}
With probability $1-o(1)$, the following properties hold:

(1) Every community has size bounded by $O(\log^{a+1}n)$, and

(2) For every $t\geq T_1$, every community at the end of time step
$t$ has size bounded by $O(\log^{a+1}t)$.

\end{lemma}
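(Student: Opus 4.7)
The plan is to represent the size of a homochromatic class as a sum of Bernoulli indicators whose conditional means can be deterministically controlled using the event $\mathcal{E}$ from Lemma~\ref{lem:colorsize}. Fix a color $c$ whose seed is created at time $s$, and for $i>s$ let $X_i$ be the indicator that node $i$ chooses color $c$. Then the size at the end of time $t$ is $S_c(t)=1+\sum_{i=s+1}^{t}X_i$. Conditional on the history $G_{i-1}$, node $i$ picks color $c$ with probability exactly $(1-p_i)/|C_{i-1}|$, so on the event $\mathcal{E}$ we have the uniform bound $\Pr[X_i=1\mid G_{i-1}]\leq 2\log^a(i-1)/(i-1)$ whenever $i-1\geq T_1$.

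Next I would estimate the expected size by splitting the sum at $T_1$. The contribution from $i\leq T_1$ is trivially at most $T_1=\log^{a+1}n$, and on $\mathcal{E}$ the contribution from $i>T_1$ satisfies
$$\sum_{i=T_1+1}^{t}\frac{2\log^a(i-1)}{i-1}=O\!\left(\int_{T_1}^{t}\frac{\log^a x}{x}\,dx\right)=O\!\left(\log^{a+1} t-\log^{a+1} T_1\right)=O(\log^{a+1} t),$$
using $\log^{a+1} T_1=((a+1)\log\log n)^{a+1}=o(\log^{a+1} n)$. Hence the (conditional) expectation of $S_c(t)$ is $O(\log^{a+1}n)$, which is the right order of magnitude for part (1).

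Finally, because the bound on $\Pr[X_i=1\mid X_1,\dots,X_{i-1}]$ is deterministic on $\mathcal{E}$ and does not depend on the past, the sum $\sum_{i>s}X_i$ is stochastically dominated by an independent sum of Bernoullis with those means. The Chernoff inequality recalled at the start of Section~\ref{sec:pro} then gives $\Pr[S_c(t)\geq C\log^{a+1}n]\leq\exp(-\Omega(\log^{a+1}n))=n^{-\omega(1)}$ for a sufficiently large constant $C$ (using $a>1$). A union bound over the at most $n$ seeds, combined with the $1-o(1)$ probability of $\mathcal{E}$, yields (1). For (2) I would apply the same concentration at each fixed $t\geq T_1$ to get $S_c(t)=O(\log^{a+1}t)$ with probability $1-\exp(-\Omega(\log^{a+1}t))$, and union-bound over both seeds and $t\in[T_1,n]$. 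The main technical obstacle is the small-$t$ regime of part (2): for $t$ just above $T_1$, the exponential concentration $\exp(-\Omega(\log^{a+1}t))$ is relatively weak, so one must decompose $S_c(t)=S_c(T_1)+(S_c(t)-S_c(T_1))$ and apply Chernoff separately to the deterministic-looking initial piece and to the well-concentrated growth after time $T_1$, absorbing the union-bound cost over $t$ using that the Chernoff failure probability dominates the polynomial factor $n$.
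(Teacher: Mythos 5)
Your argument for part (1) follows the same route as the paper (represent the community size as a sum of color-choice indicators, dominate them by independent Bernoullis conditioned on $\mathcal{E}$, apply Chernoff, and union-bound over the at most $n$ seeds), and is in fact slightly more careful than the paper's: since $\mathcal{E}$ (Definition~\ref{def:event}) only constrains $|C_t|$ for $t\geq T_1$, the paper's claimed stochastic domination of $\sum_{t=1}^{n}Y_t$ is not actually justified for the early indices, whereas your split at $T_1$ — absorbing the first $T_1=\log^{a+1}n$ indicators by the trivial bound — makes the argument airtight for (1). That part is correct.

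Part (2), however, has a genuine gap. You want to fix $t$, get a failure probability $\exp(-\Omega(\log^{a+1}t))$, and union-bound over $t\in[T_1,n]$ and over seeds, "absorbing the union-bound cost over $t$ using that the Chernoff failure probability dominates the polynomial factor $n$." This is false for $t$ near $T_1$: there $\log^{a+1}t=\Theta\bigl(((a+1)\log\log n)^{a+1}\bigr)$, so $\exp(-\Omega(\log^{a+1}t))$ is only subexponentially small in $\log n$, and multiplying by $\Theta(n)$ values of $t$ (or even by $n^{\varepsilon}$ of them) destroys the bound. Your proposed remedy — decompose $S_c(t)=S_c(T_1)+\bigl(S_c(t)-S_c(T_1)\bigr)$ and apply Chernoff to "the deterministic-looking initial piece" — does not close the gap either. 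The target for $t$ near $T_1$ is $O(\log^{a+1}T_1)=O\bigl((\log\log n)^{a+1}\bigr)$, which is far below the only bound you have for $S_c(T_1)$, namely the trivial $S_c(T_1)\leq T_1=\log^{a+1}n$; and $\mathcal{E}$ says nothing about $|C_{i-1}|$ for $i\leq T_1$, so none of the concentration machinery you invoke applies to that piece. A correct proof of (2) would have to (i) exploit monotonicity of $S_c(t)$ and union-bound only over dyadic scales $t=2^k$, where the per-scale failure $\exp(-\Omega(k^{a+1}))$ beats the $\sim 2^k$ colors and the number of scales, and (ii) separately establish a high-probability lower bound on $|C_t|$ in the range between, say, $\mathrm{polylog}(\log n)$ and $T_1$, which does not follow from $\mathcal{E}$ but does follow directly from the independent Bernoulli structure of color creation. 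In fairness, the paper itself dismisses (2) with "this follows from the proof of (1)" and so has the same hole; but the specific justification you offer for absorbing the union bound over $t$ is incorrect as stated.
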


\begin{proof} For (1). It suffices to show that with probability $1-o(n^{-1})$, the
homochromatic set of the first color $\kappa$ has size
$O(\log^{a+1}n)$.

We define an indicator random variable $Y_t$ for the event that the
vertex created at time $t$ chooses color $\kappa$. We also define
$\{Z_t\}$ to be the independent Bernoulli trails such that
$$\Pr[Z_t=1]=\left(1-\frac{1}{\log^a n}\right)\frac{2\log^a
t}{t}.$$ Conditioned on the event $\mathcal{E}$, we know that
$Y:=\sum_{t=1}^n Y_t$ is stochastically dominated by
$Z:=\sum_{t=1}^n Z_t$. The latter has an expectation
$$E[Z]\leq \sum_{t=1}^n\frac{2\log^a
t}{t}\leq 2\log^{a+1} n.$$ By the Chernoff bound,
$$\Pr[Z>4\log^{a+1} n] \leq n^{-1}.$$
Therefore, with probability $1-n^{-1}$, the size of $S_\kappa$ is
$Y\leq 4\log^{a+1}n$. (1) follows.

For (2). This follows from the proof of (1) above. (2) holds.

Lemma~\ref{lem:homosetsizeupper} follows.
\end{proof}

(ii) holds.

This proves (1) of Theorem \ref{thm:Securityproperties}.

\subsection{Power Law}
\label{sub3.1(1)}

In this subsection, we probe (2) of Theorem
\ref{thm:Securityproperties}, consisting of power law of the induced
subgraph of communities, of the degree distributions of the
homochromatic sets, and of the whole network $G$.

Before proving the results, we first prove both a lower bound and an
upper bound for the sizes of well-evolved communities.

Recall that $T_1=\log^{a+1} n$. Let $T_2=(1-\delta_1)n$, for
$\delta_1=\frac{10}{\log^{a-1}n}$. We have:

\bigskip

\begin{lemma}\label{lem:setsize}
With probability $1-o(1)$, both (1) and (2) below hold in $G$:

\begin{enumerate}
\item [(1)] For a community created at a time step $\leq T_2$, it has size at least $\log
n$;
\item [(2)] For a community created at a time step $>T_2$, it has size at most $30\log n$.
\end{enumerate}
\end{lemma}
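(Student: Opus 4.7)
The plan is to express each community's size as a sum of Bernoulli indicators (one per subsequent time step), compute its expectation using the bounds on $|C_{i-1}|$ guaranteed by the event $\mathcal{E}$ from Definition~3.1, and then apply a Chernoff bound color by color, followed by a union bound over all colors. Throughout I work on $\mathcal{E}$, which by Lemma~3.1 holds for all $t\geq T_1$ with probability $1-o(1)$. Fix a color $\kappa$ whose seed is born at step $s$; for each $i>s$, let $Y_i$ be the indicator that the node born at step $i$ selects color $\kappa$, so that $|S_\kappa|=1+\sum_{i=s+1}^n Y_i$. Conditional on the history up to step $i-1$, $\Pr[Y_i=1]=(1-p_i)/|C_{i-1}|$. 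As in Lemma~3.2, I will couple each $Y_i$ with independent Bernoulli variables $\tilde Z_i$ whose success probabilities are the worst-case values of $(1-p_i)/|C_{i-1}|$ permitted by $\mathcal{E}$, so that Chernoff applies to the dominating (or dominated) independent sum.

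For Part~(1), with $s\leq T_2=(1-\delta_1)n$ and $\delta_1=10/\log^{a-1}n$, I restrict attention to the tail $i\in[T_2+1,n]$, which contributes exactly $\delta_1 n=10n/\log^{a-1}n$ terms. On $\mathcal{E}$, $|C_{i-1}|\leq 2(i-1)/\log^a(i-1)$; combined with $\log^a(T_2)=(1-o(1))\log^a n$ and $p_i=o(1)$, this yields $\Pr[Y_i=1]\geq (1-o(1))\log^a n/(2n)$ for every $i$ in the tail, hence $E[|S_\kappa|]\geq(1-o(1))\cdot 5\log n$. The Chernoff lower tail then gives $\Pr[|S_\kappa|<\log n]\leq n^{-\gamma}$ with $\gamma\approx 1.6>1$. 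Since, again by Lemma~3.1, there are at most $2n/\log^a n\leq n$ colors, a union bound establishes Part~(1) with probability $1-o(1)$.

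For Part~(2), with $s>T_2$, the sum has at most $\delta_1 n$ terms. On $\mathcal{E}$, $|C_{i-1}|\geq (i-1)/(2\log^a(i-1))\geq T_2/(2\log^a n)$ for $i>T_2$, so $\Pr[Y_i=1]\leq(1+o(1))\cdot 2\log^a n/n$ and consequently $E[|S_\kappa|]\leq 1+20\log n$. The Chernoff upper tail then gives $\Pr[|S_\kappa|>30\log n]\leq n^{-\gamma'}$ for some $\gamma'>1$, and the same union bound finishes Part~(2).

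The main technical wrinkle is the coupling step: conditioning on $\mathcal{E}$ correlates the $Y_i$ because $|C_{i-1}|$ depends on the entire past, but since $\mathcal{E}$ only imposes deterministic two-sided bounds on $|C_{i-1}|$, one can build each $Y_i$ alongside independent Bernoulli variables with the prescribed extremal success probabilities that sandwich it (this is exactly the stochastic-domination device already used in Lemma~3.2). The specific calibration $\delta_1=10/\log^{a-1}n$ is what guarantees that the expected sizes land in the regime $\approx 5\log n$ and $\leq 21\log n$, where the Chernoff exponents strictly exceed~$1$ and the union bound over the $\Theta(n/\log^a n)$ colors still yields $1-o(1)$; taking the intersection with the probability-$(1-o(1))$ event $\mathcal{E}$ concludes.
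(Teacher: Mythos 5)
Your argument follows the paper's proof essentially step for step: work on the event $\mathcal{E}$, sandwich the indicator sum by independent Bernoulli variables whose success probabilities are the extremal values of $(1-p_i)/|C_{i-1}|$ permitted by $\mathcal{E}$, bound the expected size of the tail contribution from steps $i>T_2$ (the same $\approx 5\log n$ / $\approx 20\log n$ calibration), and apply Chernoff plus a union bound over colors. The numerical constants and the explicitness about the coupling differ slightly, but the decomposition and the key lemmas invoked are identical to the paper's.
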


\begin{proof}

For (1). We only need to prove that, on the condition of event
$\mathcal{E}$ in Definition~\ref{def:event}, any homochromatic set
$S_\kappa$ created before time step $T_2+1$ has size at least $\log
n$ with probability $1-o(n^{-1})$.

For every $t>T_2$, let $Y_t$ be the indicator random variable that
the vertex, $v$ say, created at time step $t$ chooses old color
$\kappa$. For $t>T_2$, let $\{Z_t\}$ be the independent Bernoulli
trails such that
\begin{eqnarray}
\Pr[Z_t=1]=(1-\frac{1}{\log^a (1-\delta_1)n})\frac{\log^a
t}{2t}.\label{eqn:randomv1}
\end{eqnarray}

Conditioned on the event $\mathcal{E}$, we know that $Y:=\sum_{t\geq
T_2+1}^n Y_t$ stochastically dominates $Z:=\sum_{t\geq T_1+1}^n
Z_t$, which has expectation
$$E[Z]\geq \sum_{t=T_2+1}^n\frac{\log^a t}{2t}\geq \frac{\delta_1}{2}\log^a (1-\delta_1)n\geq 4\log n.$$
By the Chernoff bound,
\begin{eqnarray*}
\Pr[Z<\log n] \leq e^{-\frac{3^2\log n}{2\times
4}}=n^{-\frac{9}{8}}.
\end{eqnarray*}
Thus, with probability $1-o(n^{-1})$, the size of $S_\kappa$ is at
least $\log n$.

For (2). The proof is similar to that of (1) above. We only need to
prove that, on the condition of event $\mathcal{E}$, any
homochromatic set $S_\kappa$ created after $T_2$ has size at most
$30\log n$ with probability $1-o(n^{-1})$. For $t>T_2$, we consider
the Bernoulli random variables $\{Z_t\}$ defined by
\begin{eqnarray}
\Pr[Z_t=1]=(1-\frac{1}{\log^a n})\frac{2\log^a
t}{t}.\label{eqn:randomv2}
\end{eqnarray}
Note that
$$E[Z]\leq \sum_{t=T_2+1}^n\frac{2\log^a t}{t}\leq \frac{2\delta_1}{1-\delta_1}\log^a n.$$
By a similar analysis to that in (1) above, we know that with
probability $1-o(n^{-1})$, the size of $S_\kappa$ is at most $30\log
n$.
\end{proof}

\smallskip

The proof of Lemma~\ref{lem:setsize} depends on both the probability
$1-p_i$ with which the newly created node chooses an old color, and
the randomness and uniformity of the choice of the old color at time
step $i$ for all $i$'s.

By Lemma \ref{lem:setsize}, we know that each of the communities
born before time step $T_2+1$ has expected size $\omega(1)$, and
that all the communities born at time steps $\leq T_2$ account for
($1-o(1)$) of all the communities. Therefore we prove the power law
distribution only for the communities born at time steps $\leq T_2$.

\bigskip

 For both (i) and (ii). Now we turn to prove two results:

\begin{enumerate}
 \item [(A)] For each homochromatic set $X$, the degrees of nodes in $X$
 follow a power law, and

 \item [(B)] For each homochromatic set $X$, the induced subgraph $G_X$ of $X$ follow a power law.
\end{enumerate}

We prove both (A) and (B) together. We consider only the non-trivial
homochromatic sets, i.e., the well-evolved communities, by ignoring
the few most recently created communities.

By (4) of Definition \ref{def:Securitymodel}, each community
basically follows the classical preferential attachment model, we
are able to give explicit expressions for the expected numbers of
nodes of degree $k$ for all $k$, for each of the homochromatic sets
and for the induced subgraphs of the homochromatic sets.

In fact, as we will show below that the contribution to the degrees
of a homochromatic set from the global edges is much more smaller
than that from the local edges of the homochromatic set. This is the
key point to our proofs of the power law of almost all the
communities.

We use $X$ to denote a homochromatic set of a fixed color, $\kappa$
say. Let $T_0$ be the time step at which $X$ is created.

For positive integers $s$ and $k$, we define $A_{s,k}$ to be the
number of nodes of degree $k$ in $X$ when $|X|$ reaches $s$,
$B_{s,k}$ to be the number of nodes of degree $k$ in the induced
subgraph of $X$ when $|X|$ reaches $s$, and $g_{s,k}$ to be the
number of global edges associated with the nodes in $X$ of degree
$k$ in the induced subgraph of $X$ when $|X|$ reaches $s$. By
definition, we have $A_{1,d}=1$ and $A_{1,k}=0$ for all $k>d$, and
$B_{1,k}=0$ for all $k$. We also have $A_{s,k}=B_{s,k}+g_{s,k}$.
Then we establish the recurrence formula for the expectations of
both $A_{s,k}$ and $B_{s,k}$.

Firstly, we define some notations associated with $X$ and its size
$|X|$:

-- we use $T(s)$ (or $T$, for simplicity)  to denote the time step
at which the size of $X$ becomes to be $s$,

-- we use $s_1$ to denote the number of global edges connecting to
$X$ in the case that $|X|=s$.

We consider the time interval $(T(s-1), T(s))$. Then the number of
times that a global edge is created  and linked to a node in $X$ of
degree $k$ at some time step in the interval $(T(s-1), T(s))$ is
expected to be $\Theta(\frac{1}{\log^a T}\cdot\frac{k\cdot
A_{s,k}}{2dT}/\frac{\log^a T}{T})=\Theta(\frac{k\cdot
A_{s,k}}{\log^{2a}T})$. Denote $\Theta(\log^{2a}T)$ by $s_2$.

Then for $s>1$ and $k>d$, we have
$$E(A_{s,k})=A_{s-1,k}\left(1-\frac{kd}{2d(s-1)+s_1}-\frac{k}{s_2}\right)
+A_{s-1,k-1}\cdot\left(\frac{(k-1)d}{2d(s-1)+s_1}+\frac{k-1}{s_2}\right)+O\left(\frac{1}{s^2}\right).$$
Taking expectations on both sides, we have
\begin{eqnarray} \label{eqn:A_krecurrence}
E(A_{s,k}) &=&
E(A_{s-1,k})\left(1-\left(\frac{1}{2(s-1)+s_1/d}-\frac{1}{s_2}\right)k\right)\nonumber \\
&&+E(A_{s-1,k-1})\left(\frac{1}{2(s-1)+s_1/d}+\frac{1}{s_2}\right)(k-1)+O\left(\frac{1}{s^2}\right).
\end{eqnarray}
If $k=d$, then
\begin{equation} \label{eqn:A_drecurrence}
E(A_{s,d})=E(A_{s-1,d})\left(1-\left(\frac{1}{2(s-1)+s_1/d}-\frac{1}{s_2}\right)d\right)+1+O\left(\frac{1}{s^2}\right).
\end{equation}

Similarly, for $s>1$ and $k>d$,
$$E(B_{s,k})=B_{s-1,k}-\frac{d\cdot(kB_{s-1,k}+g_{s-1,k})}{2d(s-1)+s_1}+\frac{d\cdot((k-1)B_{s-1,k-1}+g_{s-1,k-1})}{2d(s-1)+t_1}+O(\frac{1}{s^2}).$$
Taking expectations on both sides, we have
\begin{eqnarray} \label{eqn:B_krecurrence}
E(B_{s,k}) &=& E(B_{s-1,k})\left( 1-\frac{kd}{2d(s-1)+s_1}
\right)+E(B_{s-1,k-1})\cdot
\frac{(k-1)d}{2d(s-1)+s_1}\nonumber \\
&&+\frac{E(g_{s-1,k-1}-g_{s-1,k})}{2d(s-1)+s_1}+O(\frac{1}{s^2}).
\end{eqnarray}
If $k=d$, then
\begin{eqnarray} \label{eqn:B_drecurrence}
E(B_{s,d}) &=&
B_{s-1,d}-\frac{d\cdot(dB_{s-1,d}+g_{s-1,d})}{2d(s-1)+s_1}+1+O(\frac{1}{s^2})\nonumber \\
&=& B_{s-1,d}\left( 1-\frac{d}{2(s-1)+s_1/d} \right)+\left(
1-\frac{g_{s-1,d}}{2d(s-1)+s_1} \right),
\end{eqnarray}
and
$$E(B_{s,d})=E(B_{s-1,d})\left( 1-\frac{d}{2(s-1)+s_1/d} \right)+\left(
1-\frac{E(g_{s-1,d})}{2d(s-1)+s_1} \right).$$

To solve the recurrences, we invoke the following lemma.

\begin{lemma}
\label{lem:recurrence} (\cite{CL06}, Lemma 3.1) Suppose that a
sequence $\{a_s\}$ satisfies the recurrence relation
$$a_{s+1}=(1-\frac{b_s}{s+s_1})a_s+c_s~~{\it for}~~s\geq s_0,$$
where the sequences $\{b_s\},\{c_s\}$ satisfy
$\lim_{s\rightarrow\infty}b_s=b>0$ and
$\lim_{s\rightarrow\infty}c_s=c$ respectively. Then the limitation
of $\frac{a_s}{s}$ exists and
$$\lim_{s\rightarrow\infty}\frac{a_s}{s}=\frac{c}{1+b}.$$
\end{lemma}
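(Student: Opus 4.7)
The plan is to subtract off the conjectured linear growth and reduce the problem to showing a residual is $o(s)$. Let $\alpha := c/(1+b)$ and define $e_s := a_s - \alpha s$. Substituting $a_s = \alpha s + e_s$ into the recurrence and simplifying yields
$$e_{s+1} = \left(1 - \frac{b_s}{s+s_1}\right)e_s + \delta_s, \qquad \delta_s := (c_s - c) + \alpha(b - b_s) + \frac{\alpha b_s s_1}{s+s_1}.$$
The hypotheses $b_s \to b$ and $c_s \to c$, together with boundedness of $b_s s_1/(s+s_1)$, give $\delta_s = o(1)$. Hence it suffices to prove that any sequence $\{e_s\}$ obeying this reduced recurrence with $\delta_s = o(1)$ satisfies $e_s/s \to 0$.

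Next I would iterate the reduced recurrence to obtain the closed form
$$e_s = e_{s_0}\, P(s_0, s) + \sum_{j=s_0}^{s-1} \delta_j \, P(j+1, s), \qquad P(j,s) := \prod_{k=j}^{s-1}\left(1 - \frac{b_k}{k+s_1}\right).$$
The key asymptotic is $P(j,s) = \Theta\bigl((j/s)^{b}\bigr)$, obtained by taking logarithms, using $\log(1-x) = -x + O(x^2)$ to get $\log P(j,s) = -\sum_{k=j}^{s-1} b_k/(k+s_1) + O(1/j)$, and exploiting $b_k \to b$ to evaluate $\sum_{k=j}^{s-1} b_k/(k+s_1) = b\log(s/j) + O(\epsilon\log(s/j)) + O(1)$ for all $j$ beyond a threshold $N(\epsilon)$ where $|b_k - b| < \epsilon$. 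This gives the refined estimate $P(j,s) = (j/s)^{b+O(\epsilon)} \cdot \Theta(1)$.

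With $P$ in hand, the initial term $e_{s_0} P(s_0, s) = O(s^{-b})$ contributes $o(1)$ after division by $s$, since $b > 0$. For the main sum, fix any $\eta > 0$ and choose $N$ large enough that $|\delta_j| < \eta$ for $j \geq N$; the finite head $j < N$ then contributes $O(1)$ in total, while the tail is bounded by
$$\frac{1}{s}\sum_{j=N}^{s-1} |\delta_j|\, P(j+1, s) \;\leq\; \frac{\eta}{s}\sum_{j=N}^{s-1}\Theta\bigl((j/s)^{b+O(\epsilon)}\bigr) \;\longrightarrow\; \frac{\eta}{b+1+O(\epsilon)}$$
by Riemann-sum approximation (valid since $b > 0 > -1$). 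Taking $s \to \infty$, then $\epsilon \to 0$, then $\eta \to 0$ in that order gives $e_s/s \to 0$, i.e.\ $a_s/s \to \alpha = c/(1+b)$.

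The main obstacle will be the uniform asymptotic for $P(j,s)$, since the exponent $b$ emerges only in the limit $b_k \to b$ and the precision must be sharp enough to survive the Riemann-sum step. The careful splitting of the logarithmic sum at the threshold $N(\epsilon)$, combined with the $O(1/j)$ control of the quadratic Taylor remainder coming from $\log(1-x)$, is what makes the three-level limit $s \to \infty$, $\epsilon \to 0$, $\eta \to 0$ close cleanly; once that is set up, existence of the limit and its value both follow simultaneously from the same estimate.
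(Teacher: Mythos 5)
The paper does not prove this lemma; it is cited verbatim from Chung and Lu (\cite{CL06}, Lemma 3.1), so there is no internal proof to compare against. Your argument is a correct and standard direct proof of the recurrence estimate: subtract the conjectured linear term, reduce to a perturbed linear recurrence with $o(1)$ forcing, iterate to get the partial-product formula $e_s = e_{s_0}P(s_0,s) + \sum_j \delta_j P(j+1,s)$, establish $P(j,s)\sim(j/s)^b$, and close with a Riemann-sum limit. A few small things to tighten if you write it out in full: the third summand $\alpha b_s s_1/(s+s_1)$ in $\delta_s$ should be noted to \emph{vanish}, not merely be bounded, since $s_1$ is fixed while $s\to\infty$; the implicit constants in the two-sided estimate $P(j,s)=\Theta\bigl((j/s)^{b+O(\epsilon)}\bigr)$ must tend to $1$ as $\epsilon\to 0$ for the three-level limit $s\to\infty$, $\epsilon\to 0$, $\eta\to 0$ to close --- this does hold, since $N(\epsilon)\to\infty$ simultaneously controls the Taylor remainder $O(1/j)\le O(1/N)$ and the error in approximating $\sum_{k=j}^{s-1}\tfrac{1}{k+s_1}$ by $\log(s/j)$, but it is exactly the point you flag as the obstacle and deserves an explicit inequality; and finitely many initial factors $1-b_k/(k+s_1)$ may lie outside $(0,1)$ (or have $k+s_1\le 0$, e.g.\ when $s_1=-1$ as in this paper's application), but they contribute only a fixed bounded multiplicative factor and do not affect the asymptotics. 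With these caveats addressed, the proof is complete.
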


For the recurrence of $E(A_{s,k})$, by Lemma~\ref{lem:setsize}, as
$n$ goes to infinity, $t=\omega(1)$ also goes to infinity. By the
definition of $s_2$, $s_2=\Theta(\log^{2a}T)=\omega(s)$.

To deal with $s_1$, we give a upper bound for the expected volume of
$X$ at time $T$, denoted by $V_T$, as follows.
\begin{eqnarray*}
E(V_T) &\leq& \sum\limits_{i=2}^T \left[ \left( 1-\frac{1}{\log^a i}
\right)\cdot\frac{2d}{|C_i|}+\frac{1}{\log^a i}\cdot\frac{d
V_{i-1}}{2di} \right]\\
&\leq& \sum\limits_{i=2}^T \frac{2d}{|C_i|} =
O\left(\sum\limits_{i=2}^T \frac{4d\log^a i}{i}\right) = O(\log^a
T).
\end{eqnarray*}

So it is easy to observe that $\frac{s_1}{t}=O\left( \frac{1}{\log^a
T}\cdot\frac{V_T}{2dT}/ \frac{\log^a T}{T} \right)=O\left(
\frac{1}{\log^{a-1} T} \right)$ goes to zero as $s$ approaches to
infinity.

For the recurrence of $E(B_{s,k})$, we show that as $s$ goes to
infinity, both $\frac{E(g_{s-1,k-1}-g_{s-1,k})}{2d(s-1)+s_1}$ and
$\frac{E(g_{s-1,d})}{2d(s-1)+s_1}$ approach to $0$. Define
$g_s=\sum_i g_{s,i}$ to be the total number of global edges
associated to $X$ when $|X|$ reaches $s$. We only have to show that
$E(\frac{g_s}{s})\rightarrow 0$ as $s\rightarrow \infty$.

Suppose that the seed node of $X$ is created at time $T_0$.
\begin{eqnarray*}
E(g_s) = O\left( \sum\limits_{i=T_0}^{T(s)}\frac{1}{\log^a
i}\cdot\frac{V_i}{2di} \right) = O\left(\sum\limits_{i=T_0}^{T(s)}
\frac{\log i}{2di}\right) = O(\log^2 T(s)-\log^2 T_0).
\end{eqnarray*}

Note that when we consider the size of $X$ at sometime $t>T_0$, we
have
\begin{eqnarray*}
E(|X|) &=& \sum\limits_{i=T_0}^{t} \left( 1-\frac{1}{\log^a
i}\cdot\frac{1}{|C_i|} \right) = \Omega\left(
\sum\limits_{i=T_0}^{t}\frac{\log^a i}{2i} \right)\\
&=& \Omega\left( \int_{T_0}^{t}\frac{\log^a x}{2x}dx \right) =
\Omega(\log^{a+1}t-\log^{a+1}T_0).
\end{eqnarray*}
Thus at time $T(s)$, by the Chernoff bound, with probability
$1-o(1)$, $s=\Omega(\log^{a+1}T(s)-\log^{a+1}T_0)$. Therefore,
$E(g_s)=o(s)$, that is, $E(\frac{g_s}{s})\rightarrow 0$ as
$s\rightarrow \infty$.

Then we turn to consider the recurrences of $E(A_{s,k})$ and
$E(B_{s,k})$. The terms $s_1/d$ and $\frac{1}{s_2}$ in equalities
(\ref{eqn:A_krecurrence}) and (\ref{eqn:A_drecurrence}) are
comparatively negligible. The terms
$\frac{E(g_{s-1,k-1}-g_{s-1,k})}{2d(s-1)+s_1}$ and
$\frac{E(g_{s-1,d})}{2d(s-1)+s_1}$ in equalities
(\ref{eqn:B_krecurrence}) and (\ref{eqn:B_drecurrence}),
respectively, are also comparatively negligible. By Lemma
\ref{lem:recurrence}, $\frac{E(A_{s,k})}{s}$ and
$\frac{E(B_{s,k})}{s}$ must have the same limit as $t$ goes to
infinity. Next, we will only give the proof of the power law
distribution for $E(A_{s,k})$, which also holds for $E(B_{s,k})$.

Denote by $S_k=\lim_{t\rightarrow\infty}\frac{E(A_{s,k})}{s}$ for
$k\geq d$. In the case of $k=d$, we apply Lemma \ref{lem:recurrence}
with $b_s=\frac{d}{2}$, $c_s=1+O(\frac{1}{s^2})$, $s_1=-1$, and get
$$S_d=\lim_{s\rightarrow\infty}\frac{E(A_{s,d})}{t}=\frac{1}{1+\frac{d}{2}}=\frac{2}{2+d}.$$
For $k>d$, assume that we already have
$S_{k-1}=\lim_{t\rightarrow\infty}\frac{E(A_{s,k-1})}{t}$. Applying
Lemma \ref{lem:recurrence} again with $b_s=\frac{k}{2}$,
$c_s=\frac{E(A_{s-1,k-1})}{s-1}\cdot\frac{k-1}{2}$, $s_1=-1$, we get
$$S_k=\lim_{t\rightarrow\infty}\frac{E(A_{s,k})}{s}=\frac{S_{k-1}\cdot\frac{k-1}{2}}{1+\frac{k}{2}}=S_{k-1}\cdot\frac{k-1}{k+2}.$$
Thus recurrently, we have
\begin{equation} \label{S_k expression}
S_k=S_d\cdot\frac{(d+2)!(k-1)!}{(d-1)!(k+2)!}=\frac{2d(d+1)}{k(k+1)(k+2)}.
\end{equation}
This implies
$$|E(A_{s,k})-S_k \cdot s|=o(s),$$
and thus
$$E(A_{s,k})=(1+o(1))k^{-3}s.$$
Since $s=\omega(1)$ goes to infinity as $n\rightarrow\infty$,
$E(A_{s,k})\propto k^{-3}$. For the same reason, $E(B_{s,k})\propto
k^{-3}$. This proves (A) and (B), and also completes the proof of
both (i) and (ii).

\bigskip

For (iii). For the whole network, a key observation is that the
union of several power law distributions is also a power law
distribution if the powers are equal. We will give the same explicit
expression of the expectation of the number of degree $k$ nodes by
combining those for the homochromatic sets, leading to a similar
power law distribution.

To prove the power law degree distribution of the whole graph, we
take the union of distributions of all homochromatic sets. We will
show that with overwhelming probability, almost all nodes belong to
some large homochromatic sets so that the role of small
homochromatic sets is negligible.

Suppose that $G$ has $m$ homochromatic sets of size at least $\log
n$. For $i=1,\ldots,m$, let $M_i$ be the size of the $i$-th
homochromatic set and $N_{s,k}^{(i)}$ denote the number of nodes of
degree $k$ when the $i$-th set has size $s$. For each $i$, we have
$$\lim_{n\rightarrow\infty}\frac{E(N_{M_i,k}^{(i)})}{M_i}=S_k.$$
Hence,

$$\lim_{n\rightarrow\infty}\frac{E(\sum_{i=1}^m
N_{M_i,k}^{(i)})}{\sum_{i=1}^m M_i}=S_k.$$

Let $M_0$ denote the size of the union of all other homochromatic
sets of size less than $\log n$, and $N_{s,k}^{(0)}$ denote the
number of nodes of degree $k$ in this union when it has size $s$. By
Lemma \ref{lem:setsize}, with probability $1-o(1)$, all these sets
are created after time $T_2$, and thus $M_0\leq n-T_2=
\frac{10n}{\log^{a-1}n}=o(n)$.

Define $N_{t,k}$ to be the number of nodes of degree $k$ in $G_t$,
that is, the graph obtained after time step $t$. Then we have
$$\lim_{n\rightarrow\infty}\frac{E(N_{n,k})}{n}=\lim_{n\rightarrow\infty}\frac{E(\sum_{i=0}^m N_{M_i,k}^{(i)})}{\sum_{i=0}^m M_i}.$$
For $M_0$, we have that
$$\lim_{n\rightarrow\infty} \frac{M_0}{\sum_{i=1}^m M_i}=\lim_{n\rightarrow\infty} \frac{M_0}{n-M_0}=0$$
and
$$\lim_{n\rightarrow\infty} \frac{E(N_{M_0,k}^{(0)})}{n} \leq \lim_{n\rightarrow\infty} \frac{M_0}{n}=0$$
hold with probability $1-o(1)$. So
$$\lim_{n\rightarrow\infty} \frac{E(N_{n,k})}{n}=\lim_{n\rightarrow\infty}\frac{E(\sum_{i=1}^m N_{M_i,k}^{(i)})}{\sum_{i=1}^m M_i}=S_k.$$
This implies
$$|E(N_{n,k})-S_k\cdot n|=o(n),$$
and thus,
$$E(N_{n,k})=(1+o(1))k^{-3}n,$$
and $E(N_{n,k})\propto k^{-3}$. (iii) follows.

This completes the proof Theorem~ \ref{thm:Securityproperties} (2).
\end{proof}

\subsection{Small World Property}
\label{sub3.1(2)}

For Theorem~ \ref{thm:Securityproperties} (3). Now we turn to prove
the properties of small diameters of each homochromatic set and
small world phenomenon of networks of the security model.

\smallskip

For (i). The diameter of the standard PA model is well-known
\cite{BR04}, where it has been shown that a randomly constructed
graph from the PA model, written $\mathcal{G}(n,d)$, has a diameter
$O(\log n)$ with probability $1-O(\frac{1}{\log^2 n})$.

(i) follows immediately from Theorem \ref{thm:Securityproperties}
(1) (ii).

\smallskip

For (ii). Now we prove the small world phenomenon. We adjust the
parameters in the proof of the PA model in~\cite{BR04} to get a
weaker bound on diameters, but a tighter probability. In so doing,
we have the following lemma.

\begin{lemma}\label{lem:padiam}
For any constant $a'>2$, there is a constant $K$ such that with
probability $1-\frac{1}{n^{a'+1}}$, a randomly constructed graph $G$
from the PA model $\mathcal{P}(n,d)$ has a diameter $Kn^{1/(a'+1)}$.
\end{lemma}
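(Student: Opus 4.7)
The plan is to follow the Bollob\'as--Riordan framework \cite{BR04} that underlies the standard $O(\log n/\log\log n)$ diameter bound for the PA model, but to re-tune the free parameters so as to trade the diameter bound (which we can afford to make much larger, namely polynomial in $n$) for a much stronger tail estimate on the probability of failure. Concretely, the BR argument proceeds by (i) representing $\mathcal{P}(n,d)$ via the linearized chord diagram (LCD) model so that each potential path of length $\ell$ between two fixed vertices $u,v$ has a closed-form probability of being realized, (ii) summing these probabilities over all paths of length $\ell$ from $u$ to $v$ to bound $\Pr[\mathrm{dist}(u,v)>\ell]$, and (iii) union-bounding over all $\binom{n}{2}$ pairs to obtain a bound on $\Pr[\mathrm{diam}(G)>\ell]$.

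First, I would reproduce the per-path probability computation from the LCD representation. The upper bound takes the form of a product over the intermediate vertices on the path; summing over all path shapes gives an estimate of the shape $(C\log n/\ell)^{\ell}$ or a similar quantity. In BR04 this is optimized by choosing $\ell=\Theta(\log n/\log\log n)$, which minimizes the expression and yields a failure probability of order $1/\log^{2}n$. For our purposes we instead set $\ell=K n^{1/(a'+1)}$ with $K$ a large enough absolute constant. Plugging this $\ell$ back into the per-pair tail estimate makes the base of the exponent in the bound $o(1)$, and hence the whole expression is at most $n^{-(a'+3)}$ for $K$ large enough.

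Second, I would apply the union bound over the $\binom{n}{2}\le n^{2}$ ordered pairs of vertices: the probability that some pair has distance larger than $K n^{1/(a'+1)}$ is at most $n^{2}\cdot n^{-(a'+3)}=n^{-(a'+1)}$. This is exactly the statement of the lemma. All the small adjustments needed in the BR argument (e.g.\ coupling the growth of balls with a branching process, controlling low-degree initial vertices, dealing with the very few steps where preferential attachment is concentrated on a single vertex) are unaffected by our change of parameter, because they only need $\ell$ to be at least $\omega(1)$ and at most $o(\sqrt{n})$, and both conditions are comfortably met by $\ell = K n^{1/(a'+1)}$.

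The main obstacle I expect is to verify that the per-pair tail bound from BR04 remains valid in the regime $\ell = K n^{1/(a'+1)}$, which is much larger than the regime $\ell = \Theta(\log n/\log\log n)$ originally considered. The worry is not the probability bound itself (which only improves for larger $\ell$) but that some step of the LCD analysis, for instance a local estimate that compares $\ell$ to $\log n$, tacitly assumes $\ell$ is subpolynomial. I would re-examine each such step and replace subpolynomial estimates by the correspondingly weaker polynomial ones; since in each case we only use the bound in one direction (``$\ell$ is large enough that the ball of radius $\ell$ has already covered nearly all of $G$''), the adaptation is routine. Once these sanity checks are in place, the lemma follows directly by combining the per-pair tail estimate with the union bound described above.
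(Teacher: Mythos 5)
Your proposal follows the same strategy as the paper's proof: re-tune the Bollob\'as--Riordan parameters so that a polynomially larger diameter threshold $\ell=Kn^{1/(a'+1)}$ yields a polynomially small per-pair failure probability, then union-bound over pairs. The paper's own proof is simply the one-line remark that this is ``a standard argument'' adapted from \cite{BR04}, so your sketch is in fact more explicit about the parameter trade-off and the range-of-$\ell$ checks than the paper itself.
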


\begin{proof}
By a standard argument as that in the proof of the small diameter
property of networks of the preferential attachment.
\end{proof}

Moreover, to estimate the distances among seed nodes, we recall a
known conclusion on random recursive trees. A random recursive tree
is constructed by stages, at each stage, one new vertex is created.
A newly created node must be linked to an earlier node chosen
according to a uniform choice. In this case, we call it a uniform
recursive tree~\cite{MS95}. We use a result of Pittel in~\cite{P94},
saying that the height of a uniform recursive tree of size $n$ is
$O(\log n)$ with high probability.

\begin{lemma} \label{lem:resursivetrees}
(\cite{P94}) With probability $1-o(1)$, the height of a uniform
recursive tree of size $n$ is asymptotic to $e\log n$, where $e$ is
the natural logarithm.
\end{lemma}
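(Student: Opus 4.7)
The plan is to establish both directions---that the height $L_n$ of a uniform recursive tree on $n$ vertices satisfies $L_n \leq (e+\varepsilon)\log n$ and $L_n \geq (e-\varepsilon)\log n$ with probability $1-o(1)$---by analyzing the level profile $X_k^{(n)}$, defined as the number of vertices at depth exactly $k$ from the root. The first step is to compute $\mathbb{E}[X_k^{(n)}]$ in closed form. Since a uniform recursive tree can be equivalently generated by attaching vertex $i$ (for $i \geq 2$) to a uniformly chosen earlier vertex, the probability that a fixed increasing sequence $1 = v_0 < v_1 < \cdots < v_k$ forms an ancestral chain is $\prod_{i=1}^{k} \frac{1}{v_i - 1}$. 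Summing over all such sequences gives $\mathbb{E}[X_k^{(n)}] = |s(n, k+1)| / n!$, where $|s(n,k+1)|$ is the unsigned Stirling number of the first kind, whose classical asymptotics yield $\mathbb{E}[X_k^{(n)}] \sim (\log n)^k / k!$ up to lower-order factors.

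The upper bound then follows from Markov's inequality: for $k = \lceil (e+\varepsilon)\log n \rceil$, Stirling's formula applied to $(\log n)^k / k!$ shows $\mathbb{E}[X_k^{(n)}] = o(1)$, since the maximum of $x \mapsto x \log(\log n) - x \log x + x$ is attained precisely at $x = \log n$ with value $\log n$, so beyond $e \log n$ the expectation decays to zero. A union bound gives $\Pr[L_n \geq k] \leq \mathbb{E}[X_k^{(n)}] = o(1)$, yielding the upper bound $L_n \leq (e+\varepsilon)\log n$ with high probability.

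The lower bound is the main obstacle, since first-moment control alone is insufficient. My plan is to couple the uniform recursive tree with a continuous-time Yule process: assign each vertex an independent rate-one exponential clock, and let each existing vertex give birth to a new child at rate one. Conditioned on the tree having exactly $n$ vertices at time $\tau_n \sim \log n$, this process produces a uniform recursive tree. Within the Yule embedding, the depths of vertices in distinct sub-trees rooted at children of the root are conditionally independent, which dissolves the positive correlation along shared ancestral paths that spoils a naive second-moment estimate on $X_k^{(n)}$. The depth of the deepest vertex alive at time $t$ in a single Yule sub-tree is then governed by a linear branching random walk whose Malthusian growth rate, via the standard Legendre-transform / Biggins martingale analysis, gives $\max$-depth $\sim e \cdot t$. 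Setting $t = \tau_n = \log n + O(1)$ yields the matching lower bound.

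The technical heart is the Biggins-type step: one shows that the additive martingale $W_k = \sum_{v\text{ at depth }k} e^{-\alpha k}$ converges almost surely and in $L^1$ for every $\alpha$ in a suitable interval, and then uses the rate function $I(c) = c \log c - c + 1$ to identify the edge of the support of the limit measure, giving the critical growth constant $e$ as the unique root of $I(c) = 1$ on $(1,\infty)$. Combining the upper-bound Markov step with the lower-bound Yule-embedding second-moment/martingale step produces $L_n / \log n \to e$ in probability, which is exactly the asymptotic asserted in the lemma.
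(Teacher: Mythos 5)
The paper does not prove this lemma at all: it is stated as a direct citation of Pittel (1994), and no argument is supplied. Your proposal is therefore new material rather than a reconstruction of the paper's reasoning, and it is essentially sound as a sketch of the cited theorem. The first-moment upper bound is standard: the profile expectation is $\mathbb{E}[X_k^{(n)}] = |s(n,k+1)|/(n-1)!$ (you wrote $n!$, a harmless normalization slip), the asymptotics $\sim (\log n)^k/k!$ hold, and the Markov/union-bound step correctly locates the decay threshold at $k = e\log n$. Your lower-bound strategy --- embed the recursive tree in a rate-one Yule process via the Athreya--Karlin device, note that depth along a tagged lineage at time $t$ is $\mathrm{Poisson}(t)$ while the population is of order $e^t$, and read off the growth constant as the root of $I(c) = c\log c - c + 1 = 1$, which is $c = e$ --- is the modern branching-random-walk route and is close in spirit to Pittel's original branching-process argument. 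Two points a full write-up would need to nail down: the Biggins martingale analysis requires uniform integrability to conclude the lower bound holds with probability $1-o(1)$, and the transfer from ``maximal depth at continuous time $t$'' back to the $n$-vertex tree requires controlling $\tau_n - \log n$ (which converges almost surely to a finite random variable by the Yule martingale). These are several pages of work; for the purposes of the paper, citing Pittel is the appropriate choice, and your sketch correctly identifies why the result is true.
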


To estimate the average node-to-node distance of $G$, we assume that
there are $m$ homochromatic sets of size at most $\log n$. Choose
$a'$ in Lemma~\ref{lem:padiam} to be the homophyly exponent $a$, and
then we have a corresponding $K$.

Given a homochromatic set $S$, we say that $S$ is {\it bad}, if the
diameter of $S$ is larger than $K|S|^{1/(a+1)}$.

We define an indicator $X_S$ of the event that $S$ is bad. Since
$\log n\leq|S|=O(\log^{a+1}n)$, by Lemma \ref{lem:padiam}, we have
$$\Pr[X_S=1]\leq \frac{1}{\log^{a+1}n}.$$

By Lemma \ref{lem:colorsize}, the expected number of bad sets is at
most $\frac{2n}{\log^a
n}\cdot\frac{1}{\log^{a+1}n}=\frac{2n}{\log^{2a+1}n}$. By the
Chernoff bound, with probability $1-O(n^{-2})$, the number of bad
sets is at most $\frac{3n}{\log^{2a+1}n}$. Thus the total number of
nodes belonging to some bad set is at most $\frac{3n}{\log^a n}$. On
the other hand, for any large set $S$ that is not bad, its diameter
is at most $K|S|^{1/(a+1)}=O(\log n)$.

Given two nodes $u$ and $v$ with distinct colors. Suppose that $c_0$
nd $c_1$ are the colors of $u$ and $v$ respectively, that $X$ and
$Y$ are the sets of nodes of colors $c_0$ and $c_1$ respectively,
and that $u_0$ and $v_0$ are the seed nodes in $X$ and $Y$
respectively. We consider a path from $u$ to $v$ as follows: (a) the
first part is a path from $u$ to $u_0$ within the induced subgraph
of $X$, (b) the second part is a path from $u_0$ to $v_0$ consisting
of only global edges, and (c) the third part is a path from $u_0$ to
$u$, consisting of edges in the induced subgraph of $Y$. By the
argument above, the number of the union of all bad homochromatic
sets is bounded by $O(\frac{n}{\log^a n})$.  By
Definition~\ref{def:Securitymodel}, the giant connected component of
all the seed nodes can be interpreted as a union of $d$ uniform
recursive trees. By lemma \ref{lem:resursivetrees}, with probability
$1-o(1)$, there is a path from $u_0$ to $v_0$ in the induced
subgraph of all seed nodes with length at most $O(\log n)$.
Combining the three paths in (a), (b) and (c) above, we know that
the average node to node distance in $G$ is at most
$O(\frac{\frac{2n^2}{\log^a n}\cdot\log^{a+1}n+n^2\cdot\log
n}{n^2})=O(\log n)$. (ii) follows.

\bigskip

For (iii). Suppose that $G$ is a network constructed from the
security model. We interpret $G$ as a directed graph as follows: For
an edge $(u,v)$ in $G$, if $u$ and $v$ are created at time steps $i,
j$ respectively, then for $i>j$, we identify the edge $(u,v)$ as a
directed edge $(i,j)$.

We give an algorithm as follows: For any two nodes $u$, $v$ in $G$,

\begin{enumerate}
\item Following the direction of time order in $G$ (that is, an edge $(x,y)$ means that $y$ is
 created earlier than $x$) to find the seed nodes of the homochromatic sets of $u$ and $v$,
$u_0$ and $v_0$ say, respectively.
\item Take random walks from  $u_0$ and $v_0$ in a directed uniform
recursive tree of all the seed nodes created in (3) (c) of
Definition~\ref{def:Securitymodel}, until the two random walks
cross.

\end{enumerate}

By (i), step (1) runs in time $O(\log\log n)$, by
Lemma~\ref{lem:resursivetrees}, step (2) runs in time $O(\log n)$.
(iii) follows.

This completes the proof of Theorem~\ref{thm:Securityproperties}
(3).

\subsection{Small Community Phenomenon}
\label{sub3.1(3)}

Before proving (4) of Theorem~\ref{thm:Securityproperties}, we
introduce some notations.

Let $X$ be a homochromatic set, and $x_0$ be the seed node of $X$.
We say that $X$ is created at time step $t$, if the seed node $x_0$
of $X$ is created at time step $t$.

Suppose that $X$ is a homochromatic set. Recall that $X$ is created
at time $t_0$, if the seed of $X$ is created at time step $t_0$. For
$t\geq t_0$, we use $X[t]$ to denoted the set of all nodes sharing
the same color as that created at time step $t_0$ at the end of time
step $t$. That is, we use $X[t]$ to denote a homochromatic set at
the end of time step $t$.

\bigskip

For Theorem~\ref{thm:Securityproperties} (4). Next, we prove the
small community phenomenon stated in
Theorem~\ref{thm:Securityproperties} (4).

Intuitively speaking, we will show that the homochromatic sets
created not too early or too late \footnote{From now on, whenever we
say that a homochromatic set appears at sometime, we mean that its
seed node appears at that time.} are good communities with high
probability. Then the conclusion follows from the fact that the
number of nodes in the remaining homochromatic sets only takes up a
$o(1)$ fraction.

We focus on the homochromatic sets created in time interval
$[T_3,T_4]$, where $T_3=\frac{n}{\log^{a+2} n},
T_4=\left(1-\frac{1}{\log^{(a-1)/2}n}\right)n$.

 Given a
homochromatic set $S$, we use $t_S$ to denote the time at which $S$
is created.

Let $S$ be a homochromatic set with $t_S\in[T_3,T_4]$, and let $s$
be the seed node of $S$. For any $t\geq t_S$, we use
$\partial(S)[t]$ to denote the set of edges from $S[t]$ to
$\overline{S[t]}$, the complement of $S[t]$. By
Definition~\ref{def:Securitymodel}, $\partial (S)[t]$ consists of
two types of edges:

\begin{enumerate}

  \item [(1)] The edges from the seed node of $S[t]$ to earlier nodes,
  i.e., the edges of the form $(t_S,j)$ for some $j$, and

   \item [(2)] The edges from the seed nodes created after time $t_S$ to nodes in $S[t]$

\end{enumerate}

By Definition~\ref{def:Securitymodel}, the number of edges of type
(1) above is at most $d$.

We only need to bound the number of the second type of edges. We
first make an estimation on the total degrees of nodes in $S[t]$ at
any given time $t>t_S$.

For each $t\geq t_S$, we use $D(S)[t]$ to denote the total degree of
nodes in $S[t]$ at the end of time step $t$ of Definition
\ref{def:Securitymodel}. We have the following lemma.

\begin{lemma} \label{lem:homodegree}
For any homochromatic set $S$ created at time $t_S\geq T_3$,
$D(S)[n]=O(\log^{a+1}n)$ holds with probability $1-o(1)$.
\end{lemma}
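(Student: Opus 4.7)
The plan is to decompose the total degree $D(S)[n]$ into internal and external contributions, $D(S)[n] = 2\,|E(G_{S[n]})| + |\partial(S)[n]|$, and bound the two pieces separately. By step (4)(c) of Definition \ref{def:Securitymodel}, every node that joins $S$ after its seed is born contributes exactly $d$ new internal edges, so $|E(G_{S[n]})| = d(|S[n]|-1)$. Invoking Theorem \ref{thm:Securityproperties}(1)(ii), which already gives $|S[n]| = O(\log^{a+1} n)$ with probability $1-o(1)$, the internal contribution to $D(S)[n]$ is at most $O(\log^{a+1} n)$.

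For the external boundary $|\partial(S)[n]|$, I would classify the ways an edge can appear with exactly one endpoint in $S$. First, the $d$ edges drawn by the seed $s$ at time $t_S$ all go to nodes of earlier colours and contribute $d=O(1)$. For each $t>t_S$, a new seed is created with probability $1/\log^a t$, after which (a) its single preferential-attachment edge lands in $S$ with probability $D(S)[t-1]/(2d(t-1))$, and (b) each of its $d-1$ uniform random-seed edges equals $s$ with probability $1/|C_{t-1}|$. No other event changes $\partial(S)$. Conditioning on the event $\mathcal{E}$ of Definition \ref{def:event}, Lemma \ref{lem:colorsize} gives $1/|C_{t-1}| = O(\log^a(t-1)/(t-1))$, so the expected total contribution from (b) is $\sum_{t>t_S}\frac{1}{\log^a t}\cdot\frac{2(d-1)\log^a(t-1)}{t-1} = O(\log n)$, which is negligible compared with $O(\log^{a+1} n)$. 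For (a) I would absorb the self-referential dependence on $D(S)[t-1]$ through a standard preferential-attachment compensator: setting $M_t = D(S)[t]\prod_{s=t_S+1}^{t}\bigl(1+(2d(s-1)\log^a s)^{-1}\bigr)^{-1}$ and centring against the small drifts from (b) and the internal-joining events, one checks that $\prod_{s}\bigl(1+(2d(s-1)\log^a s)^{-1}\bigr) = 1+o(1)$, so the PA feedback multiplies $D(S)$ by only a $1+o(1)$ factor on top of the internal bound.

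Concentration would then follow from Azuma-Hoeffding (or Freedman's inequality) applied to the compensated martingale, since each per-step jump of $D(S)$ is bounded by $2d$ (when a same-colour node is created) or by $d$ (when a seed of another colour draws edges that happen to hit $S$). The main obstacle I anticipate is exactly the self-referential nature of contribution (a): summing raw expectations is circular because $D(S)[t-1]$ appears on both sides of the recursion. The cleanest remedy is the compensator-martingale route sketched above; alternatively, one can dominate the boundary process by a two-colour Polya urn (edges at $S$ versus elsewhere) whose occupancy is known to concentrate. Either route converts the expectation bound into the desired $O(\log^{a+1} n)$ bound with probability $1-o(1)$ and completes the proof.
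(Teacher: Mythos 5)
Your proposal is correct and matches the paper's own proof at the level of ideas: both absorb the preferential-attachment feedback into a compensator $\theta_t=\prod_{s}\bigl(1+O(1/(s\log^a s))\bigr)=1+o(1)$ (since $t_S\geq T_3=n/\log^{a+2}n$), verify that the compensated, recentered process built from $D(S)[t]$ is a supermartingale with small per-step drift $O(\log^a t/t)$ and bounded increments $\leq 2d$, and then apply a Freedman-type submartingale tail bound. Your preliminary split $D(S)[n]=2|E(G_{S[n]})|+|\partial(S)[n]|$ with the internal part bounded directly via $|S[n]|=O(\log^{a+1}n)$ is a helpful way to see where the dominant $\log^{a+1}n$ contribution comes from, but since the compensator $M_t$ you propose is still built on the full $D(S)[t]$ rather than on $|\partial(S)[t]|$ alone, the split is expository rather than a genuinely different argument---the paper reaches the same conclusion by simply treating the internal-joining drift $\approx 2d/|C_{t-1}|$ as one more term in the recurrence for $D(S)[t]$.
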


\begin{proof}
We only need to show that for any $t\geq T_3$, if $S$ is a
homochromatic set created at time step $t$, then
$D_n(S)[n]=O(\log^{a+1}n)$ holds with probability $1-o(n^{-1})$.
Without loss of generality, assume that $S$ is created at time step
$t_S=T_3$. The recurrence on $D(S)[t]$ can be written as
\begin{eqnarray*}
E[D(S)[t]\ |\ D(S)[t-1]] &=& D(S)[t-1]+\frac{1}{\log^a t} \left[
\frac{D(S)[t-1]}{2d(t-1)}+(d-1)\cdot\frac{1}{|C_{t-1}|} \right]\\
&& +\left(1-\frac{1}{\log^a t}\right)\cdot\frac{2d}{|C_{t-1}|}.
\end{eqnarray*}

We suppose again the event $\mathcal{E}$ that for all $t\geq
T_1=\log^{a+1}n$, $\frac{t}{2\log^a t}\leq |C_t|\leq
\frac{2t}{\log^a t}$, which almost surely happens by Lemma
\ref{lem:colorsize}. It holds also for $t\geq T_3$. On this
condition,

\begin{eqnarray}
E[D(S)[t]\ |\ D(S)[t-1],\mathcal{E}]&\leq& D(S)[t-1]
\left[1+\frac{1}{\log^a t}\frac{1}{2d(t-1)}\right] +\frac{2d}{|C_{t-1}|} \nonumber\\
&\leq& D(S)[t-1] \left[1+\frac{1}{\log^a t}\frac{1}{2d(t-1)}\right]
+\frac{4d\log^a t}{t}. \label{eqn:degree}
\end{eqnarray}
Then we use the submartingale concentration inequality (see
\cite{CL06}, Chapter 2, for information on martingales) to show that
$D(S)[t]$ is small with high probability.

Since
\begin{eqnarray*}
&&8d\log^{a+1}(t+1)-8d\left(1+\frac{1}{\log t}\frac{1}{2d(t-1)}\right)\cdot\log^{a+1} t\\
&\geq& 8d\log^a t\left(\log\frac{t+1}{t}\right)-\frac{8d\log^a t}{2d(t-1)}\\
&\geq& \frac{8d\log^a t}{t+1}-\frac{8d\log^a t}{2d(t-1)}\\
&\geq& \frac{4d\log^a t}{t},
\end{eqnarray*}
applying it to Inequality (\ref{eqn:degree}), we have
\begin{eqnarray*}
&&E[D(S)[t]\ |\ D(S)[t-1],\mathcal{E}]-8d\log^{a+1} (t+1)\\
&\leq& (1+\frac{1}{\log t}\frac{1}{2d(t-1)})(D(S)[t-1]-8d\log^{a+1}
t).
\end{eqnarray*}
For $t\geq T_3$, define $\theta_t=\Pi_{i=T_3+1}^{t} (1+\frac{1}{\log
i}\frac{1}{2d(i-1)})$ and $X[t]=\frac{D(S)[t]-8d\log^{a+1}
(t+1)}{\theta_t}$. Then
$$E[X[t]\ |\ X[t-1],\mathcal{E}]\leq X[t-1].$$
Note that
$$X[t]-E[X[t]\ |\ X[t-1],\mathcal{E}] = \frac{D(S)[t]-E[D(S)[t]\ |\ D(S)[t-1],E]}{\theta_t}\leq 2d,$$
Since
$$D(S)[t]-D(S)[t-1]\leq 2d,$$
we have
\begin{eqnarray*}
\Var[X[t]\ |\ X[t-1],\mathcal{E}] &=&
E[(X[t]-E(X[t]|X[t-1],\mathcal{E}))^2]\\
&=& \frac{1}{\theta_t^2}
E[(D(S)[t]-E(D(S)[t]\ |\ D(S)[t-1],\mathcal{E}))^2]\\
&\leq& \frac{1}{\theta_t^2}
E[(D(S)[t]-D(S)[t-1])^2|D(S)[t-1],\mathcal{E}]\\
&\leq& \frac{2d}{\theta_t^2}
E[D(S)[t]-D(S)[t-1]\ |\ D(S)[t-1],\mathcal{E}]\\
&\leq& \frac{2d}{\theta_t^2} \left[ \frac{4d\log^a
t}{t}+\frac{1}{\log^a t}\cdot\frac{D(S)[t-1]}{2d(t-1)} \right]\\
&=& \frac{8d^2\log^a t}{t\theta_t^2}+\frac{1}{(t-1)\theta_t\log^a
t}\cdot\frac{D(S)[t-1]}{\theta_t}\\
&\leq& \frac{8d^2\log^a t}{t\theta_t^2}+\frac{8d\log^{a+1}
t}{(t-1)\theta_t^2\log^a t}+\frac{X[t-1]}{(t-1)\theta_t\log^a
t}\\
&\leq& \frac{9d^2\log^a
t}{t\theta_t^2}+\frac{X[t-1]}{2d(t-1)\theta_t\log^a t}.
\end{eqnarray*}
Note that $\theta_t$ can be bounded as
\begin{eqnarray*}
\theta_t&\sim& e^{\sum_{i=T_3+1}^t\frac{1}{2d(i-1)\log i}}\in
[(\frac{t}{T_3})^{\frac{1}{2d\log n}},
(\frac{t}{T_3})^{\frac{1}{2d\log T_3}}].
\end{eqnarray*}
Then
\begin{eqnarray*}
\sum_{i=T_3+1}^t\frac{9d^2\log^a i}{i\theta_i^2}\leq 9d^2\log^a n
\int_{T_3}^t\frac{1}{i}\cdot\left(\frac{T_3}{i}\right)^{\frac{1}{d\log
n}}di \leq 9d^2\log^a n \cdot \log n=9d^2\log^{a+1} n,
\end{eqnarray*}
and
\begin{eqnarray*}
\sum_{i=T_3+1}^t\frac{1}{2d(i-1)\theta_i\log^a i}\leq
\frac{1}{d\log^a T_3}\int_{T_3}^t\frac{T_3^{\frac{1}{2d\log
n}}}{i\cdot i^{\frac{1}{2d\log n}}}di \leq \frac{\log n}{d\log^a
T_3}.
\end{eqnarray*}

Here we can safely assume that $X[t]$ is non-negative, which means
that $D(S)[t]\geq 8\log^{a+1}(t+1)$, because otherwise, the
conclusion follows immediately. Let $\lambda=10\log^{a+1}n$. By the
submartingale inequality (\cite{CL06}, Theorem 2.40),
\begin{eqnarray*}
&&\Pr[X[t]=\omega(\log^{a+1}n)]\leq\Pr[X[t]\geq X[T_3]+\lambda]\\
&\leq& \exp(-\frac{\lambda^2}{2(9d^3\log^{a+1} n+  10\log^{a+1}
n+d\lambda/3)})+O(n^{-2})=O(n^{-2}).
\end{eqnarray*}
This implies that $D(S)[n]=O(\log^{a+1}n)$ holds with probability
$1-O(n^{-2})$.
\end{proof}

Let $S$ be a homochromatic set created at some time $t_S<T_4$. Let
$s$ be the seed node of $S$. We consider the edges from seed nodes
created after time step $t_S$ to nodes in $S$. For $t>t_S$, if a
seed node, $v$ say, is created at time step $t$, then there are two
types of edges from $v$ to nodes in $S[t-1]$, they are:

\begin{enumerate}
\item [(1)] (First type edges) An edge $(v,u)$ for some $u\in S[t-1]$ created in step
(3) (b) of Definition \ref{def:Securitymodel}.

We call the edges created in (1) are the {\it first type edges}.

\item [(2)] (Second type edges) Some edges $(v,s)$ for  the seed node $s\in S[t-1]$
created by step (3) (c) of Definition \ref{def:Securitymodel}.

We call the edges created in (2) above the {\it second type edges}.

\end{enumerate}

 We will bound the numbers of these two types of edges,
respectively.

By a similar proof to that in Lemma~\ref{lem:setsize} (1), we are
able to show that, with probability $1-o(1)$, $S=S[n]$ has a size
$\Omega(\log^{\frac{a+1}{2}} n)$, and so a volume
$\Omega(\log^{\frac{a+1}{2}} n)$. We suppose the event, denoted by
$\mathcal{F}$, that for any $t\geq T_S$, $D(S)[t]=O(\log^{a+1}n)$,
which holds with probability $1-o(1)$ by Lemma \ref{lem:homodegree}.
For each $t\geq T_S$, we define a $0,1$ random indicator variable
$X_t$ which indicates the event that the first type edge connects to
$S$ at time $t$ and satisfies
$$\Pr[X_t=1|\mathcal{F}]=\frac{1}{\log^a t}\frac{D(S)[t-1]}{2d(t-1)}\leq \frac{\log^{1+\epsilon} n}{2d(t-1)},$$
for arbitrarily small positive $\epsilon$, i.e.,
$0<\epsilon<\frac{a-1}{4}$. Then
$$E[\sum_{t=t_S}^n X_t]\leq \log^{1+\epsilon}n\sum_{t=t_S}^n\frac{1}{2d(t-1)}\leq (\log^{1+\epsilon}n)(\log\log n).$$
By the Chernoff bound,
$$\Pr[\sum_{t=t_S}^n X_t\geq 2(\log^{1+\epsilon}n)(\log\log n)]\leq n^{-2}.$$
That is, with probability at least $1-n^{-2}$, the total number of
first type edges is upper bounded by $2(\log^{1+\epsilon}n)(\log\log
n)$.

For the second type of edges, conditioned on the event
$\mathcal{E}$, this number is expected to be at most
$$\sum\limits_{t=T_3}^n \frac{1}{\log^a t}\cdot\frac{1}{|C_t|}\cdot(d-1)
\leq O(\sum\limits_{t=T_3}^n \frac{1}{\log^a t}\cdot\frac{2\log^a
t}{t})=O(\log \log n).$$ So by the Chernoff bound, with probability
$1-o(1)$, the number of second type of edges is upper bounded by
$O(\log n)$.

Hence, with probability $1-o(1)$, the conductance of $S$ is
$$\Phi (S)=O\left(\frac{2(\log^{1+\epsilon}n)(\log\log n)+\log n}{\log^{(a+1)/2} n}\right)
\leq O\left(\log^{-\frac{a-1}{4}}n\right)\leq
O\left(|S|^{-\frac{a-1}{4(a+1)}}\right).$$

The total number of nodes belonging to the homochromatic sets which
appear before time $T_3$ or after time $T_4$ is at most
$\log^{a+1}n\cdot\frac{n}{\log^{a+2}n}+\frac{n}{\log^{(a-1)/2}
n}=o(n)$ for any constant $a>1$. Therefore, $1-o(1)$ fraction of
nodes of $G$ belongs to a subset $W$ of nodes, which has a size
bounded by $O(\log^{a+1} n)$ and a conductance bounded by
$O\left(|W|^{-\frac{a-1}{4(a+1)}}\right)$. This proves Theorem
\ref{thm:Securityproperties} (4).

This completes the proof of Theorem \ref{thm:Securityproperties}.

\section{Probabilistic and Combinatorial Principle} \label{sec:pcomb}

Theorem \ref{thm:Securityproperties} provides the necessary
structural properties for proving Theorems
\ref{thm:cascadeonSecurity}, and \ref{thm:rancascadeonSecurity}. In
this section, we prove the necessary probabilistic and combinatorial
principles for the proofs of the security theorems, that is, Theorem
\ref{thm:length}, and Theorem \ref{thm:injury}.

\subsection{Degree Priority Theorem} \label{sec:comb}

In this subsection, we prove Theorem \ref{thm:length}.

\begin{proof} (Proof of Theorem~\ref{thm:length}) For (1). To bound the expected
length of degrees for all nodes, it suffices
to bound the length of degrees of seed nodes. Let $v$ be a seed node
created at time $t_0$.

By Lemma~\ref{lem:colorsize}, for each $t$, $|C_t|$ is expected to
be $\Theta(\frac{t}{\log^a t})$. Thus the expected number of seed
nodes created after time $t_0$ and linked to $v$ is at most
$d\cdot\frac{1}{\log^a t}\cdot\frac{1}{|C_t|}=O(\frac{1}{t})$. This
shows that
$$E[l(v)]=O(\sum\limits_{t=1}^n \frac{1}{t})=O(\log n).$$

(1) follows.

\smallskip

For (2), (3) and (4). We prove (2) - (4) together by considering two
cases:

Case 1. $v$ is a non-seed node.

Suppose that $v$ is created at time step $t_0$. We use $D(v)$ to
denote the degree of $v$ contributed by nodes of the same color as
$v$, and $F(v)$ to denote the maximal degree of $v$ contributed by
nodes that share the same color other than the color of $v$. By (4)
of Definition~\ref{def:Securitymodel}, $D(v)[t_0]=d$, and
$F(v)[t_0]=0$.

For $t+1>t_0$, let $u$ be the node created at time step $t+1$. If
$u$ is a seed node, then by (3) of
Definition~\ref{def:Securitymodel}, we have that $D(v)[t+1]=D(v)[t]$
and $F(v)[t+1]\leq \max\{F(v)[t], 1\}$. If $u$ is a non-seed node,
then either $u$ has the same color as that of $v$, or $u$ chooses an
 old color different from that of $v$, in either case, we have that
$D(v)[t+1]\geq D(v)[t]$ and $F(v)[t+1]=F(v)[t]$.

Therefore, we have that the first degree of $v$, $d_1(v)$ is always
contributed by the neighbors of $v$ that share the same color as
$v$, that is, $D(v)$, and that the second degree $d_2(v)\leq 1$.

Case 2. $v$ is a seed node.

Let $v$ be a node created at time step $t_0$. We use $F(v)[t]$ to
denote the largest number of homochromatic neighbors having
different color from $v$ at the end of time step $t$.

By step (3) of Definition~\ref{def:Securitymodel}, $F(v)[t_0]\leq
d$. For every $t\geq t_0$, We consider time step $t+1$. Let $u$ be
the node created at time step $t+1$. If $u$ is a seed node, then by
(3) of Definition~\ref{def:Securitymodel}, we have that
$F(v)[t+1]\leq \max\{F(v)[t], d\}$. If $u$ is a non-seed node, then
by (4) of Definition~\ref{def:Securitymodel}, $F(v)[t+1]=F(v)[t]$.

Therefore, we have that $F(v)[n]\leq d =O(1)$.

Next we consider the degree of $v$ contributed by the neighbors of
the same color as $v$. Note that a seed node has a degree at least
$d$ contributed by local edges, unless the homochromatic set of the
seed node is too small. This kind of seed nodes is likely to be
created too late. We choose an appreciate time stamp $T$ and show
that there are only a negligible number of seed nodes born after $T$
and all the seed nodes born before $T+1$ are contained in
homochromatic sets of non-negligible size and thus have a large
degree contributed by local edges.

Here we choose the time step $T=T_4$, defined in Subsection
\ref{sub3.1(3)}.

By the proof of Lemma \ref{lem:setsize}, the homochromatic sets
created at time step $\leq T_4$ has size at least $\Omega
(\log^{\frac{a+1}{2}} n)$ with probability $1-o(1)$. The next lemma
guarantees that a seed node of a homochromatic set of size $\Omega
(\log^{\frac{a+1}{2}} n)$ has degree $\Omega (\log^{\frac{a+1}{4}}
n)$  contributed by local edges.

By Definition \ref{def:Securitymodel} (4), the induced subgraph of a
homochromatic set basically follows the PA scheme, so it suffices to
prove a result for networks of the PA model.

\begin{lemma} \label{lem:degreeexpectation} Suppose that $G$ is a network
generated from the preferential attachment model. Let $v_i$ be the
$i$-the vertex in $G$. Then we have that the degree of $v_i$ is
expected to be $\sqrt{\frac{n}{i}}\cdot d$.
\end{lemma}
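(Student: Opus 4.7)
The plan is to set up and solve a simple mean-field recurrence for the expected degree of $v_i$ through the steps of the preferential attachment process. Let $d_i(t)$ denote the degree of vertex $v_i$ at the end of time step $t$, so that $d_i(i)=d$ (the vertex is born with exactly $d$ edges). At time step $t+1$, a new vertex is inserted and creates $d$ edges, each of which is linked to an existing vertex chosen with probability proportional to its current degree. Since the total degree at the end of time step $t$ is $2dt$, the probability that any one of the $d$ new edges attaches to $v_i$ equals $d_i(t)/(2dt)$, and so
\[
E[d_i(t+1)\mid d_i(t)] \;=\; d_i(t)\;+\;d\cdot\frac{d_i(t)}{2dt}\;=\;d_i(t)\left(1+\frac{1}{2t}\right).
\]

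Taking expectations on both sides and iterating from $t=i$ up to $t=n-1$, I would obtain
\[
E[d_i(n)] \;=\; d\cdot\prod_{t=i}^{n-1}\left(1+\frac{1}{2t}\right).
\]
The next step is the standard asymptotic analysis of this product: using $\log(1+x)=x+O(x^2)$ gives
\[
\sum_{t=i}^{n-1}\log\left(1+\frac{1}{2t}\right) \;=\; \frac{1}{2}\sum_{t=i}^{n-1}\frac{1}{t} + O\!\left(\frac{1}{i}\right) \;=\; \frac{1}{2}\log\frac{n}{i} + O(1/i),
\]
so exponentiating yields $E[d_i(n)] = d\cdot\sqrt{n/i}\cdot(1+o(1))$, which is the desired estimate.

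Two subtle points should be addressed. First, the equality ``total degree $=2dt$'' is only exact if the process creates no self-loops and never duplicates an edge; since the paper already treats $\mathcal{P}(n,d)$ as almost simple (cf.\ the proof of Theorem~\ref{cor:PAnegative}), the correction is absorbed into the $o(1)$ factor. Second, the computation above is for the first $d$ vertices treated symbolically; technically one should either start the recurrence from a small initial graph $G_0$ with a fixed total degree, or verify that the contribution from early time steps (before the preferential attachment ``stabilizes'') is negligible. The main, and essentially only, obstacle is therefore bookkeeping: turning the heuristic mean-field recurrence into an honest expectation identity and controlling the lower-order error in the product, after which the stated bound $E[d_i(n)]=\sqrt{n/i}\cdot d$ falls out directly.
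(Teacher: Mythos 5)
Your proposal is correct and follows essentially the same route as the paper: both set up the mean-field recurrence $E[d_i(t+1)\mid d_i(t)]=d_i(t)\bigl(1+\frac{1}{2t}\bigr)$ and arrive at the product $E[d_i(n)]=d\prod_{j=i}^{n-1}\bigl(1+\frac{1}{2j}\bigr)$; the only cosmetic difference is that the paper evaluates this product in closed form and applies Stirling's formula, while you use the $\log(1+x)$ expansion and harmonic-sum asymptotics, which is an equivalent elementary computation.
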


\begin{proof}
Let $s_i$ be the expected degree of $v_i$. Fix $i$, and for $j\geq
i$, let $a_i(j)$ be the expected degree contributed by $v_j$ to
$v_i$ and $T_i(j)$ be the expected degree of $v_i$ at the end of
step $j$. So for each $i$, $a_i(i)=T_i(i)=d$, $T_i(n)=s_i$ and
$T_i(j)=\sum_{k=i}^j a_i(k)$. Note that the volume of the whole
graph at step $j$ is $2dj$. For $j\geq i$,
$a_i(j+1)=\frac{T_i(j)}{2dj}\cdot d=\frac{T_i(j)}{2j}$, and hence
$T_i(j+1)=T_i(j)+\frac{T_i(j)}{2j}$. By this recurrence equation, we
have
$$T_i(n)=\prod\limits_{j=i}^{n-1}
(1+\frac{1}{2j})\cdot T_i(i).$$ Define a function
$f(m)=\prod\limits_{j=1}^{m-1} (1+\frac{1}{2j})$. So
$T_i(n)=\frac{f(n)}{f(i)}\cdot d$. Since
$f(n)=\frac{(2n-1)!}{2^{2(n-1)}[(n-1)!]^2}$, by the Stirling
formula, when $n$ is large enough,
$f(n)=\frac{2}{\sqrt{\pi}}\cdot\sqrt{n}$. Thus,
$s_i=T_i(n)=\sqrt{\frac{n}{i}}\cdot d$.
\end{proof}

So by step (3) of Definition~\ref{def:Securitymodel}, with
probability $1-o(1)$, a homochromatic set of size at least $\Omega
(\log^{\frac{a+1}{2}} n)$ has a seed node of degree at least $\Omega
(\log^{\frac{a+1}{4}} n)$  contributed by local edges. So the seed
nodes created at time step $\leq T_4$ have their first degrees
contributed by local edges with probability $1-o(1)$.

By the proof in Sunsection \ref{sub3.1(3)}, the number of seed nodes
created after time step $T_4$ is negligible.

 Therefore with probability $1-o(1)$, a
randomly picked seed node has its first degree contributed by its
neighbors sharing the same color as the seed node.

All (2), (3) and (4) follow from Cases 1 and 2.

This completes the proof of Theorem~\ref{thm:length}.
\end{proof}

\subsection{Infection-Inclusion Theorem} \label{subsec:injury}

In this subsection, we prove Theorem \ref{thm:injury}. At first, we
give a basic definition of communities, targeted communities, and
infected communities.

\begin{definition} \label{def:communityinjury} Let $G$ be a network
constructed from the security model.

\begin{enumerate}
\item [(1)] A community of $G$ is the induced subgraph of a
homochromatic set of $G$.
\item [(2)] We say that a community, $G_X$ say, is created at time
step $t$, if the seed node of $G_X$ is created at time step $t$.
\item [(3)] We say that a community, $G_X$ say, is targeted, if
there is a node in $X$ which is targeted by an attack, and
non-targeted, otherwise.
\item [(4)] We say that a community $G_X$ is infected, if there is a
node in $X$ which has been either targeted or infected, and
non-infected, otherwise.

\end{enumerate}

\end{definition}

\begin{proof} (Proof of Theorem~\ref{thm:injury}) For (1).  We consider two cases:

For (i). The infection of $G_Y$ from a non-seed node $x_1$ in $G_X$.

By Definition~\ref{def:Securitymodel}, there is no edges between
non-seed nodes in $G_X$ and non-seed nodes in $G_Y$, and there is no
edge between the seed node of $G_X$ and non-seed nodes in $G_Y$.

Therefore, there is no injury from $G_X$ to any non-seed node in
$G_Y$. Hence the only possible node in $G_Y$ which may be injured by
$G_X$ is the seed node $y_0$ of $G_Y$. (i) follows.

For (ii). The injury of the seed node in $G_Y$ from $G_X$.

 By
Theorem~\ref{thm:length}, the number of neighbors of the seed node
$y_0$ (of $G_Y$) in $G_X$ is less than or equal to the second degree
of $y_0$, which is at most a constant.

For (2). Suppose that $x_1$ and $y_1$ are non-seed nodes in $X$ and
$Y$ respectively.

For (i). The injury of $G_Y$ from the non-seed node $x_1$.

This fails to occur since at the stage at which $x_1$ is created, it
links to nodes only in $G_X$.

For (ii). The injury of the seed node $y_0$ of $G_Y$ from the whole
community $G_X$.

In this subcase, the possible neighbors of $y_0$ in $G_X$ is  only
the seed $x_0$ of $G_X$, and $y_0$ is a seed node of $G_Y$.
Therefore the injury of $y_0$ from $G_X$ is bounded by $1$.

For (iii). The injury of a non-seed node $y_1$ from $G_X$.

The same as that in (i) and (ii) above, the only possible neighbors
of $y_1$ in $G_X$ is the seed node $x_0$ of $G_X$. In this case, by
Definition~\ref{def:Securitymodel}, the only possibility that
 there is a link between $x_0$ and and a non-seed node $y$ of $G_Y$ is that $y$ is the unique
 node chosen by the preferential attachment scheme in step (3) (b)
 of Definition~\ref{def:Securitymodel} at the time step at
 which $x_0$ is created.

(3) and (4) follow from (1) and (2).

 This completes the proof of Theorem~\ref{thm:injury}.
\end{proof}

\section{Security Theorems of the Security Model}
\label{sec:cascadeonSecurity}

In this section, we will prove the security theorems of the security
model, i.e., Theorems \ref{thm:cascadeonSecurity} and
\ref{thm:rancascadeonSecurity}, by applying the fundamental theorem,
i.e., Theorem \ref{thm:Securityproperties}, and the probabilistic
and combinatorial principles in Theorems~\ref{thm:length}, and
\ref{thm:injury}.

\subsection{Infection Priority Tree} \label{subsec:basiclemma}

In this subsection, we propose the notion of infection priority tree
of a network and develop the key lemmas to the proofs of Theorems
\ref{thm:cascadeonSecurity} and \ref{thm:rancascadeonSecurity}, by
using Theorems~\ref{thm:length}, and \ref{thm:injury}.

At first, we have that

\begin{lemma} \label{lem:injurylemma2} (Infection Lemma)
For any communities $G_X$ and $G_Y$, the injury of $G_Y$ from the
whole community $G_X$ satisfies:

\begin{enumerate}
\item For the seed node $y_0$ of $G_Y$, the injury of $y_0$ from
$G_X$ is bounded by $O(1)$.

\item For a non-seed node $y\in Y$, $G_X$ injures $y$, only if the
following occurs:

\begin{itemize}
\item $y$ is injured only by the seed node $x_0$ of $G_X$,
\item $y$ is created before the creation of the seed $x_0$ of $G_X$, and
\item At the time step at which $x_0$ is created, (3) (b) of
Definition~\ref{def:Securitymodel} occurs, which creates an edge
$(x_0,y)$.

\end{itemize}

\end{enumerate}
\end{lemma}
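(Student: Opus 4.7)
My plan is to derive the Infection Lemma as a direct corollary of Theorem~\ref{thm:injury} by unpacking its four clauses according to the relative creation order of the two seed nodes $x_0$ of $G_X$ and $y_0$ of $G_Y$. Let $s$ and $t$ denote the creation times of $x_0$ and $y_0$ respectively; since $G_X$ and $G_Y$ carry distinct colors, necessarily $s\neq t$, so exactly one of $s<t$ or $s>t$ holds. All other nodes of $X$ and $Y$ are non-seeds, each born strictly after its own seed.

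For part~(1), I would verify the $O(1)$ bound on the injury of $y_0$ from $G_X$ in each sub-case separately. If $s<t$, Theorem~\ref{thm:injury}(1)(ii) directly gives a constant bound on the injury of $y_0$ from the whole community $G_X$. If $s>t$, Theorem~\ref{thm:injury}(2)(ii) gives the even stronger bound of $1$, reflecting the fact that once $y_0$ is present, a later community can only touch $y_0$ through the single preferential-attachment edge produced in step~(3)(b) of Definition~\ref{def:Securitymodel} at the birth of $x_0$. Taking the maximum of the two bounds yields $O(1)$.

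For part~(2), I would first rule out $s<t$: by Theorem~\ref{thm:injury}(1)(i), if $G_X$ is older than $G_Y$, then $G_X$ cannot inject any edge into a non-seed node of $G_Y$, because the non-seed members of $G_Y$ draw all of their neighbors locally via step~(4) of Definition~\ref{def:Securitymodel}. Hence $s>t$ is forced, and since $y$ must already lie in the graph at the moment the edge $(x_0,y)$ is created, $y$ itself is born strictly before $x_0$. Theorem~\ref{thm:injury}(2)(i) then rules out every non-seed vertex of $G_X$ as a possible injurer of any node of $G_Y$, so only $x_0$ can injure $y$; and Theorem~\ref{thm:injury}(2)(iii) identifies the unique mechanism creating the required edge, namely step~(3)(b) of Definition~\ref{def:Securitymodel} executed at time $s$. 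The three bullets of part~(2) fall out. I do not anticipate a genuine obstacle: Theorem~\ref{thm:injury} was designed precisely to expose this structure, and the Infection Lemma merely repackages it into the cleaner form needed for the infection-priority-tree analysis that follows.
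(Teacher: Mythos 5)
Your proposal is correct and takes essentially the same route as the paper: the paper's proof of Lemma~\ref{lem:injurylemma2} is simply the one-line citation ``By Theorem~\ref{thm:injury},'' and your argument is exactly the case-by-case unpacking of that theorem's clauses (comparing the creation times $s$ of $x_0$ and $t$ of $y_0$) that the paper leaves implicit.
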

\begin{proof} By Theorem~\ref{thm:injury}.
\end{proof}

By Theorem~\ref{thm:Securityproperties} (2) (i), every community has
size bounded by $O(\log^{a+1} n)$, we can safely assume the
following:

\begin{definition} \label{def:conv} (Convention)
For any community $G_X$, if there is a node $x\in X$ is either
targeted or infected, then all the nodes in $X$ have been infected.
\end{definition}

By Definition~\ref{def:conv}, we consider only the infections among
different communities. By Lemma~\ref{lem:injurylemma2}, we only
consider two types of injuries among two communities.

\begin{definition} \label{injurytype} (Injury Type) We define:

\begin{enumerate}
\item (First type) The first type of injury is the injury of a seed
node.

\item (Second type) The second type is an injury following an edge created by (3)
(b) of Definition~\ref{def:Securitymodel}.
\end{enumerate}
\end{definition}

To deal with the first type injury, we introduce the notion of
strong communities.

\begin{definition} \label{def:strong} Given a homochromatic set $X$,
suppose that $x_0$ is the seed node of $X$, and that $G_X$ is the
community induced by $X$.

We say that $G_X$ is a strong community, if the seed node $x_0\in X$
will never be infected, unless there is a node $x\in X$ which has
already been infected. Otherwise, we say that $G_X$ is a vulnerable
community.

\end{definition}

By Theorem~\ref{thm:length}, for every seed node $x$ of a community
$G_X$, the length of degrees of $v$ is bounded by $O(\log n)$, and
the second degree of $v$ is bounded by $O(1)$, therefore the injury
of the seed node $x$ from the collection of all communities other
than $G_X$ itself can be bounded by $O(\log n)$.  This allows us to
show that for any set of attacks of poly logarithmic sizes, almost
surely, there is a huge number of strong communities.

By Lemma~\ref{lem:injurylemma2}, the injury among strong communities
is the second type. To analyze the infections among the strong
communities, we define {\it the infection priority tree } $T$ of $G$
by modulo the small communities from the network.

\begin{definition} \label{def:reduction} (Defining infection priority tree $T$) Let $G$ be a network
constructed by Definition~\ref{def:Securitymodel}. We define the
infection priority tree $T$ to be a directed graph as follows:

\begin{enumerate}
\item Let $H$ be the graph obtained from $G$ by deleting all the edges constructed by (3) (c) of
Definition~\ref{def:Securitymodel}, keeping the directions in $G$.
\item Let $T$ be the directed graph obtained from $H$ by merging
each of the homochromatic sets into a single node.

\end{enumerate}

\end{definition}

Then we have that

\begin{lemma} \label{lemma:direction}
Any infection from a strong community to a strong community must be
triggered by a directed edge in the infection priority tree $T$.

\end{lemma}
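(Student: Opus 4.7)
My plan is to trace how an infection crosses from a strong community $G_X$ to a strong community $G_Y$, pin down the single edge of $G$ that carries it across, and verify that this edge survives both reduction steps of Definition~\ref{def:reduction}.

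First, fixing the cascading process, I would consider the first moment at which any node $y\in Y$ becomes infected, so that all of $Y$ is still healthy just before. By Definition~\ref{def:strong}, the seed $y_0$ of $G_Y$ cannot be the first infected node of $Y$, so $y$ must be a non-seed of $Y$. Next I would apply Lemma~\ref{lem:injurylemma2}~(2) (equivalently, Theorem~\ref{thm:injury}~(4) together with (2)(iii)) to $y$: the only way a non-seed of $Y$ can be injured from outside $Y$ is via an edge $(x_0,y)$ where $x_0$ is the seed of some community $G_X$ created after $G_Y$, and that edge was created in step~(3)(b) of Definition~\ref{def:Securitymodel}. Strongness of $G_X$, together with Definition~\ref{def:conv}, forces $G_X$ to be already infected at that moment, so this (3)(b) edge is the trigger bringing the infection from $G_X$ into $G_Y$.

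To finish, I would verify that this edge survives in $T$. Step~(1) of Definition~\ref{def:reduction} deletes only (3)(c) edges, so the (3)(b) edge $(x_0,y)$ is retained in $H$ with its original direction. Step~(2) merges each homochromatic set into a single vertex; since $x_0\in X$ and $y\in Y$ with $X\neq Y$, the edge becomes a directed edge from the vertex for $G_X$ to the vertex for $G_Y$ in $T$, which is exactly the triggering edge claimed.

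I do not expect a serious obstacle here: the statement is essentially a bookkeeping consequence of Theorem~\ref{thm:injury} together with the defining property of strongness. The one subtle point is the application of Definition~\ref{def:strong}: it must be invoked at the precise moment just before any node of $Y$ is infected, so that the hypothesis that no $y\in Y$ is yet infected matches exactly the configuration excluded by the definition. The real work is not in this lemma but in the surrounding arguments that establish that most communities are strong (via Theorems~\ref{thm:length} and~\ref{thm:Securityproperties}) and that $T$ has height $O(\log n)$.
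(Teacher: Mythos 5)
Your proof is correct and fills in exactly the steps that the paper's one-line citation ("By Definition~\ref{def:reduction}, Definition~\ref{def:strong}, and Theorem~\ref{thm:injury}") compresses: you invoke strongness at the first moment of infection of $Y$ to rule out the seed $y_0$, use Theorem~\ref{thm:injury}~(4) (equivalently Lemma~\ref{lem:injurylemma2}~(2)) to pin the trigger to a $(3)(b)$ edge, and then confirm that this edge survives both reduction steps in Definition~\ref{def:reduction}. This is the same approach as the paper, just made explicit.
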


\begin{proof}
By Definition~\ref{def:reduction}, Definition~\ref{def:strong}, and
Theorem~\ref{thm:injury}.
\end{proof}

Lemma~\ref{lemma:direction} shows that the cascading behavior in the
infection priority tree $T$ is always directed from a seed node to
an old non-seed node created in (3) (b) of
Definition~\ref{def:Securitymodel}.

Now the key to our proofs is that cascading procedure in $T$ must
terminate shortly, that is, after $O(\log n)$ many steps.

\begin{lemma} \label{lemma:height} With probability $1- o(1)$, the
following hold:
\begin{enumerate}
\item The infection priority tree $T$ is a directed tree.
\item The height of the infection priority tree $T$ is $O(\log n)$.
\end{enumerate}

\end{lemma}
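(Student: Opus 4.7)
The plan is to read off the structure of $T$ directly from Definition \ref{def:Securitymodel}, and then bound its height by reducing to the known height bound for preferential-attachment trees.

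For part (1), I would classify which edges of $G$ survive the two-step construction of $T$. Step (3)(c) edges are deleted by fiat. The edges created in step (4)(c) connect two same-color nodes and therefore become self-loops after each homochromatic set is collapsed to a single node, so they contribute nothing to $T$. The only cross-community edges that survive are those from step (3)(b): when a new seed $v$ is born at time $t_v$, exactly one edge $(u,v)$ with $u\in G_{t_v-1}$ is added. Since $v$ carries a brand-new color, $u$ must lie in a strictly older community. Hence every community other than those present in the constant-size initial graph $G_2$ has exactly one outgoing edge in $T$, and it points to a strictly earlier community. Consequently $T$ is a directed forest on the communities; collapsing the $O(1)$ roots coming from $G_2$ (or treating them as the roots of a forest) makes it a tree, which proves (1).

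For part (2), I would view $T$ as a random tree built one community at a time. Let $\tau_1<\tau_2<\cdots<\tau_m$ be the birth times of the seed nodes; Lemma \ref{lem:colorsize} gives $m=\Theta(n/\log^a n)$ with probability $1-o(1)$. When the $k$-th seed is born, the probability that its parent in $T$ is a specific earlier community $C_j$ equals $\mathrm{vol}_{G_{\tau_k-1}}(C_j)/(2d(\tau_k-1))$. This is essentially a linear preferential-attachment tree on communities with average out-degree one, for which a standard submartingale / exponential-moment argument — of the same flavor as the one underlying Lemma \ref{lem:padiam} — yields height $O(\log m)=O(\log n)$ with probability $1-o(1)$.

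The main obstacle will be that between two consecutive seed births many non-seed nodes are inserted by step (4), inflating community volumes asymmetrically, so the attachment probabilities are not literally those of a textbook PA tree on $m$ nodes. I plan to handle this by restricting to the high-probability event (guaranteed by Lemma \ref{lem:homodegree} together with the concentration arguments used in the proof of Theorem \ref{thm:Securityproperties}) on which the volume of every sufficiently old community is tightly concentrated about its expectation, and then sandwiching the parent distribution above and below by constant multiples of the pure PA-on-seeds parent distribution. Once such a coupling is in place, the classical exponential-moment bound on the maximum depth of a linear preferential-attachment tree transfers, and the $O(\log n)$ height bound on $T$ follows.
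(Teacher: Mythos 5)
Your treatment of part (1) is correct and matches the paper's reasoning: after deleting the step (3)(c) edges, collapsing each homochromatic set turns every step (4)(c) edge into a self-loop, and the lone surviving cross-community edge for each new community is the single step (3)(b) preferential-attachment edge, which necessarily points to a strictly older community. So $T$ is a directed tree rooted at the initial communities.

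For part (2), however, the comparison model you choose is the wrong one, and the "sandwich by constant multiples of pure PA" step would not go through. The attachment probability for the $k$-th community to pick parent $C_j$ is indeed $\mathrm{vol}(C_j)/(2d(\tau_k-1))$, but $\mathrm{vol}(C_j)$ is driven almost entirely by the \emph{uniform} color-choice process of step (4) (each non-seed picks an old color uniformly and adds $2d$ to that community's volume), not by any preferential reinforcement tied to $C_j$'s degree in $T$. A community that has never once been chosen as a parent in $T$ still accumulates volume $\Theta(\log^{a+1}n)$ if it is old enough, whereas its $T$-degree stays $O(1)$; so the ratio between the true attachment weight and a linear-PA weight ($\alpha+\beta\cdot\deg_T$) is polylogarithmic, not constant, and no constant-factor coupling with a PA tree can be set up even after conditioning on the volume-concentration event. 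The paper instead reformulates $T$ as a weighted recursive tree in which the weights are incremented \emph{uniformly at random} between seed births (its step (iii)), observes that this makes the parent distribution at least as biased toward older nodes as the uniform distribution, and hence that the height of $T$ is stochastically dominated by that of a uniform recursive tree; Pittel's theorem (Lemma \ref{lem:resursivetrees}) then gives $O(\log n)$. Your argument becomes correct if you replace the PA comparison by exactly this uniform-recursive-tree comparison — e.g., by noting that since $\mathrm{vol}(C_j)$ is nonincreasing in the birth time $\tau_j$, the parent index of the $k$-th community is stochastically smaller than $\mathrm{Unif}\{1,\dots,k-1\}$, and coupling with a uniform recursive tree.
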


\begin{proof} By Definition~\ref{def:Securitymodel} and
Definition~\ref{def:reduction}, $T$ can be regarded as a graph
constructed by a preferential attachment scheme with $d=1$ such that
whenever a new node is created, it links to a node chosen with
probability proportional to the weights of nodes, at the same time,
the weights of nodes are increasing uniformly and randomly.
Precisely, we restate the construction of $T$ as follows:

\begin{enumerate}
\item [(i)] Take $H_2$ to be a graph with two nodes $1, 2$, one
 directed edge $(2,1)$ such that each node has a weight $w(i)=d$ for $i=1, 2$.

 For $i+1>2$, let $p_i=1/(\log i)^a$, and let $H_i$ be
the graph constructed at the end of time step $i$.

\item [(ii)] With probability $p_i$, we create a new node, $v$ say,
in which case,

\begin{enumerate}

\item let $u_0$ be a node chosen with probability proportional to the
weights of nodes in $H_i$, create a directed edge $(v,u_0)$,

\item let $u_1, u_2,\cdots, u_{d-1}$ be nodes chosen randomly and
uniformly in $H_i$,

\item for each $j=0,1,\cdots, d-1$, set $w(u)\leftarrow {\rm old}\
w(u)+1$, and

\item  set $w(v)[i+1]=d$.

\end{enumerate}

\item [(iii)] Otherwise, then choose randomly and uniformly a node,
$u$ say, in $H_i$, set $w(u)[i+1]=w(u)[i]+2d$.

\end{enumerate}

Then $T$ is the directed graph obtained from $H$ by ignoring the
weights of nodes.

For (1). Clearly, it is true that $T$ is a tree, because whenever
one new node is created, there is only one new edge is added, and
the graph is connected. (1) holds.

For (2). By definition of $T$, the height of $T$ is between a graph
of the preferential attachment model with $d=1$ and a uniform
recursive tree of the same number of nodes. By
Lemma~\ref{lem:resursivetrees}, with probability $1-o(1)$, a uniform
recursive tree of nodes $n$ has height bounded by $O(\log n)$. By
construction above, $T$ has height stochastically dominated by that
of a uniform recursive tree of the same number of nodes. Therefore,
with probability $1-o(1)$, the height of $T$ is bounded by $O(\log
n)$. (2) holds.
\end{proof}

\smallskip

By Lemmas~\ref{lemma:direction} and \ref{lemma:height}, $T$ exactly
captures the cascading behaviors among strong communities, which is
the key to our proofs.

Now we know that the proofs of both
Theorem~\ref{thm:cascadeonSecurity} and
Theorem~\ref{thm:rancascadeonSecurity} consist of the following
steps:

\begin{enumerate}

\item To prove that for any attack of poly logarithmic size, almost surely, there
is a huge number of strong communities.

\item Any infection among the strong communities must be triggered by
an edge in the infection priority tree of $G$, which goes at most
$O(\log n)$ many steps, by Lemma~\ref{lemma:height}.

\end{enumerate}

(2) has been guaranteed by Lemma~\ref{lemma:direction} and
Lemma~\ref{lemma:height}. So the main issue for the proofs of
Theorem~\ref{thm:cascadeonSecurity} and
Theorem~\ref{thm:rancascadeonSecurity} is actually step (1) above,
which will be given in Subsections~\ref{proof:uthm} and
\ref{subsec:randomsecuritytheorem}.

\subsection{Uniform Threshold Security Theorem} \label{proof:uthm}

In this subsection, we prove Theorem~\ref{thm:cascadeonSecurity}.

Let $G$ be a network constructed by the security model. Consider a
deliberate attack by targeting an initial set $S$ of size poly$(\log
n)$. Note that the size of $S$, poly$(\log n)$, is much smaller than
the number of communities, i.e., $\Theta(n/\log^a n)$, by (1) (i) of
Theorem~\ref{thm:Securityproperties}.

\begin{proof} (Proof of Theorem \ref{thm:cascadeonSecurity})
Set time $T_0=(1-\delta)n$, where $\delta=\log^{-b_0} n$, where
$b_0$ will be determined later. We will show that with high
probability, all the communities created before time step $T_0$ are
large and thus strong.

\begin{lemma} \label{lem:homosetsizebeforeT0} Let $2<b_1<a-b_0$. Then
with probability $1-o(1)$, every homochromatic set created before
time step $T_0$ has a size $\Omega(\log^{b_1}n)$.
\end{lemma}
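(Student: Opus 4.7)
The plan is to mimic almost verbatim the argument of Lemma~\ref{lem:setsize}(1), retuning the parameters to get the stronger lower bound $\Omega(\log^{b_1}n)$ in place of the previous $\log n$. Fix a homochromatic set $S_\kappa$ whose seed is created at some time step $t_\kappa\le T_0$. It will suffice to show that with probability $1-o(n^{-1})$ the size of $S_\kappa$ at time $n$ is at least $\log^{b_1}n$; a union bound over the at most $n$ homochromatic sets then yields the stated $1-o(1)$ guarantee.

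For each $t>t_\kappa$, let $Y_t$ be the indicator that the node created at time step $t$ chooses the old color $\kappa$. Conditioning on the event $\mathcal{E}$ of Definition~\ref{def:event} (which by Lemma~\ref{lem:colorsize} holds with probability $1-o(1)$), for every $t\ge T_1$ we have $|C_t|\le 2t/\log^a t$, and therefore
$$\Pr[Y_t=1\mid\mathcal{E}]\;\ge\;\Bigl(1-\frac{1}{\log^a t}\Bigr)\cdot\frac{\log^a t}{2t}.$$
Restricting to the interval $t\in(T_0,n]$ (which, being entirely inside the lifetime of $S_\kappa$, is available no matter where in $[2,T_0]$ the seed $t_\kappa$ falls), the partial sum $Y:=\sum_{t=T_0+1}^{n} Y_t$ stochastically dominates a sum $Z=\sum_{t=T_0+1}^{n}Z_t$ of independent Bernoullis of exactly the form already used in the proof of Lemma~\ref{lem:setsize}, e.g.\ with $\Pr[Z_t=1]=(1-\log^{-a}n)\cdot(\log^a T_0)/(2t)$.

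A direct integral estimate together with $T_0=(1-\delta)n$ and $\delta=\log^{-b_0}n$ gives
$$E[Z]\;\ge\;(1-o(1))\,\frac{\log^a n}{2}\sum_{t=T_0+1}^{n}\frac{1}{t}\;=\;(1-o(1))\,\frac{\log^a n}{2}\log\frac{1}{1-\delta}\;=\;\Omega\bigl(\log^{a-b_0}n\bigr).$$
Since $b_1<a-b_0$, we have $E[Z]=\omega(\log^{b_1}n)$, so the Chernoff lower-tail bound gives
$$\Pr\bigl[Z<\log^{b_1}n\bigr]\;\le\;\exp\bigl(-\Omega(E[Z])\bigr)\;=\;\exp\bigl(-\Omega(\log^{a-b_0}n)\bigr),$$
and because $a-b_0>b_1>2>1$ this is $o(n^{-1})$. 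Combined with the $1-o(1)$ probability of $\mathcal{E}$ and the union bound over the (at most $n$) homochromatic sets created before time $T_0$, the lemma follows.

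There is no genuine obstacle beyond careful bookkeeping: the architecture is identical to Lemma~\ref{lem:setsize}(1), and the only new point is to track where the improved exponent comes from, namely the fact that $\sum_{t>T_0}1/t\approx\delta=\log^{-b_0}n$ so that the expected growth of $|S_\kappa|$ over the window $(T_0,n]$ alone is already $\Omega(\log^{a-b_0}n)$, which is a comfortable factor $\log^{(a-b_0)-b_1}n$ above the target $\log^{b_1}n$ and a comfortable factor above $\log n$ for the union-bound step to go through.
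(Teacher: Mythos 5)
Your proposal is correct and follows essentially the same line as the paper's own proof: condition on the event $\mathcal{E}$, stochastically dominate the growth of $|S_\kappa|$ over the window $(T_0,n]$ by a sum of independent Bernoullis, compute $E[Z]=\Omega(\log^{a-b_0}n)$, and apply the Chernoff lower tail together with a union bound, using $a-b_0>b_1>2$ to make the per-set failure probability $o(n^{-1})$. The only cosmetic difference is that you lower-bound the Bernoulli parameter by $(\log^a T_0)/(2t)$ while the paper uses $(\log^a t)/(2t)$ with the prefactor $1-\log^{-a}((1-\delta)n)$; both give the same $\Omega(\log^{a-b_0}n)$ estimate.
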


\begin{proof}
It is sufficient to show that, with probability $1-n^{-1}$, for
every homochromatic set $S_\kappa$ created before $T_0$,
$S_{\kappa}$ has a size $\Omega(\log^{b_1}n)$.

Suppose that $S_{\kappa}$ is the set with color $\kappa$, and that
it is created at time step $t_0\leq T_0$ for some $t_0$. For any
$t\geq t_0$, define an indicator random variable $Y_t$ to be the
event that the node created at time step $t$ chooses color $\kappa$.

Define $\{Z_t\}$ to be the independent Bernoulli trails such that
$$\Pr[Z_t=1]=\left(1-\frac{1}{\log^a (1-\delta)n}\right)\frac{\log^a
t}{2t}.$$

Conditioned on the event $\mathcal{E}$ in
Definition~\ref{def:event}, we have that random variable
$Y:=\sum_{t=t'}^n Y_t$ stochastically dominates $Z:=\sum_{t=t'}^n
Z_t$ for any $t'\leq T_0$.

By definition, $Z$ has an expectation
$$E[Z]\geq\left(1-\frac{1}{\log^a (1-\delta)n}\right)
\sum_{t=T_0+1}^n\frac{\log^a t}{2t}\geq \frac{\delta n}{2n}\log^{a}
(1-\delta)n=\Omega(\log^{a-b_0} n).$$ Since $2<b_1<a-b_0$, by the
Chernoff bound,
$$\Pr\left[Z=O(\log^{b_1}n)\right]\leq n^{-1}.$$
Therefore, with probability $1-n^{-1}$, the size of $S_\kappa$ is at
least $Y=\Omega(\log^{b_1}n)$.
\end{proof}

Secondly, we show that every seed node created before $T_0$ probably
has a large degree.

\begin{lemma} \label{lem:seeddegree}
With probability $1-o(1)$, every seed node created before time step
$T_0$ has degree at least $\Omega(\log^{b_1/2} n)$.
\end{lemma}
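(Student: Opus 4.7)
The plan is to reduce the statement to a concentration result for the first vertex in a preferential attachment process, then combine with Lemma 6.1 and a union bound.

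First I would invoke Lemma 6.1 to assert that, with probability $1-o(1)$, every homochromatic set $S$ whose seed is created before $T_0$ satisfies $|S| = \Omega(\log^{b_1} n)$. Let $v$ be the seed of such a set $S$ with color $\kappa$, created at some time $t_0 \leq T_0$. By step~(4) of Definition~\ref{def:Securitymodel}, every subsequent node that chooses color $\kappa$ attaches its $d$ edges to nodes in $S$ with probability proportional to their current degrees restricted to $S$. Thus, if we restrict attention only to the local edges and contract time to the substeps at which a new $\kappa$-colored node is added, the induced subgraph $G_S$ is generated exactly by the preferential attachment scheme on $|S|$ vertices with parameter $d$, and $v$ plays the role of the first vertex.

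Next I would apply Lemma~\ref{lem:degreeexpectation} to $G_S$: the first vertex in a PA graph on $s$ vertices has expected degree $d\sqrt{s}$. Taking $s = \Omega(\log^{b_1} n)$ yields that the local contribution to $\deg(v)$ has expectation $\Omega(\log^{b_1/2} n)$. To upgrade expectation to a high-probability lower bound, I would set up the Doob martingale $\{X_j\}_{j=1}^{s}$ where $X_j$ is the conditional expectation of $\deg_{G_S}(v)$ given the history through the $j$-th addition to $S$. Each edge insertion changes $\deg_{G_S}(v)$ by at most $d$, so by Azuma's inequality (or the submartingale bound used in the proof of Lemma~\ref{lem:homodegree}) the degree concentrates around its expectation within $O(\sqrt{s \log n}) = O(\log^{(b_1+1)/2}\sqrt{\log n})$, giving $\deg_{G_S}(v) = \Omega(\log^{b_1/2} n)$ with probability $1 - n^{-2}$, say.

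Finally, I would take a union bound over all seed nodes (there are at most $2n/\log^a n$ of them by Theorem~\ref{thm:Securityproperties}~(1)(i), certainly $\leq n$) to conclude that every such seed has local degree, and thus total degree, at least $\Omega(\log^{b_1/2} n)$ with probability $1 - o(1)$. The main obstacle is step~2, namely cleanly justifying that the local edges incident to $v$ really do behave as in a standalone PA process: one must check that the randomness governing which old color is picked in step~(4) of Definition~\ref{def:Securitymodel} and the randomness governing attachment within the color class decouple, so that conditioning on $|S| = s$ leaves the local subgraph distributed as a PA graph on $s$ vertices. This is essentially the content of step~(4)(c), but writing it out carefully (together with the negligible error from global edges affecting the normalizing volume, handled as in the proof of Theorem~\ref{thm:Securityproperties}~(2)) is where the bookkeeping lives.
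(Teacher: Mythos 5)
Your overall decomposition is the same as the paper's: invoke Lemma~\ref{lem:homosetsizebeforeT0} to get $|S| = \Omega(\log^{b_1} n)$ for every community with seed born before $T_0$, apply Lemma~\ref{lem:degreeexpectation} to the induced PA process inside $S$ to get expected seed degree $\Omega(\sqrt{|S|}) = \Omega(\log^{b_1/2} n)$, concentrate, and union-bound. Your aside about justifying the reduction of $G_S$ to a standalone PA process is a reasonable point of care, and the paper handles it the same way (the contribution of global edges to the volume normalizer is negligible, as in the proof of Theorem~\ref{thm:Securityproperties}~(2)).

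The gap is in the concentration step, and it is not cosmetic. Bounded-differences Azuma with $s$ steps and per-step bound $d$ gives a deviation window of order $\sqrt{s \log n}$, which with $s = \log^{b_1} n$ is $\Theta(\log^{(b_1+1)/2} n)$ --- strictly \emph{larger} than the expectation $\Theta(\log^{b_1/2} n)$. A two-sided window wider than the mean cannot yield the claimed lower bound $\Omega(\log^{b_1/2} n)$; as written, the argument proves nothing. What saves the day is the variance-form martingale inequality (Freedman-type, the ``submartingale bound'' you parenthetically mention but do not actually compute with): here the conditional per-step variance is not $O(d^2)$ but $O\bigl(d\cdot E[\text{increment}\mid\text{past}]\bigr)$, so the cumulative variance is $O(d \cdot E[D_s]) = O(d^2\sqrt{s})$ and the deviation is $O\bigl(d\,s^{1/4}\sqrt{\log n}\bigr) = O(\log^{b_1/4 + 1/2} n)$. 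This is $o(\log^{b_1/2} n)$ exactly when $b_1/4 + 1/2 < b_1/2$, i.e., $b_1 > 2$ --- which is precisely the hypothesis the paper's proof invokes (``Since $b_1>2$, by the Chernoff bound\ldots''). Your proposal never mentions $b_1 > 2$ or why it is needed, which is the telltale sign that the wrong concentration inequality was applied; inserting the variance-form bound and tracking the exponent closes the gap.
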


\begin{proof} Let $v$ be a seed node created at a time step $\leq
T_0$. Suppose that $v$ has color $\kappa$. Let $S$ be the set of all
nodes sharing color $\kappa$. Then the community $G_S$ is the
induced subgraph of $S$ in $G$. The degree of the seed node $v$ in
$G$ is contributed by both local edges and global edges. By the
construction, $G_S$ truthfully follows a power law, by Lemma
\ref{lem:degreeexpectation}, the degree of $v$ contributed by local
edges is expected at least $\sqrt{|S|}$. By Lemma
\ref{lem:homosetsizebeforeT0}, with probability $1-o(1)$, each $S$
has a size $\Omega(\log^{b_1}n)$. The degree of $v$ has an expected
degree at least $\Omega(\log^{b_1/2} n)$. Since $b_1>2$, by the
Chernoff bound, with probability $1-o(n^{-1})$, $v$'s degree is at
least $\Omega(\log^{b_1/2} n)$. The lemma follows immediately by the
union bound.
\end{proof}

Now we are able to estimate the number of strong communities.

\begin{lemma} \label{lem:strongbeforeT0}
Let $b_0=2+\epsilon$ and $b_1=a-b_0-\frac{\epsilon}{2}$, where
$\epsilon$ is that defined in Theorem \ref{thm:cascadeonSecurity}.
With probability $1-o(1)$, all the communities created before time
$T_0$ are strong.
\end{lemma}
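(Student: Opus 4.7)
The plan is to show that for every seed node $v$ of a community $G_X$ created before time $T_0$, the maximum possible number of neighbors of $v$ lying outside its own homochromatic set $X$ is dominated by the infection threshold $\phi \cdot d(v)$. This directly implies, by Definition~5.7, that $v$ cannot be infected without some node of $X$ being already infected, i.e., $G_X$ is strong.

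First I would invoke Theorem~3.3, which asserts that with probability $1-o(1)$ every vertex has length of degrees $l(v) = O(\log n)$, and that the second, third, $\ldots$, $l(v)$-th degrees of a seed are bounded by $O(1)$. Combining this with Theorem~3.4(1)(ii), which bounds the injury to a seed from any single other community by $O(1)$, I obtain that the total number of neighbors of $v$ in communities other than $G_X$ is at most $l(v) \cdot O(1) = O(\log n)$.

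Next I would apply Lemma~5.3 to obtain $d(v) = \Omega(\log^{b_1/2} n)$ for every seed $v$ created before $T_0$. With $b_0 = 2 + \epsilon$, $b_1 = a - b_0 - \epsilon/2 = a - 2 - 3\epsilon/2$, and the uniform threshold $\phi = O(\log^{-b} n)$ with $b = a/2 - 2 - \epsilon$ from Theorem~4.1, the number of infected neighbors needed to push $v$ past its threshold is
\[\phi \cdot d(v) \;=\; \Omega\!\left(\log^{b_1/2 - b} n\right) \;=\; \Omega\!\left(\log^{1+\epsilon/4} n\right),\]
which strictly dominates the $O(\log n)$ bound on the number of neighbors of $v$ outside $X$. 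Hence, even if every neighbor community of $G_X$ is fully infected, $v$ cannot cross its threshold unless some node of $X$ has already been infected, so $G_X$ is strong.

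Finally I would take a union bound over the at most $2n/\log^a n$ seed nodes born before $T_0$ (Theorem~3.1(1)(i)), using that each bad event in Theorem~3.3 and Lemma~5.3 has probability $o(n^{-1})$, to conclude that with probability $1-o(1)$ every such community is strong. The only genuinely delicate point is the exponent arithmetic $b_1/2 - b > 1$, which is exactly where the numerical calibration $b_0 = 2+\epsilon$, $b_1 = a - b_0 - \epsilon/2$ is designed to succeed; everything else is bookkeeping that assembles previously established theorems.
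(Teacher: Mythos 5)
Your argument is correct and follows essentially the same path as the paper's proof: bound the number of global (out-of-community) neighbors of a seed $v$ created before $T_0$ by $O(\log n)$ via the length-of-degrees and second-degree bounds of Theorem~\ref{thm:length}, bound $d(v)$ from below by $\Omega(\log^{b_1/2}n)$ via Lemma~\ref{lem:seeddegree}, and then observe that the exponent arithmetic $b_1/2-b=1+\epsilon/4>1$ forces $\phi\cdot d(v)=\omega(\log n)$, so $v$'s threshold cannot be reached from outside $X$; a union bound over the $\Theta(n/\log^a n)$ seeds finishes. One small point of bookkeeping: Theorem~\ref{thm:cascadeonSecurity} writes $\phi=O(\log^{-b}n)$, but your step $\phi\cdot d(v)=\Omega(\log^{b_1/2-b}n)$ implicitly treats $\phi=\Omega(\log^{-b}n)$; this matches what the paper's proof of the lemma itself assumes ("Recall that $\phi=\Omega(\frac{1}{\log^b n})$"), so you have inherited the paper's own implicit convention rather than introduced an error, but it is worth stating explicitly that you are taking $\phi\geq c/\log^b n$ for a fixed constant $c$.
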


\begin{proof} By Theorem~\ref{thm:length},
the length of degrees of a seed node is bounded by $O(\log n)$, and
the second degree of a seed node is bounded by $O(1)$. By Chernoff
bound, we have that, with probability $1-o(1)$, for every seed node
$v$, the degree of $v$ contributed by global edges is bounded by
$O(\log n)$. By Lemma \ref{lem:seeddegree}, almost surely, for each
seed node $v$, the fraction of $v$'s degree contributed by global
edges is less than or equal to $O(\log^{1-b_1/2}n)$. Recall that the
threshold parameter $\phi=\Omega\left(\frac{1}{\log^b n}\right)$ for
$b=\frac{a}{2}-2-\epsilon$ for arbitrary $\epsilon>0$. By the
choices of $b_0$ and $b_1$,
$1-b_1/2=-\left(\frac{a}{2}-2-\frac{3\epsilon}{4}\right)<-b$. The
lemma follows.
\end{proof}

For the total number of vulnerable communities, we have

\begin{lemma} \label{lem:vulnumber}
Let $b_2=a+b_0$. With probability $1-o(1)$, the number of vulnerable
communities is at most $\frac{2n}{\log^{b_2}n}$.
\end{lemma}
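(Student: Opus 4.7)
The plan is to observe that by the preceding Lemma~\ref{lem:strongbeforeT0}, with probability $1-o(1)$ every community created at or before time step $T_0=(1-\delta)n$ is strong, so every vulnerable community must have been created in the tail window $(T_0,n]$. It therefore suffices to upper bound the number of seed nodes born in that window by $\frac{2n}{\log^{b_2}n}$ with probability $1-o(1)$, and conclude by the union bound with the event of Lemma~\ref{lem:strongbeforeT0}.

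To bound the number of new seeds in $(T_0,n]$, I would introduce an independent Bernoulli indicator $Y_i$ for each $i\in(T_0,n]$ marking the event that the node born at step $i$ chooses a new color, so $\Pr[Y_i=1]=p_i=\log^{-a} i$. The total number of seed nodes born in the window is $Y=\sum_{i=T_0+1}^n Y_i$, with
\[
E[Y]=\sum_{i=T_0+1}^{n}\frac{1}{\log^a i}\le \frac{\delta n}{\log^a((1-\delta)n)}=(1+o(1))\frac{n}{\log^{a+b_0}n}=(1+o(1))\frac{n}{\log^{b_2}n},
\]
where I used $\delta=\log^{-b_0}n$ and absorbed the lower-order $\log((1-\delta)n)$ vs.\ $\log n$ difference. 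Since $E[Y]=\omega(\log n)$ (as $b_2<a+b_0$ but $n/\log^{b_2}n$ still grows polynomially in $n$), a standard Chernoff bound yields
\[
\Pr\!\left[Y\ge \frac{2n}{\log^{b_2}n}\right]\le \exp\!\left(-\Omega\!\left(\tfrac{n}{\log^{b_2}n}\right)\right)=o(1).
\]
Combining the event that all communities born by $T_0$ are strong with the event $Y<2n/\log^{b_2}n$, the number of vulnerable communities is at most $Y\le 2n/\log^{b_2}n$ with probability $1-o(1)$.

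The main routine obstacle is just pinning down the integral estimate for $\sum_{i=T_0+1}^n \log^{-a} i$; this is essentially the same calculation already performed in the proof of Lemma~\ref{lem:colorsize} applied to the tail, so it should go through cleanly. The only conceptual step is the reduction via Lemma~\ref{lem:strongbeforeT0}, after which everything reduces to counting new-color events in a short tail of independent Bernoulli trials.
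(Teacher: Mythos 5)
Your proposal is correct and follows essentially the same route as the paper: reduce to communities born after $T_0$ via Lemma~\ref{lem:strongbeforeT0}, model new-color events in the tail $(T_0,n]$ as independent Bernoulli trials with parameter $p_i=\log^{-a} i$, bound the expectation by roughly $\delta n/\log^a n = (1+o(1))\,n/\log^{b_2}n$, and finish with a Chernoff bound and a union bound. The only blemish is the parenthetical ``as $b_2<a+b_0$'', which should read $b_2=a+b_0$; the conclusion $E[Y]=\omega(\log n)$ still holds and the argument is unaffected.
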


\begin{proof}
By Lemma \ref{lem:strongbeforeT0}, we only need to bound the number
of communities created after time step $T_0$. Since at time step
$t$, a new color is created with probability $p_t=\log^{-a}t$, the
number of colors created after time step $T_0$, denoted by $N_{\rm
vul}$ is expected to be
$$E[N_{\rm vul}]=\sum\limits_{t=T_0+1}^n \frac{1}{\log^a t}.$$
When $n$ is large enough, by a simple integral computation,
$E[N_{\rm vul}]$ is upper bounded by $\frac{3\delta n}{2\log^a n}$.
By the Chernoff bound, with probability $1-o(1)$, $N_{\rm vul}$ is
at most $\frac{2\delta n}{\log^a n}=\frac{2n}{\log^{b_2} n}$. The
lemma follows.
\end{proof}

Now we are ready for the proof of Theorem
\ref{thm:cascadeonSecurity}.

Suppose that $S$ is the initially targeted set of size $\lceil\log^c
n\rceil$. Choose $b_0=2+\epsilon$, $b_1=a-b_0-\frac{\epsilon}{2}$
and $b_2=a+b_0$.

By Lemma \ref{lem:vulnumber}, with probability $1-o(1)$, the number
of  vulnerable communities is at most $\frac{2n}{\log^{b_2}n}$. By
Lemma~\ref{lemma:height}, the height of infection priority tree $T$
is $h=O(\log n)$. By Lemma~\ref{lemma:direction}, infections among
strong communities must be triggered by an edge in the infection
priority tree $T$. Therefore the number of infected communities by
attacks on $S$ is at most

$$\left(|S|+\frac{2n}{\log^{b_2}n}\right)\cdot h=O\left(\left(\lceil\log^c
n\rceil+\frac{2n}{\log^{b_2}n}\right)\cdot \log n\right).$$

By Theorem~\ref{thm:Securityproperties} (1), with probability $1-
o(1)$, the largest community has a size $O(\log^{a+1}n)$. So the
number of infected nodes in $G$ by attacks on $S$ is at most

$$O\left(\left(\lceil\log^c
n\rceil+\frac{2n}{\log^{b_2}n}\right)\cdot \log n\cdot
\log^{a+1}n\right)=o(n).$$

This completes the proof of Theorem \ref{thm:cascadeonSecurity}.

\end{proof}

The proof of Theorem \ref{thm:cascadeonSecurity} is essentially a
methodology of community analysis of networks of the security model.
The key ideas of the methodology are those in
Theorems~\ref{thm:Securityproperties}, \ref{thm:length}, and
Theorem~\ref{thm:injury}, Definition~\ref{def:strong},
Definition~\ref{def:reduction}, Lemma~\ref{lemma:direction}, and
Lemma~\ref{lemma:height}.

The method allows us to divide all the communities into two classes,
the first is the strong communities, and the second is the
vulnerable ones. The two types of communities are distinguished by a
time step $T_0$. This time stamp $T_0$ is determined by both
parameter $\delta$, and essentially by the power $b$. Then we show
that communities created before time step $T_0$ are strong, and that
the number of communities created after time step $T_0$ is small.

Theorem \ref{thm:cascadeonSecurity} shows that the power law
distribution in Theorem~\ref{thm:Securityproperties}, is never an
obstacle for security of networks. Our proof of the security theorem
show that the community structure of the networks isolates the
vulnerable nodes in a large number of small communities, that the
homogeneity and randomness among the seed nodes or ``hubs" guarantee
that most communities are strong, and that the infection priority
tree ensures that the cascading procedure among strong communities
cannot be long.

\subsection{Random Threshold Security Theorem}
\label{subsec:randomsecuritytheorem}

In this subsection, we prove Theorem~\ref{thm:rancascadeonSecurity}.
The proof has the same framework as before. By Lemmas
\ref{lemma:direction}, and \ref{lemma:height}, infections among
strong communities must be triggered by edges in the infection
priority tree $T$, and infections in $T$ are directed, and terminate
by $O(\log n)$ many steps.

Therefore, the only issue is to prove that the number of vulnerable
communities is small.

\begin{proof}

Let $T_0=(1-\delta)n$, where $\delta=100\log^{-b_0} n$ and $b_0$ to
be determined later. Let $T_0'=n/100$.

By a similar proof to that of Lemma \ref{lem:homosetsizebeforeT0},
for every $b_1\in(1,a-b_0]$, we have that  with probability
$1-o(1)$, the following hold:

\begin{itemize}

\item \ Every community created at a time step $t\leq T_0'$ has a size
$\Omega(\log^a n)$, and

\item  \ Every community created at a time step $t\in [T_0', T_0]$ has a size
$\Omega(\log^{b_1} n)$.

\end{itemize}

By the proof of Lemma \ref{lem:seeddegree}, we have that with
probability $1-o(1)$,

\begin{enumerate}

\item A seed node created at a time step $t\leq T_0'$ has degree $\Omega(\log^{a/2}n)$, and

\item A seed node created at a time step $t\in [T_0', T_0]$ has
degree $\Omega(\log^{b_1/2}n)$.

\end{enumerate}

Then we show that the number of vulnerable communities created
before time step $T_0$ is small.

\begin{lemma}
Let $b_0=\frac{a}{2}-1$ and $b_1=\frac{a}{2}+1$. With probability
$1-o(1)$, there are only $O\left(\frac{n}{\log^{a+(b_1/2)}
n}\right)$ communities created before time step $T_0$ that are
vulnerable.
\end{lemma}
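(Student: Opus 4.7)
The plan is to observe that, under the random threshold model, a seed node $v$ with total degree $d_v$ and external (global) degree $g_v$ is vulnerable precisely with probability $g_v/d_v$ over the random choice of its threshold $r_v\in\{1,\ldots,d_v\}$: indeed the worst external-only infection configuration covers exactly $g_v$ of $v$'s neighbors, and $v$ fires iff $r_v\le g_v$. Hence, letting $V_0$ be the set of seeds created before $T_0$, the expected number of vulnerable communities equals $\sum_{v\in V_0} g_v/d_v$. I will bound this sum by splitting $V_0$ into the early regime $t_v\le T_0'=n/100$ and the late regime $t_v\in(T_0',T_0]$, then pass from expectation to high probability via Chernoff over the independent threshold choices.

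For the early regime I use the two bullet points preceding the lemma: seeds with $t_v\le T_0'$ have $d_v=\Omega(\log^{a/2}n)$ whp. The external degree $g_v$ splits into (i) the $d$ edges created at $v$'s birth, (ii) later step (3)(b) edges from new seeds chosen by preferential attachment, and (iii) later step (3)(c) edges from new seeds choosing $v$ uniformly among seeds. Under event $\mathcal{E}$ of Definition 3.1, summing $p_s(d-1)/|C_{s-1}|$ over $s>t_v$ yields expected (iii)-contribution $O(\ln(n/t_v))=O(\log n)$, and a similar calculation with the degree growth bound $d_v[s]\le O(\log^{a+1}n)$ gives expected (ii)-contribution $O(\log n)$. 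A Chernoff bound then yields $g_v=O(\log n)$ whp uniformly over all early seeds. Since the number of such seeds is $O(n/\log^a n)$, the early contribution to the expected vulnerable count is
\[
O\!\left(\tfrac{n}{\log^{a}n}\right)\cdot O\!\left(\tfrac{\log n}{\log^{a/2}n}\right)=O\!\left(\tfrac{n}{\log^{3a/2-1}n}\right).
\]

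For the late regime the point, and where the $a+b_1/2$ exponent really comes from, is that the external-degree contributions shrink dramatically when $t_v\ge T_0'=n/100$. The step (3)(c) contribution has expectation $\sum_{s>t_v} 2(d-1)/s = O(\ln(n/t_v))=O(1)$ because $t_v\ge n/100$, and the step (3)(b) contribution is $\sum_{s>t_v} p_s\, d_v[s-1]/(2d(s-1))=o(1)$ using $d_v[s-1]=O(\log^{a+1}n)$ and $t_v=\Omega(n)$. Combined with the $d$ initial edges, this gives $g_v=O(1)$ whp. Coupling this with the stated degree lower bound $d_v=\Omega(\log^{b_1/2}n)$ for late seeds and the count $O(n/\log^a n)$ of seeds in $(T_0',T_0]$, the late contribution to the expected vulnerable count is $O(n/\log^{a+b_1/2}n)$. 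With $b_0=a/2-1$ and $b_1=a/2+1$, one checks $3a/2-1\ge 5a/4+1/2=a+b_1/2$ precisely when $a\ge 6$, so the late regime dominates and the total expectation is $O(n/\log^{a+b_1/2}n)$. A final Chernoff bound over the independent thresholds upgrades expectation to a $1-o(1)$ high-probability statement.

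The main obstacle I expect is the sharp $O(1)$ bound on $g_v$ in the late regime: the step (3)(b) term couples the new seed's attachment probability to the random degree $d_v[s]$, which itself is correlated with earlier choices. Handling this cleanly will require a martingale or submartingale concentration step analogous to the one used in Lemma 3.3, or simply the crude deterministic bound $d_v[s]\le O(\log^{a+1}n)$ from Theorem 3.1(1)(ii) inserted inside the sum. The early-regime bound is structurally easier because it tolerates the weaker $g_v=O(\log n)$ estimate already established in the proof of the uniform threshold security theorem.
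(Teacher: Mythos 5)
Your overall architecture matches the paper's exactly: split the seeds born before $T_0$ into the early regime $t_v\le T_0'=n/100$ and the late regime $T_0'<t_v\le T_0$; bound the ratio $g_v/d_v$ of global to total degree in each regime; multiply by the $O(n/\log^a n)$ count of communities to get the expected number of vulnerable ones; and pass from expectation to high probability by Chernoff over the independent thresholds. Your check that $3a/2-1\ge a+b_1/2$ precisely when $a\ge 6$, so that the late regime dominates, is the same arithmetic the paper carries out.

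Two points in the late regime need repair, and your own suggested remedies do not quite work. First, your proposed fallback of inserting the crude bound $d_v[s-1]=O(\log^{a+1}n)$ into the step (3)(b) sum does not give $o(1)$: with $t_v\ge n/100$ the harmonic tail $\sum_{s>t_v}1/s$ is only $O(1)$, so $\sum_{s>t_v}p_s\,d_v[s-1]/(2d(s-1))=O\bigl(\log^{a+1}n/\log^a n\bigr)\cdot O(1)=O(\log n)$, which wrecks the claimed $O(1)$ bound on $g_v$ and hence the $\log^{-b_1/2}n$ per-seed probability. The $O(\log^{a+1}n)$ figure from Theorem \ref{thm:Securityproperties}(1)(ii) is the community's \emph{size}; what you need is a bound on the \emph{seed's own degree}, which scales like $\sqrt{|S|}\cdot d=O(\log^{(a+1)/2}n)$ by the PA dynamics inside a community (Lemma \ref{lem:degreeexpectation}), and that yields a (3)(b) contribution $O(\log^{(1-a)/2}n)=o(1)$. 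Second, even with that fix, $g_v=O(1)$ cannot hold \emph{uniformly} over all $\Theta(n/\log^a n)$ late seeds: the $g_v$'s behave like nearly-independent sums with constant mean, so their maximum is $\omega(1)$ with high probability. The paper's formulation is the weaker but correct one — \emph{all but $O(\log n)$} late seeds have $g_v=O(1)$ — and those $O(\log n)$ exceptions contribute negligibly beside the target $O(n/\log^{a+(b_1/2)}n)$. With the tighter seed-degree bound and the all-but-$O(\log n)$ framing, your plan does recover the paper's argument.
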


\begin{proof}

By the Chernoff bound, with probability $1-o(1)$:

(i) By Theorem~\ref{thm:length}, every seed node created before time
step $T_0'$ has a degree at most  $O(\log n)$ contributed by global
edges, and

(ii) All but $O(\log n)$ seed nodes created in time interval
$[T_0',T_0]$ have a degree $O(1)$ contributed by global edges.

Note that the threshold of each node is chosen randomly and
uniformly. Then the communities that are created in these two time
slots and satisfy the above conditions are vulnerable with
probability $O(\log^{1-(a/2)}n)$ and $O(\log^{-b_1/2}n)$,
respectively.

By Theorem~\ref{thm:Securityproperties} (1), with probability
$1-o(1)$, there are at most $O\left( \frac{2n}{\log^a n} \right)$
communities. By the choice of $a>6$,  $-b_1/2>1-(a/2)$ holds.
Therefore, the expected number of vulnerable communities created
before time step $T_0$ is $O\left(\frac{n}{\log^{a+(b_1/2)}
n}\right)$.

 Noting the independence of choice of threshold for each
node, by using the Chernoff bound again, the lemma follows.
\end{proof}

By the proof of Lemma \ref{lem:vulnumber}, there are only
$O\left(\frac{n}{\log^{a+b_0}n}\right)$ communities born after
$T_0$. So the total number of vulnerable communities in $G$ is
$O\left(\frac{n}{\log^{a+b_0}n}+\frac{n}{\log^{a+(b_1/2)}n}\right)
=O\left(\frac{n}{\log^{a+b_0}n}\right)$.

Consider the infection priority tree $T$ again. For any initial
targeted set $S$ of size $\lceil\log^c n\rceil$, the size of ${\rm
inf}_H^{\rm U}(S)$ is at most
$$O\left(\left(\lceil\log^c
n\rceil+\frac{2n}{\log^{a+b_0}n}\right)\cdot \log n\cdot
\log^{a+1}n\right)=o(n).$$

This completes the proof of Theorem \ref{thm:rancascadeonSecurity}.
\end{proof}

\subsection{Framework for Security Analysis}

Theorems~\ref{thm:cascadeonSecurity} and
\ref{thm:rancascadeonSecurity} imply the following three
discoveries:

\begin{enumerate}
\item [1)] Structures are essential to the security of networks against
cascading failure models of attacks,

\item [2)] There is a tradeoff between the role of structures and the
role of thresholds in security of networks, and

\item [3)] Neither power law distribution \cite{Bar1999} nor small world property \cite{WS1998} is
an obstacle of security of networks.

\end{enumerate}

The first discovery is a mathematical principle. From the viewpoint
of mathematics, we believe that structures determine the properties.
In so doing, a structural theory of networks would provide provable
guarantee for some of the key applications of network science. The
nature of networks are the networks themselves, instead of just
statistical measures of the networks. The investigation of
interactions and structures of interactions of networks is hence
essential to network theory and applications.

The second discovery explores that security of networks can be
achieved theoretically by structures of networks, and that there is
a tradeoff between the role of structures and the role of lifting of
the thresholds. This discovery is in sharp contrast to the current
practice of network security engineering which basically lifts the
thresholds. Exploring the tradeoffs between the role of structures
and the role of thresholds in security of networks would provide a
foundation for network security engineering, and hence it would be
exactly the subject of security theory of networks. Our discovery
here plays such a role.

The third discovery is also highly nontrivial. The reasons are:
intuitively speaking, power law allows us to attack a small number
of top degree nodes to generate a global cascading failure, and the
small world property means that spreading is so easy and so quick,
so that a small number of attacks may easily generate a global
cascading failure. This intuition is reasonable in some sense. In
fact, by observing our proofs, we know that there is only a small
window for us to construct networks to be both secure and to have
the power law and small world property.

Our discoveries imply that structure is a new, essential and
guaranteed source for security, and that the tradeoff between the
role of structure and the role of thresholds may provide both a full
understanding of security and new technology for security
engineering.

The proofs of Theorems~\ref{thm:cascadeonSecurity} and
\ref{thm:rancascadeonSecurity} provide a general framework for
theoretical analysis of security of networks. The main steps for
each of the uniform threshold security theorem and the random
threshold security theorem form the framework.

\bigskip

{\bf General framework}:

\begin{enumerate}
\item Small community phenomenon

The network is rich in quality communities of small sizes

\item The communities satisfy some more properties such as:
\begin{enumerate}

\item Each community has a few nodes dominating both internal and
external links

For each community $C$, let ${\rm dom}(C)$ be the dominating set of
$C$, which contains the hubs of the community $C$.

\item For each community $C$, the neighbors of nodes in $C$ outside
of $C$ are evenly distributed in different communities.

\end{enumerate}

\item We say that a community is strong, if it will never be
infected by the collection of outside communities, unless it has
already been infected by nodes in the community itself.

There are a huge number of strong communities.

\item By modulo the small communities, we can extract an
infection priority tree of the network.

\item Infections among the strong communities must be triggered by
an edge in the infection priority tree of the network.

\item The infection priority tree of the network has height $O(\log n)$.

\end{enumerate}

(1) provides a foundation for community analysis of the security of
networks. (2) ensures that there is a huge number of strong
communities. The existence of the infection priority tree $T$ is the
key to our proofs of the security theorems. (4), (5) and (6) ensure
that cascading procedure among the strong communities has a path of
length $O(\log n)$.

The general framework above provides not only a methodology to
theoretically analyze the security of networks, but also new
technology for enhancing security of real world networks.

\section{Threshold Theorem of Robustness of PA } \label{sec:cascade in
PA}

In this section, we prove Theorems \ref{thm:cascadeonPA} and
\ref{thm:negative}.

Suppose that $G=(V,E)$ is a network constructed from the PA model.

Given a node $v\in V$, we say that $v$ is {\it vulnerable} if one
infected neighbor is enough to infect it, or equivalently, its
degree $d_v$ is at most $1/\phi$.

The proof of Theorem \ref{thm:cascadeonPA} mainly consists of two
steps:

\begin{enumerate}
\item [(1)] By the definition of $G$, there is a large connected
component, $C$ say, in the subgraph induced by all the vulnerable
nodes in $G$.

In this case, if one node in $C$ is targeted, then all nodes in $C$
become infected.

\item [(2)] $G$ is an expander in the sense that the conductance of $G$
is large.

Therefore the set of infected nodes $C$ certainly infect new nodes
in $V\setminus C$ due to the reason that $\phi$ fraction neighbors
of $v$ are in $C$. This cascading procedure will continue until the
whole $G$ or a large part of $G$ being infected.

The second step of our proof implies that an expander-like graph is
unlikely to be robust.

\end{enumerate}

The proof of Theorem \ref{thm:negative} follows from a simple
observation that if $\phi$ is larger than $l/d$, then there is no
vulnerable node in $G$. For each node, if it is not targeted in the
initial random errors, then it cannot be infected unless at least
$l$ of its neighbors are infected. On the other hand, this is
unlikely to happen at the beginning when we randomly pick the
initial set of size $k=o(n^{\frac{l}{l+1}})$.

At first, we prove a basic version of Theorem \ref{thm:cascadeonPA}
for the case $\varepsilon=1$ in~\ref{subsec:weakversion}. The proof
of the main theorem will be developed by tightening the parameters
in Subsection \ref{subsec:strongversion}. In the end of this
section, we prove Theorem \ref{thm:negative}.

\subsection{Global Cascading Theorem of Single Node  }
\label{subsec:weakversion}

Before proving the full Theorem \ref{thm:cascadeonPA}, we prove a
basic result of the theorem for the case of $\varepsilon=1$.

\begin{theorem} \label{thm:weakversion}
There exists a positive integer $d_0$ such that almost surely (over
the construction of $G$), the following inequality holds:
\[\Pr\limits_{v\in_{\rm R} V}[{\rm inf}_{G}^\phi(\{v\})=V]\geq\frac{1}{2},\]
where $\phi=\frac{1}{2d}$.
\end{theorem}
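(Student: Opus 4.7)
The plan is to follow the two-step outline sketched in the excerpt: first exhibit a large connected component $C$ in the subgraph induced by the \emph{vulnerable} vertices (those with degree $\leq 2d$, so that a single infected neighbor suffices to infect them), then argue that expansion of the PA graph forces the cascade from $C$ to sweep the rest of $V$. The bound $\Pr[\mathrm{inf}_G^\phi(\{v\})=V]\geq 1/2$ will come from showing $|C|\geq n/2$ and observing that every source $v\in C$ triggers a global infection.

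\textbf{Step 1: the vulnerable set is large.} A vertex $v_t$ created at time $t$ has expected degree $d\sqrt{n/t}$ (Lemma \ref{lem:degreeexpectation}), which is at most $2d$ as soon as $t\geq n/4$, and the full degree distribution $S_k=(1+o(1))\frac{2d(d+1)}{k(k+1)(k+2)}$ established in the proof of Theorem \ref{thm:Securityproperties}(2) gives $\sum_{k=d}^{2d}S_k \to 3/4$ as $d$ grows. A Chernoff bound on the number of vertices of each degree (combined with the standard concentration of degrees in PA) then shows that the vulnerable set $U$ has size $(3/4-o(1))n$ almost surely, once $d\geq d_0$ is fixed large.

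\textbf{Step 2: a giant component inside $G[U]$.} I would show that $G[U]$ has a connected component $C$ with $|C|\geq n/2$. The engine is that for every newly arriving vertex $v_{t+1}$ with $t\geq n/4$, two events occur with constant probability independently of each other: (a) $v_{t+1}$ never accumulates more than $2d$ edges (so it stays vulnerable), and (b) at least one (in fact a constant expected fraction, if $d$ is large) of its $d$ PA-chosen neighbors also stays vulnerable forever. Coupling the edges of $G[U]$ that this produces with a Galton--Watson branching process whose expected offspring can be made larger than $1$ by taking $d\geq d_0$, one obtains a giant component of $G[U]$ of linear size, analogous to the Erd\H{o}s--R\'enyi emergence threshold. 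This is the main obstacle of the proof: the ``stays vulnerable forever'' event for a vertex is determined by all future steps of the PA process, and the preferential-attachment choice biases the neighbors of $v_{t+1}$ toward the high-degree, non-vulnerable vertices; one has to decouple these two sources of dependence carefully (e.g.\ by conditioning on the time-stratified volume and using a martingale concentration analogous to the one in Lemma \ref{lem:homodegree}) before the first-moment / branching argument goes through.

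\textbf{Step 3: expansion finishes the cascade.} Given $|C|\geq n/2$, suppose a source $v\in C$ is attacked. Since each $u\in C$ is vulnerable and $C$ is connected in $G[U]$, a breadth-first argument infects every vertex of $C$. For a non-vulnerable vertex $w$ with degree $d_w>2d$, the threshold requirement is $\lceil d_w/(2d)\rceil$ infected neighbors. The PA graph is a known $\Omega(1)$-conductance expander, so for each such $w$ the number of neighbors of $w$ lying in $V\setminus C$ is, by a Chernoff bound, concentrated around $|V\setminus C|\cdot d_w/n\leq d_w/2$; thus almost surely every $w$ has at least $d_w/2\geq d_w/(2d)$ neighbors already in $C$, and becomes infected in the next step. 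A union bound over the $(1/4+o(1))n$ non-vulnerable vertices completes the sweep, giving $\mathrm{inf}_G^\phi(\{v\})=V$ for every $v\in C$. Since $|C|/n\geq 1/2$, the probability over a uniformly random $v\in V$ is at least $1/2$, as claimed.
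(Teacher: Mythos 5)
Your overall decomposition (identify a large vulnerable set, exhibit a giant component inside it, then let expansion finish the cascade) matches the paper's plan, but two of the three steps do not actually go through as you have stated them, and in each case the paper does something genuinely different.

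In Step~2 you propose a Galton--Watson / branching-process coupling, and you yourself flag the central obstruction: the event ``$v_{t+1}$ stays vulnerable forever'' is measurable only with respect to the entire future of the process, and the preferential-attachment step biases $v_{t+1}$'s neighbors toward the high-degree (hence non-vulnerable) part of the graph. You say one ``has to decouple these two sources of dependence carefully,'' but you do not do it, and it is far from clear that it can be done in a way that yields supercritical offspring. The paper avoids this entirely with a supermartingale argument on a different quantity: assume for contradiction that no connected component of $G_W$ reaches size $n/4$, treat the worst case in which the first $\tfrac{5}{16}n$ vulnerable vertices form an independent set, and then show that as each subsequent vulnerable vertex arrives the \emph{number of connected components} of $G_W$ decreases in expectation by a constant (computed via $\tau_i>1/4$ and the bound $\rho\leq 8/9$ on the volume fraction of the largest component). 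The supermartingale concentration inequality then forces the number of components down to $1$, a contradiction. This argument sidesteps the need to reason about ``vulnerable forever'' at arrival time, because it works with the final vulnerable set and only tracks component counts.

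Step~3 contains a genuine error. You assert that, because PA has $\Omega(1)$ conductance, \emph{every} non-vulnerable vertex $w$ has at least $d_w/2$ of its neighbors inside $C$, so all of $V\setminus C$ is infected in a single round after $C$ is saturated, and you close with a union bound. Conductance is a global cut condition, not a per-vertex neighbor-distribution guarantee; it does not imply that each individual high-degree vertex has a balanced split of neighbors between $C$ and its complement, and the ``Chernoff bound'' you invoke presupposes an independence of $w$'s incident edges over $V$ that the PA model does not provide. Indeed one can have a high-conductance graph in which some high-degree vertices send essentially all their edges into $V\setminus C$. The paper's argument is deliberately weaker and sequential: from $E(\overline{S},S)/\mathrm{vol}(\overline{S})\geq 1/(6\sqrt d)$ it only extracts, \emph{by averaging}, the existence of a single vertex $v\in\overline S$ with at least a $1/(6\sqrt d)\geq\phi$ fraction of infected neighbors; that vertex is added to $S$ and the argument iterates until $\mathrm{vol}(S)\geq nd$, after which the full conductance bound of Lemma~\ref{lem:condforPA} takes over. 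This ``one vertex at a time'' averaging is exactly what conductance buys and no more. You should replace your simultaneous per-vertex concentration claim with this sequential averaging argument, and replace the informal branching sketch in Step~2 with either the paper's supermartingale-on-components argument or an equally rigorous substitute.
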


\begin{proof}
We estimate the degree of each vertex. Denote by $v_i$ the $i$-the
vertex in $G$.

By Lemma~\ref{lem:degreeexpectation}, for time step $\frac{n}{4}$,
the expected degree of each vertex created after time step
$\frac{n}{4}$ is at most $2d$. So the expected number of nodes whose
degrees are at most $2d$ is $\frac{3}{4}n$, which correspond to last
$\frac{3}{4}n$ nodes.

From now on, we assume that there are $\frac{3}{4}n$ nodes (not
necessarily the last ones) which have degree at most $2d$ in $G$.

(In fact, a small deficit around $\frac{3}{4}n$, for instance,
$(\frac{3}{4}-\epsilon)n$ for some small $\epsilon$, does not
influence our analysis at all.  This will happen almost surely.)

Let $W$ be a set of all nodes having degrees at most $2d$.
 Note that $W$ is exactly the
set of all vulnerable nodes in $G$. Let $G_W$ be the induced
subgraph of $W$ in $G$.

We will show that with probability $1-o(1)$, the largest connected
component in $G_W$ has size at least $n/2$.

To explain our ideas without being trapped by complicated
parameters, we first prove a weak version of the conclusion.

\begin{lemma} \label{lem:weakccsize}
The size of the largest connected component in $G_W$ is almost
surely at least $\frac{n}{4}$.
\end{lemma}

\begin{proof}
Suppose to the contrary that the lemma fails to hold. We will show
that with probability $1-o(1)$, the number of connected components
of $G_W$ is $1$. In this case, the size of the largest connected
component is almost surely larger than $\frac{n}{4}$, contradicting
the assumption.

In this proof below, a connected component means a connected
component of $G_W$.

Suppose that $\{v_1,v_2,\ldots,v_{\frac{3}{4}n}\}$ is the set $W$
listed by the natural time ordering of nodes to be created. For
$j=1,2,\cdots, \frac{3}{4}n$, let $t_j$ be the time step at which
$v_j$ is created.

Let $m_1=\frac{5}{16}n$. Let $W_1=\{v_1, v_2,\cdots, v_{m_1}\}$, and
$W_2=W\setminus W_1$. We use $G_{W_1}[t]$ to denote the graph
induced by $W_1$ at the end of time step $t$.

Suppose for the worst case, that $G_{W_1}[t_{{m_1}}]$ is an
independent set, i.e., there is no even one edge among nodes in
$W_1$ at the end of time step $t_{m_1}$.

For every integer $i\in[1,\frac{7}{16}n]$, consider the influence of
node $v_{m_1+i}$ on the number of connected components in the
current graph. Let $\tau_i$ be the probability that there is an edge
from $v_{m_1+i}$ to some node in $\{v_1,\ldots,v_{m_1+i-1}\}$.

By the construction of $G$, the volume of
$\{v_1,\ldots,v_{m_1+i-1}\}$ is at least $(\frac{5}{16} n +i-1)d $,
and the volume of the graph constructed at the end of time step
$t_{m_1+i}$ is at most $2(\frac{n}{4}+m_1+i)d=2(\frac{9}{16}n+i)d$.

Thus
$$\tau_i\geq \frac{(\frac{5}{16}n+i-1)d}{2(\frac{9}{16}n+i)d}>\frac{1}{4}.$$

Let $N_1$ be the current volume of the largest connected component,
and $N_2$ be the current volume of all the nodes in $W$. Then
$N_1\leq\frac{n}{4}\cdot 2d=\frac{nd}{2}$, $N_2\geq
N_1+(\frac{5}{16}n-\frac{n}{4}+i-1)d\geq N_1+\frac{nd}{16}$. Let
$\rho=\frac{N_1}{N_2}$ be the probability that an edge of
$v_{m_1+i}$ connecting to the current largest connected component.
$\rho$ is also the upper bound of the probability that an edge of
$v_{m_1+i}$ connecting to some predetermined connected component. So
$\rho\leq \frac{8}{9}$. Let $\Delta_i$ be the difference of the
numbers of connected components  of the graphs after and before the
appearance of $v_{m_1+i}$ and its $d$ edges. A positive $\Delta_i$
means this number increases and otherwise decreases. Let $p$ be the
probability of $\Delta_i<0$, $p_0$ be the probability of
$\Delta_i=0$ and $p_1$ be the probability of $\Delta_i>0$. Note that
$\Delta_i>0$ means all the $d$ edges of $v_{m_1+i}$ do not connect
to any node in current $W$, and then we have
$$p_1\leq(1-\tau_i)^d.$$
$\Delta_i=0$ means that there are $j$ ($1\leq j\leq d$) edges of
$v_{m_1+i}$ join a single connected component while others do not.
We have
\begin{eqnarray*}
p_0 &\leq& \sum\limits_{j=1}^d \left(\begin{array}{c} d \\ j
\end{array}\right) \tau_i^j(1-\tau_i)^{d-j}\rho^{j-1}\\
&\leq& \sum\limits_{j=1}^d \left(\begin{array}{c} d \\ j
\end{array}\right) \tau_i^j(1-\tau_i)^{d-j}(\frac{8}{9})^{j-1}\\
&=& \frac{9}{8}\left[\left[(1-\tau_i)+\frac{8}{9}\tau_i\right]^d-(1-\tau_i)^d\right]\\
&=& \frac{9}{8}\left[(1-\frac{1}{9}\tau_i)^d-(1-\tau_i)^d\right].
\end{eqnarray*}
Since $p+p_0+p_1=1$, the expectation of $\Delta_i$ satisfies
$$E(\Delta_i)\leq p_1-p=2p_1+p_0-1\leq \frac{9}{8}(1-\frac{1}{9}\tau_i)^d+\frac{7}{8}(1-\tau_i)^d-1.$$
Since $\tau_i>\frac{1}{4}$, there must be some constant $d'$ such
that for any integer $d\geq d'$, $E(\Delta_i)\leq -\frac{5}{6}$. On
this condition, the number of connected components in the end is
expected to be a negative number. To prove that the number reduces
to the minimum possible number $1$, we use the supermartingale
inequality (see \cite{CL06}, Theorem 2.32). Let $m_2=\frac{7}{16}n$.
We consider the totally reduced amount compared with the initial
number $m_1$ at step $i$ of the last $m_2$ steps as a random
variable $X_i$ ($0\leq i\leq m_2$). We compute the totally reduced
amount by summing up all the reduced numbers in the former steps.
Keep it in mind that all the discussion is under the assumption that
there is no connected component of size at least $\frac{n}{4}$ in
the end. So the totally reduced amount would exceed $m_1$. By
definition, $X_0=0$. At each step, the reduced number is expected to
be at least $5/6$. Let $Y_i=X_i-\frac{5}{6}i$. Then $Y_0=0$,
$E[Y_i|Y_{i-1}]\geq Y_{i-1}$, and so $Y_0,Y_1,\ldots,Y_{m_2}$ is a
supermartingale. To show that with high probability, $X_{m_2}$ is at
least $m_1-1$, we only have to show that with probability $o(1)$,
$Y_{m_2}$ is no more than
$m_1-2-\frac{5}{6}\cdot\frac{7}{16}n=-(\frac{5}{96}n+2)$. By the
definition of the PA model, we know that $Y_i-E[Y_i|Y_{i-1}]\leq d$
and ${\rm Var}[Y_i|Y_{i-1}]=E[(Y_i-E[Y_i|Y_{i-1}])^2|Y_{i-1}]\leq
d^2$. Thus
$$\Pr\left[Y_{m_2}\leq-(\frac{5}{96}n+2)\right]\leq \exp\left(-\frac{(5n/96+2)^2}{2(d^2 m_2+d (5n/96+2)/3)}\right)=\exp(-\Omega(n)).$$
This means that under our assumption, with probability
$1-\exp(-\Omega(n))$, the number of connected components in the end
reduces to $1$. Lemma~\ref{lem:weakccsize} follows.
\end{proof}

The idea of the proof of Lemma \ref{lem:weakccsize} is to assume
that there is no connection for the part of the first coming nodes
(the first $\frac{5}{16}n$ nodes, slightly larger than
$\frac{n}{4}$), and then show that the remaining $\frac{7}{16}n$
nodes (slightly larger than $\frac{5}{16}n$) combine together to
form a large connected component. By the proof, it is also valid to
choose $m_1=\frac{n}{8}+\delta n$ (slightly larger than
$\frac{n}{8}$) and $m_2=\frac{n}{8}+2\delta n$ (slightly larger than
$m_1$), where $\delta$ is a small constant. Then by a similar
argument, we can show that there must be some constant $d_0'$
(relating to $\delta$) such that for any integer $d\geq d_0'$, when
the first $m_1+m_2=\frac{n}{4}+3\delta n$ nodes come, with
probability $1-o(1)$, there is a connected component of size at
least $\frac{n}{8}$. A key observation here is that, the current
number of connected components is at most $\frac{n}{4}+3\delta
n-\frac{n}{8}+1=\frac{n}{8}+3\delta n+1$. So by using the next
$\frac{n}{8}+4\delta n$ nodes, we can prove that there must be some
constant $d_1'$ (relating to $\delta$) such that for any integer
$d\geq d_1'$, when the first $\frac{3}{8}n+7\delta n$ nodes come,
with probability $1-o(1)$, there is a connected component of size at
least $\frac{n}{4}$. So recursively, using the following
$\frac{n}{8}+8\delta n$ nodes, we have that there must be some
constant $d_2'$ (relating to $\delta$), such that for any integer
$d\geq d_2'$, when the first $\frac{n}{2}n+15\delta n$ nodes come,
with probability $1-o(1)$, there is a connected component of size at
least $\frac{3}{8}n$. At last, using the remaining
$\frac{n}{8}+16\delta n$ nodes, we have that there must be some
constant $d_3'$ (relating to $\delta$), such that for any integer
$d\geq d_3'$, when all the nodes in $W$ come, with probability
$1-o(1)$, there is a connected component of size at least
$\frac{n}{2}$. Choosing
$\delta=\frac{1}{8(1+2+4+8+16)}=\frac{1}{248}$ makes the above
analysis work. Setting $d'=\max\{d_0',d_1',d_2',d_3'\}$, we have the
following lemma.

\begin{lemma} \label{lem:strongccsize}
There is a constant $d'$ such that for any $d\geq d'$, with
probability $1-o(1)$, the size of the largest connected component in
$G_W$ is at least $\frac{n}{2}$.
\end{lemma}

Denote by $S$ the largest connected component in $G_W$. When we
randomly and uniformly choose an initially infected node in $G$,
once it falls in $S$ or its neighbors, then the whole $S$ (including
at least $n/2$ vulnerable nodes) will be infected. This event
happens with probability at least $1/2$. So next, we only have to
show that based on the infected $S$, the cascading procedure will
sweep over the whole graph $G$, which completes the proof of Theorem
\ref{thm:weakversion}.

Mihail, Papadimitriou and Saberi \cite{MPS03} have shown that the
graph constructed by the PA model almost surely has a constant
conductance depending on $d$. Formally, they showed that when $d\geq
2$, for any positive constant $c<2(d-1)-1$, there exists an
$\alpha=\min\{\frac{d-1}{2}-\frac{c+1}{4},\frac{1}{5},\frac{(d-1)\ln2-(2\ln5)/5}{2(\ln
d+\ln2+1)}\}$ such that $\Pr[\Phi
(G)\leq\frac{\alpha}{\alpha+d}]=o(n^{-c})$ (see \cite{MPS03},
Theorem 1), where we use $\Phi (G)$ to denote the conductance of
$G$. By their proof, this result can be easily modified to the
following lemma.

\begin{lemma} \label{lem:condforPA}
There exists a constant $d''$ such that for any integer $d\geq d''$
and any $\alpha<\min\{\frac{d-1}{2}-\frac{1}{4},\sqrt{d}\}$, we have
$$\Pr\limits_{G\in_R \mathcal{P}(n,d)}[\Phi (G)\leq\frac{\alpha}{d+\alpha}]=o(1).$$
\end{lemma}

\begin{proof}
The proof follows from that of \cite{MPS03} except for different
choices of the parameters. We introduce the ideas here, and refer to
\cite{MPS03} for details. Let $\phi(G)$ be the edge expansion of
graph $G$ which is defined as $\phi(G)=\min_{S\subseteq V}
\frac{E(S,\overline{S})}{\min\{|S|,|\overline{S}|\}}$. In the PA
model, $\phi(G)$ can be used to bound the conductance,
$\Phi(G)\geq\frac{\phi(G)}{d+\phi(G)}$. So we only have to prove
$\Pr[\phi(G)\leq\alpha]=o(1)$. Let $2\leq k\leq n/2$. Consider all
the subset of nodes of size at most $k$, and then we can conclude
that
$$\Pr[\phi(G)\leq\alpha]\leq\sum_{k=2}^{n/2} \alpha k
(\frac{ed}{\alpha})^{2\alpha k} (\frac{k}{n})^{(d-1-2\alpha)k}.$$
For the $O(n)$ terms in this summation, if we upper bound the
leading term by $o(n^{-1})$, then the sum is upper bounded by
$o(1)$. We study the function $f(k)=\alpha k
(\frac{ed}{\alpha})^{2\alpha k} (\frac{k}{n})^{(d-1-2\alpha)k}$ for
$2\leq k\leq n/2$. It can be shown that there is a real number $x$
in the interval $[2,n/2]$ such that $f(k)$ monotonically decreases
in $[2,x]$ and monotonically increases in $[x,n/2]$. Thus the
leading term is either $f(2)$ or $f(n/2)$. If
$\alpha<\frac{d-1}{2}-\frac{1}{4}$,
$f(2)=2\alpha(\frac{ed}{\alpha})^{4\alpha}(\frac{2}{n})^{2(d-1-2\alpha)}=o(n^{-1})$.
On the other hand, since $f(n/2)=\frac{\alpha
n}{2}[(\frac{ed}{\alpha})^{2\alpha}(\frac{1}{2})^{d-1-2\alpha}]^{n/2}$,
there must be some constant $d''$ such that for any integer $d\geq
d''$, if $\alpha<\sqrt{d}$, then the product in the square bracket
is less than $1$. So $f(n/2)$ decreases exponentially as $n$
increases. This completes the proof of Lemma \ref{lem:condforPA}.
\end{proof}

Lemma \ref{lem:condforPA} guarantees that with probability $1-o(1)$,
the conductance of $G$ is at least $\frac{1}{2\sqrt{d}}$. On this
condition, we show that the cascading starting from $S$ will spread
all over the whole graph. Since every node in $G$ has degree at
least $d$ and ${\rm vol}(S)$ is $2nd$, we have ${\rm
vol}(S)\geq\frac{1}{2}nd$ and so ${\rm vol}(\overline{S})=2nd-{\rm
vol}(S)\leq\frac{3}{2}nd$. If ${\rm vol}(S)\leq nd$, that is, ${\rm
vol}(S)$ is no more than half of ${\rm vol}(G)$, then
$E(\overline{S},S)\geq {\rm vol}(S)\cdot\Phi
(S)\geq\frac{n\sqrt{d}}{4}$, and $\frac{E(\overline{S},S)}{{\rm
vol}(\overline{S})}\geq\frac{1}{6\sqrt{d}}$. For each node in
$\overline{S}$, let $E_S(v)$ be the number of nodes in $S$ that are
incident to some node in $S$ and $d_v$ be the degree of $v$. Then we
have
$$\frac{E(\overline{S},S)}{{\rm
vol}(\overline{S})}=\frac{\sum\limits_{v\in \overline{S}}
E_S(v)}{\sum\limits_{v\in \overline{S}} d_v}
\geq\frac{1}{6\sqrt{d}}.$$

By averaging, there must be some node $v\in\overline{S}$ whose at
least $\frac{1}{6\sqrt{d}}$ fraction of neighbors are infected. When
$d\geq 9$, this fraction is at least $\phi=\frac{1}{2d}$, and $v$ is
also infected. Add $v$ into $S$ and continue until ${\rm vol}(S)\geq
nd$ and ${\rm vol}(\overline{S})\leq nd$. Now
$\frac{E(\overline{S},S)}{{\rm vol}(\overline{S})}\geq \Phi
(G)\geq\frac{1}{2\sqrt{d}}$, which is larger than $\phi$. Thus by
averaging again, we know that there is a node $v\in\overline{S}$
being infected. Recursively, the whole graph $G$ will be infected.
The proof of Theorem \ref{thm:weakversion} is completed by choosing
$d_0=\max\{d',d'',9\}$.
\end{proof}

\subsection{Global Cascading Theorem of PA}
\label{subsec:strongversion}

In this subsection, we prove Theorem \ref{thm:cascadeonPA}.

\begin{proof} (Proof of Theorem \ref{thm:cascadeonPA})
The proof of Theorem \ref{thm:cascadeonPA} follows the proof of
Theorem \ref{thm:weakversion} step by step with tighter parameters.
By Lemma \ref{lem:degreeexpectation}, we suppose that there are
$(1-\frac{1}{(1+\varepsilon)^2})n$ nodes having degree at most
$(1+\varepsilon)d$. Denote by $W$ the set of them, and $W$ are
exactly the set of vulnerable nodes. Let
$p=\frac{2}{3}(1-\frac{1}{(1+\varepsilon)^2})$. By the proof of
Lemma \ref{lem:strongccsize}, we know that there exists an integer
$d'$ which only relates to $\varepsilon$ such that for any integer
$d\geq d'$, with probability $1-o(1)$, there exists a connected
component $S$ of size at least $pn$ in $G_W$. If we uniformly pick a
random initial node, then with probability at least $p$, it falls in
$S$ or its neighbors, which makes the whole $S$ infected. Then we
only have to show that the infection based on $S$ will spread all
over the whole graph $G$.

Note that the volume of $S$ is at least $pnd$. So we can choose a
$d''$ (only relating to $\varepsilon$), such that for any integer
$d\geq d''$, by averaging, there exists a node $v\in\overline{S}$
whose at least $\phi$ fraction of neighbors are infected. Then add
$v$ to $S$ and continue the procedure until the volume of $S$
exceeds $nd$. By Lemma \ref{lem:condforPA}, the current conductance
of $\overline{S}$ is at least $\frac{1}{\sqrt{d}+1}$, also larger
than $\phi$. Thus by averaging again, there is a node
$v\in\overline{S}$ being infected. Recursively, the whole graph $G$
will be infected. We choose $d_0=\max\{d',d''\}$, and complete the
proof of Theorem \ref{thm:cascadeonPA}.
\end{proof}

\subsection{Robustness Theorem of Graphs} \label{subsec:failurecondition}

In this subsection, we prove Theorem \ref{thm:negative} which holds
for all simple graphs.

\begin{proof} (Proof of Theorem \ref{thm:negative})
First, we bound by $o(n^{-1})$ the probability that a single node is
infected by the initial set $S$ of size $k=o(n^{\frac{l}{l+1}})$.
Then Theorem \ref{thm:negative} follows immediately by the union
bound.

For a node $v\in V\setminus S$, denote by $D$ the degree of $v$.
Then $D\geq d$. We can suppose that $D=O(k)$, because otherwise,
since $\phi$ is a constant, $v$ cannot be infected even if all the
nodes in $S$ are neighbors of $v$. Let $t=\lfloor \frac{l}{d}\cdot
D+1\rfloor$. Since $G$ is a simple graph, on the condition that
$v\in V\setminus S$, $v$ is infected by $S$ with probability

$
 \sum\limits_{i=t}^D \frac{\left(\begin{array}{c} D \\ i
\end{array}\right) \cdot \left(\begin{array}{c}
n-1-D \\ k-i
\end{array}\right)}{\left(\begin{array}{c}
n-1 \\ k
\end{array}\right)} \\
= \sum\limits_{i=t}^D \left(\begin{array}{c} D \\ i
\end{array}\right) \cdot \frac{(n-1-D)!}{(k-i)!(n-1-k-(D-i))!}
\cdot \frac{k!(n-1-k)!}{(n-1)!}\\
\leq \sum\limits_{i=t}^D \left(\frac{De}{i}\right)^i \cdot \left(\frac{k}{n-D}\right)^i\\
\leq \sum\limits_{i=t}^D \left(\frac{dek}{l(n-D)}\right)^i, $

\noindent where $e=2.718\cdots$ is the natural logarithm. The first
``$\leq$" comes from the
inequality $\left(\begin{array}{c} D \\
i
\end{array}\right)\leq\left(\frac{De}{i}\right)^i$ and the second ``$\leq$" comes from
$i\geq\frac{Dl}{d}$. Note that for each term $i$, $i\geq l+1$. Since
$k=o(n^{\frac{l}{l+1}})$, this sum is at most $o(n^{-1})$. This
completes the proof of Theorem \ref{thm:negative}.
\end{proof}

\section{Overlapping Communities Undermine Security of
Networks}\label{sec:over}

As we have seen that the topological, probabilistic and
combinatorial properties in Theorems \ref{thm:Securityproperties},
\ref{thm:length}, and \ref{thm:injury} guarantee the security
theorems. In these proofs, the following properties are essential:

\begin{enumerate}
\item [(i)] The small community phenomenon.
\item [(ii)] Local heterogeneity

That is, the seed node of a community plays a central role in both
internal and external links of the community.
\item [(iii)] Randomness and uniformity among the global edges.
\item [(iv)] The existence of the infection priority tree of height
$O(\log n)$.

\end{enumerate}

We further analyze the corresponding roles of properties (i) - (iv)
above. The role of (i) is clear, since otherwise, it would be
possible a single targeted node in a large community may infect the
whole community which is large. (ii) and (iii) ensure that almost
all communities are strong. (iv) ensures that cascading among strong
communities has a path of short length.

Except for (i) - (iv) above, we notice that the small communities in
the security model are disjoint. Therefore there is no overlapping
community phenomenon in networks of the security model.

For nontrivial networks constructed from models such as the ER and
the PA models, we know that there is no even a community structure
in the networks. However overlapping communities seem universal in
real networks. Intuitively, overlapping communities undermine
security of the networks. The reason is that if a node, $v$ say, has
two communities, $C_1$ and $C_2$ say, then attack on $v$ is in fact
attacks on both the communities $C_1$ and $C_2$. We show that this
intuition is correct.

To verify the conclusion, we modify the security model as follows.

\begin{definition}\label{def:security-over} (Overlapping model) Given homophyly exponent
$a$, $d_1\geq 2$, $d_2\geq 2$ and $d=d_1+d_2$. We construct a
network as follows.

\begin{enumerate}
\item [(1)] Let $G_2$ be an initial graph such that each node of
$G_2$ is called a seed node, and is associated with a distinct
color.

For $i>2$, suppose that $G_{i-1}$ has been defined., and let
$p_i=1/(\log i)^a$. We define $G_i$ as follows.

\item [(2)] Create a new node, $v$.

\item [(3)] With probability $p_i$, $v$ chooses a new color, $c$  say, in
which case:

\begin{enumerate}
\item We say that $v$ is a seed node,
\item Create an edge $(v,u)$, where $u$ is chosen with probability
proportional to the degrees of nodes in $G_{i-1}$,
\item Create $d_1-1$ edges $(v,u_j)$ for $j=1,2,\cdots, d_1-1$,
where each $u_j$ is chosen randomly and uniformly among all seed
nodes in $G_{i-1}$, and

\item Choose randomly and uniformly an old color, $c'$ say,
\item We say that $v$ has two colors, both $c$ and $c'$, and

\item Create $d_2$ edges $(v,w_k)$ for $k=1,2,\cdots, d_2$, where
each $w_k$ is chosen with probability proportional to the degrees of
all nodes sharing the old color $c'$ in $G_{i-1}$.

\end{enumerate}

\item [(4)] Otherwise, then $v$ chooses an old color, in which case,
then

\begin{enumerate}

\item Let $c$ be an old color chosen randomly and uniformly among
all colors appeared in $G_{i-1}$,

\item Let $c$ be the color of $v$, and
 \item Create $d$ edges $(v,x_l)$ for $l=1,2,\cdots, d$, where each $x_l$
is chosen with probability proportional to the degrees among all
nodes sharing color $c$.

\end{enumerate}

\end{enumerate}

\end{definition}

Suppose that $G$ is a network constructed from
Definition~\ref{def:security-over}. By definition, it is easy to see
that $G$ has the small diameter property. In Figure
\ref{fig:degree_distribution_N=10000_a=1.5_d=10}, we compare the
degree distributions of networks of the security model and the
overlapping model. The networks have $n=10,000$ nodes, homophyly
exponent $a=1.5$, $d=10$ and $d_1=d_2=5$ for the overlapping model.
The experiment shows that both the networks follow the same power
law.

\begin{figure}[htbp]
                 \centering
                 \includegraphics[width=3.2in]{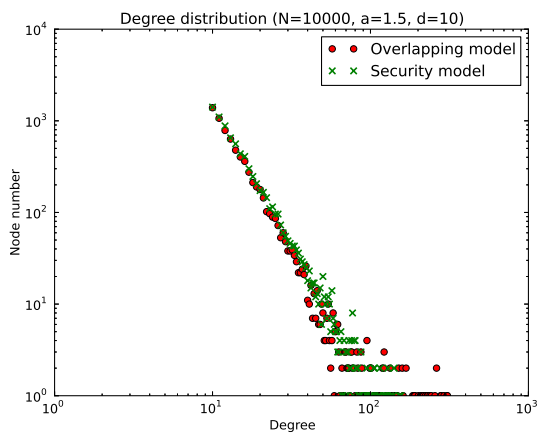}
                 \caption{Power law of networks of the security model and the overlapping model}
                 \label{fig:degree_distribution_N=10000_a=1.5_d=10}

\end{figure}

Let $G$ be a network constructed from the overlapping model. We
define a community to be the induced subgraph of a homochromatic
set. Clearly a community is connected. In Figure
\ref{fig:Conductance_N=10000_a=1.5_d=10}, we compare the
distribution of conductances of a network of the security model and
a network of the overlapping model with the same parameters as
above.

\begin{figure}[htbp]
                 \centering
                 \includegraphics[width=3.2in]{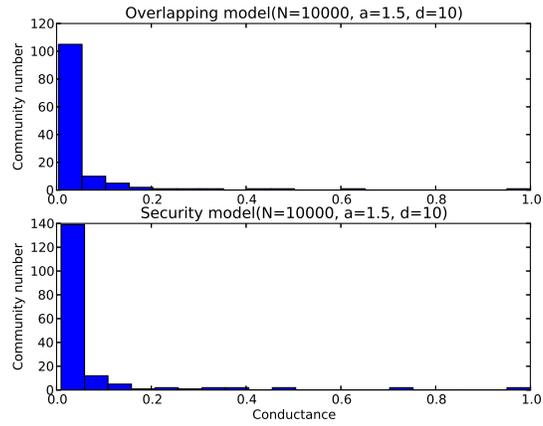}
                 \caption{Distributions of conductances of communities for networks of both the security model and the overlapping model}
                 \label{fig:Conductance_N=10000_a=1.5_d=10}

\end{figure}

From Figure \ref{fig:Conductance_N=10000_a=1.5_d=10}, we know that
distributions of conductances of all the communities are similar to
each other, and almost all are small. This shows that networks
constructed from the overlapping model are rich in small communities
too.

The only difference between $G$ and networks constructed from the
security model is that for each seed node of $G$, $v$ say, $v$
contains in $2$ communities. Our intuition is that overlapping
communities undermine security of the networks.

In Figure~\ref{fig:Security_plot_N=10000_d=10_a=1.5}, we compare the
security of networks constructed from both the security model and
the overlapping model for $n=10,000$, $a=1.5$, $d=10$, and
$d_1=d_2=5$.

\begin{figure}[htbp]
                 \centering
                 \includegraphics[width=3.2in]{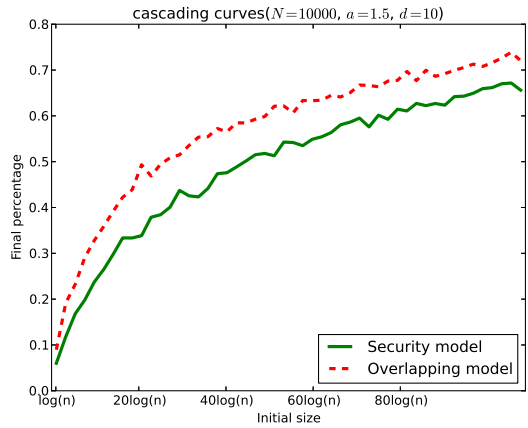}
                 \caption{Security curves}
                 \label{fig:Security_plot_N=10000_d=10_a=1.5}

\end{figure}

Experiments in Figure \ref{fig:Security_plot_N=10000_d=10_a=1.5}
show that the network constructed from the security model is more
secure than that of the overlapping model for attacks of all
small-scales. This verifies that overlapping communities do
undermine security of networks.

By this reason, we give up the phenomenon of overlapping communities
in our elementary security theory of networks.

However it is still an open issue to fully understand the
undermining of overlapping communities in security of networks.
Solving this problem may provide a new way to enhance security of
networks by distinguishing the different roles of a node in
different communities. It is not surprising we may need a way to
deal with the undermining effect of overlapping communities on
security of networks. In general, it is an interesting open question
to fully understand the roles of overlapping communities, since it
seems universal in many real networks. Sometimes, overlapping
communities are bad, for instance, every corrupt official confuses
his/her public and private roles.

\section{\bf Conclusions and future directions} \label{sec:conc}

 In this paper, we
proposed definitions of security and robustness of networks to
highlight the ability of complex networks to resist global cascading
failures caused by a small number of deliberate attacks and random
errors, respectively. We use the threshold cascading failure models
to simulate information spreading in networks.

We introduced a security model of networks such that networks
constructed from the model are provably secure under both uniform
and random threshold cascading failure models, and simultaneously
follow a power law, and satisfy the small world phenomenon with a
remarkable $O(\log n)$ time algorithm to find a short path between
arbitrarily given two nodes. This shows that networks constructed
from the security model are secure, follow the natural property of
power law, and allow a navigation of time complex $O(\log n)$.

The security model shows that dynamic and scale-free networks can be
secure for which homophyly, randomness and preferential attachment
are the underlying mechanisms, providing a principle for
investigating the security of networks theoretically and generally.

Our security theorems explore some new discoveries between the roles
of structures and of thresholds in the security of networks. The
proofs of the security theorems provide a general framework to
analyze both theoretically and practically security of networks.

It seems surprising that networks of the security model satisfy
simultaneously all the properties stated in the three theorems,
i.e., Theorems \ref{thm:Securityproperties}, \ref{thm:length} and
\ref{thm:injury}, and that a merging of the principles in Theorems
\ref{thm:Securityproperties}, \ref{thm:length} and \ref{thm:injury}
gives rise to the proofs of the security theorems, Theorems
\ref{thm:cascadeonSecurity} and \ref{thm:rancascadeonSecurity}. This
is a mathematical creation and mathematical beauty with immediate
and far-reaching implications in network communication and network
science.

On the other hand, the mechanisms of homophyly, randomness and
preferential attachment of the security model are natural selections
in evolutions of complex systems in both nature and society. This
may explain the reason why networks of the security model have the
remarkable properties here. This may also imply that the security
model reflects some of the natural laws and social principles. This
poses some fundamental questions such as: Does nature compute hard
problems? Does nature evolve safely? Does society organize securely
and stably? A possible approach to answering these questions could
be to explore the physical, biological and social science
understandings of the security model.

As usual, many real networks may not evolve as our security model.
This is not surprised. There are always some differences between
networks constructed from models and real networks. For instance, i)
nontrivial networks of the PA model fail to have a community
structure, but almost all real networks have, ii) nontrivial
networks of the ER model fail to have a community structure or power
law distribution, but almost all real networks have, and iii)
networks of the small world model fail to have a power law, but
almost all real networks have.

However, in our case, if real networks evolve in a way far from our
security model, then it may imply that the real networks are highly
unlikely to be secure, or worse, not even to be robust against a few
random errors. This situation means that we do really and urgently
need a theory to guarantee security of the networks in which we are
living.

The mechanisms of the security model are natural selections in
organizations of networks in both nature and society. However, the
construction of networks in Definition \ref{def:Securitymodel} is
carefully organized. A reader may wonder whether or not there is a
cheaper construction of the networks with less ingredients than that
in our definition. This could be possible, however, by our
understanding, security of networks cannot be achieved freely,
either in theory or in engineering.

A reader may wonder Definition \ref{def:Securitymodel} is a simple
modification of the PA model, the two models should give similar
networks. Why are the networks so different? It is true. However,
there are two more new ideas introduced in the security model: the
first is that every node has its own characteristic at the very
beginning of its birth, that is, either remarkable (with new color)
or normal (with old color), and the second is that two more natural
mechanisms are introduced to remarkable nodes and normal nodes
respectively. More importantly, the new ideas and the new mechanisms
introduce ordering and combinatoric principles in the construction
of networks. This perhaps explains that combinatorics plays a
remarkable role in networks, and that purely probabilistic and
single mechanism fails to capture complexity in nature and society.

By Theorem \ref{thm:cascadeonSecurity}, for $\phi=O(\frac{1}{\log^b
n})\ll 1/d$, networks generated from the security model is
$\phi$-robust and $\phi$-secure. For the same constant $d$, the
networks generated from both the PA model and the security model
have the same average degree. By Theorem
\ref{thm:cascadeonSecurity}, the security threshold for the security
model can be arbitrarily small as $n$ increases, while by Theorems
\ref{cor:PApositive} and \ref{cor:PAnegative}, the robustness
threshold for the PA model can only be the constant $1/d$. These
theorems indicate that the structure of a network is key to the
robustness and security although power law and small world
properties exist in both models. Neither of these two properties is
an obstacle to network robustness and security, while the small
community phenomenon and connection patterns among communities play
an essential role. Consequently, the security model provides an
algorithm to construct dynamically networks which are secure against
any attacks of small sizes under both uniform and random threshold
cascading failure models, and which satisfy all the useful
properties of usual networks.

Our results start a theoretical approach to network security.
However there is a huge number of important issues open, for which
we list some of them:

\begin{enumerate}

\item The role of homophyly exponent

We notice that the homophyly exponent $a$ in
Theorems~\ref{thm:Securityproperties}, ~\ref{thm:cascadeonSecurity}
and ~\ref{thm:rancascadeonSecurity} is greater than $1, 4$ and $6$
respectively, showing some differences among the fundamental
theorem, the uniform threshold security theorem and the random
threshold security theorem. The assumptions of $a>4$ and $a>6$ are
essentially used in the proofs of Theorems
~\ref{thm:cascadeonSecurity} and ~\ref{thm:rancascadeonSecurity}
respectively. By Theorem \ref{thm:length}, it seems necessary for
$a$ to be large to make sure that almost all communities are strong.
However for large $a$, the sizes of communities are also large, so
that attack on a single node in a community may infect all nodes of
the large community. Of course, for theoretical results, we only
need to prove the theorems for all sufficiently large $n$, in which
case, large $a$ is not a problem. In practice, the sizes of networks
are limited, in which case, it is necessary to choose appropriate
$a$ to make a balance to achieve the best possible security.
Fortunately we have shown experimentally in~\cite{LZPL2013a} that
even for just $a>1$, for small $n$, networks of the security model
are much more secure than that of both the ER and PA models under
both random and uniform threshold cascading failure models. This
poses a question to theoretically study the security theorems for
just $a>1$, which will be more helpful for practical applications.
Answering this question is not going to be easy, which calls for new
analysis or new ideas.

In our proofs, the average number of edges $d$ is assumed to be a
constant. This assumption has no effect on theoretical results for
all sufficiently large $n$. However, for fixed number $n$, the value
$d$ plays a role. Usually the larger is $d$, the less secure is the
network. This is reasonable, because, the larger is $d$, the denser
is the network. However this problem is interested in only practice.

In practice, many real networks may not be secure for which there
are too many reasons. However, security may have only one reason.
Our principles and theorems here provide a chance for us to examine
the reasons why a given real network is insecure. Once we know the
reasons of insecurity of a network, we may have ways to secure the
network.

\item Security vs robustness

Our theorems show that the security model in
Definition~\ref{def:Securitymodel} is secure, and that the
preferential attachment model in~\cite{Bar1999} is non-robust. It
would be interesting to find a model of networks (dynamic, and with
power law and small diameter property etc) that it is robust, but
insecure. The significance of answering this question is to fully
understand the robustness and the security of networks.

\item Criterions for security

Our security model provides a principle for security. However, it is
open to define criterions for security of a given network. This
poses fundamental open questions such as: What are the theoretical
criterions to measure quantitatively the security of real networks?
What are the best possible algorithms to compute security indices of
real networks?

\item Enhancing security

A new fundamental question closely related to security applications
is to enhance security of networks. In practice, we are given a
network, $G$ say, and asked to make a minimal modification of $G$ to
generate a network, $H$ say, such that $H$ keeps all the useful
properties of $G$ and such that $H$ is much more secure than $G$.
Our security model suggests some strategies for enhancing the
security of networks. However, theoretical study of this issue is
completely open.

\item Influence of structures

 To consider the influence of a
structure, $G_S$ say (the induced subgraph of $S$), instead of just
the set $S$ of nodes.

\item Fully understand the roles of mechanisms, structures in the
security of networks

 \item Security and game

To introduce games in security strategies of networks.

\item Security and diffusion models

To consider a variety of diffusion models according to different
applications, for example, the independent cascading failure model
in the context of marketing~\cite{GLM01a},~\cite{GLM01b}.

\item Security of weighted networks

 To study the security of weighted and directed versions of
networks to better capture new phenomena in the globalizing economic
networks etc.

\item Robustness of networks

Our results in Theorems~ \ref{cor:PApositive} and
~\ref{cor:PAnegative} have implications in applications due to the
fact that most real networks heavily depend on the preferential
attachment scheme without guaranteeing the large threshold for all
the vertices. This would imply that most real networks may not be
even robust against random errors (or random attacks). And more
importantly, the gap between the robustness threshold and
non-robustness threshold is small, it could be very easy for a
network to be non-robust. This means that robustness of networks is
not an issue we can take for granted, and that global failure of
real networks could be simply caused by random errors, instead of
deliberate attacks. For this reason, robustness needs to be studied
separately.

\end{enumerate}

Finally we emphasize that our theory is to investigate the roles of
structures, and the tradeoff between the role of structures and the
role of thresholds in the security of networks. In engineering,
security could be achieved by lifting the thresholds for all nodes,
without considering the roles of structures of networks. A
relatively long term challenge is to build a bridge between theory
and engineering of security of networks. In practice, one more tough
issue could be to distinguish positive and negative contents in the
cascading procedure, which is already not purely a scientific
problem.

\bibliographystyle{plain}
\bibliography{acmsmall-security-2013-bibfile}

\end{document}